\documentclass[12pt]{article}
\usepackage{amsmath}
\usepackage{graphicx,psfrag,epsf}
\usepackage{enumerate}
\usepackage{natbib}
\usepackage{url} 

\newcommand{\blind}{1}

\addtolength{\oddsidemargin}{-.5in}%
\addtolength{\evensidemargin}{-.5in}%
\addtolength{\textwidth}{1in}%
\addtolength{\textheight}{-.3in}%
\addtolength{\topmargin}{-.8in}%

\usepackage{amssymb}
\usepackage{dsfont}
\usepackage{subcaption}
\usepackage{amsthm}

\usepackage{thmtools,thm-restate}

\newtheorem*{remark}{Remark}

\newcommand{\T}{\intercal}

\newtheorem{theorem}{Theorem}[section]
\newtheorem{proposition}[theorem]{Proposition}


\begin{document}

\def\spacingset#1{\renewcommand{\baselinestretch}%
{#1}\small\normalsize} \spacingset{1}


\if1\blind
{
  \title{\bf Variable selection using pseudo-variables}
  \author{Wenhao Hu, Eric Laber, Leonard Stefanski \\
    Department of Statistics, North Carolina State University}
  \maketitle
} \fi

\if0\blind
{
  \bigskip
  \bigskip
  \bigskip
  \begin{center}
    {\LARGE\bf Variable selection using pseudo-variables}
\end{center}
  \medskip
} \fi

\bigskip
\begin{abstract}
Penalized regression has become a standard tool for model building
across a wide range of application domains. Common practice is to tune 
the amount of penalization to tradeoff bias and variance or to optimize
some other measure of performance of the estimated model. An advantage
of such automated model-building procedures is that their operating
characteristics are well-defined, i.e., completely data-driven, 
and thereby they can be systematically studied. However, in many
applications it is desirable to incorporate domain knowledge into 
the model building process; one way to do this is to characterize each model along the solution path of a penalized regression estimator 
in terms of an operating characteristic that is meaningful within a 
domain context and then to allow domain experts to choose from among 
these models using these operating characteristics as well as other factors not available to the estimation algorithm. We derive an 
estimator of the false selection rate for each model along the solution path using a novel variable addition method. The proposed estimator applies to both fixed and random designs and allows for $p \gg n$. 
The proposed estimator can be used to estimate a model with a
pre-specified false selection rate or can be overlaid on the 
solution path to facilitate interactive model exploration. We 
characterize the asymptotic behavior of the proposed estimator 
in the case of a linear model under a fixed design; however, 
simulation experiments show that the proposed estimator provides consistently more accurate estimates of the false selection rate 
than competing methods across a wide range of models.

\
\end{abstract}

\noindent%
{\it Keywords:}  Cox regression; False selection rate; Interactive variable selection; Lasso;
Linear Regression; Logistic regression.
\vfill

\newpage
\spacingset{1.45} 

\section{Introduction}
\label{sec:intro}
Penalized regression is now a primary tool for model building across a
wide range of application domains.  The operating characteristics of
penalized regression estimators can depend critically on tuning
parameters which govern the amount of penalization.  Accordingly,
there is an extensive literature on tuning parameter selection
including information-based criteria \citep{chen2008extended,
  wang2009shrinkage, zhang2010regularization, fan2013tuning,
  hui2015tuning}, resampling methods \citep{hall2009bootstrap,
  meinshausen2010stability, feng2013consistent, sun2013consistent,
  shah2013variable, sabourin2015permutation}, and variable addition
methods \citep{wu2007controlling, barber2015controlling,
  barber2016knockoff}.  However, these methods are typically used to
facilitate black-box estimation wherein model selection and fitting
are completely automated, i.e., data-driven, so as to produce a single
estimated model.  Complete automation is desirable in some contexts,
e.g., benchmarking or online estimation and prediction, and some level
of automation in model-building is unavoidable except in very small
problems.  However, it is often desirable to incorporate
domain knowledge into the model building process; one way to do this
is to characterize each candidate model along the solution path of a
penalized regression estimator in terms of its operating
characteristics and then to use these operating characteristics to
choose among candidate models.

We derive an estimator of the false selection rate for each model along
the solution path using a novel variable addition method.
The proposed estimator applies
to both fixed and random designs and allows for
$p \gg n$. The proposed estimator can be used to estimate a model
with a pre-specified false selection rate or can be overlaid on the solution
path to facilitate interactive model exploration.
Figure \ref{fig:prostate_inter} shows an example of such a solution
path using data from a study on prostate cancer \citep{stamey1989prostate};
this figure is a screen capture from the software provided in the Supplemental Materials
that allows the analyst to mouse-over any point on the solution path
and examine the estimated coefficient values as well as the estimated
false selection rate.  In this example, the selected point on the solution
path corresponds to a model with three selected variables,
log cancer volume (lcavol); log weight (lweight); and seminal vesicale
invasion (svi).   The estimated false selection rate corresponding to
this model is 0.10
(additional details are provided in Section \ref{sec:real}.)

\begin{figure}
    \centering
    \includegraphics[scale =0.3]{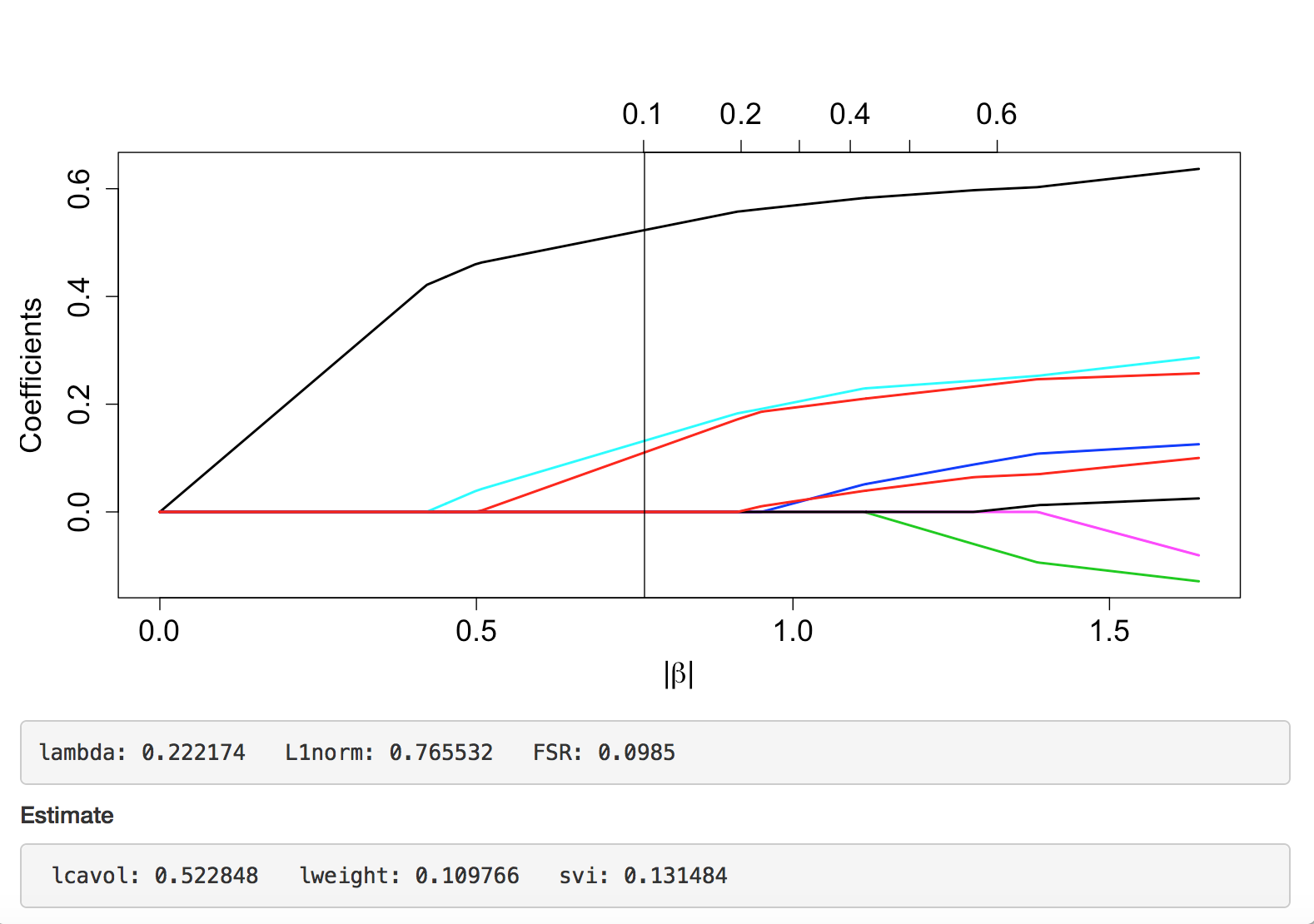}
    \caption{Lasso solution path for prostate cancer data.
     FSR and coefficient estimates are designed to be shown interactively.}
    \label{fig:prostate_inter}
\end{figure}

The proposed estimator of the false selection rate depends on the
generation of pseudo-variables that
are conditionally independent of the response
given the important  variables in the model.  As the true important
variables are unknown in practice, our estimator consists of three steps: (i) initial
variable screening to estimate the set of important variables;
(ii) generation of pseudo-variables so that
the covariance structure between the pseudo-variables and those
selected in the screening step mimics the covariance structure between the
not-selected and selected variables in the screening step; and (iii)
fitting the penalized estimator and using the proportion of selected
pseudo-variables to construct an estimator of the false selection
rate.
The proposed methodology is an example of a noise-variable or
knock-off variable method.  Such methods have been applied
to control the false selection rate in forward selection
\citep{wu2007controlling} and for the Lasso
\citep{barber2015controlling, barber2016knockoff}. A primary contribution of this
work is an estimator of the false selection rates for a sequence of
tuning parameter values
$\lambda_{(1)}, \lambda_{(2)}, \ldots, \lambda_{(m)}$ along the solution path
that applies when $p \gg n$.
When the proposed method is used to tune the amount of
penalization so as to achieve a target false selection rate, it provides
better empirical
performance than alternatives in simulation experiments.
Our theoretical and methodological
developments focus on a linear model
estimated using the Lasso \citep{tibshirani1996regression} under a
fixed design; however,
simulation experiments illustrate broader applicability.
To facilitate the interactive model building, we have
implemented the proposed methods in an R package and
a shiny web application both of which are contained in the
Supplemental Materials.

In Section \ref{sec:meth}, we establish notation, describe the proposed
estimator, and state some of its theoretical properties. In Section
\ref{sec:sim}, we demonstrate the finite-sample performance of the proposed
method in a suite of simulation experiments. In Section \ref{sec:real}, we
illustrate application of the proposed method using the data from prostate
cancer study and leukemia cancer study. Concluding remarks are made in Section
\ref{sec:conc}.

\section{Methods}
\label{sec:meth}
\subsection{Setup and notation}
We consider data from a linear model under a fixed design. The observed data
are $\left\lbrace (X_i, Y_i)\right\rbrace_{i=1}^{n}$ and it is assumed that
$Y_i = X_i^{\T}\beta_0 + \epsilon_i$, where $\epsilon_1,\ldots, \epsilon_n
\buildrel{iid}\over{\sim} \mathrm{Normal}(0,\sigma^2)$, and $\beta_0 =
(\beta_{0,1},\ldots, \beta_{0,p})^{\T} \in \mathbb{R}^p$. Define $\mathbb{X} =
(X_1,\ldots, X_n)^{\T} \in \mathbb{R}^{n\times p}$
 and $\mathbb{Y} = (Y_1,\ldots, Y_n)^{\T}$.  Given tuning parameter
$\lambda > 0$, the Lasso estimator of $\beta_0$ is
\begin{equation*}
\widehat{\beta}_{n}(\lambda; \mathbb{Y}, \mathbb{X}) =
\arg\min_{\beta \in\mathbb{R}^p}\left\lbrace
\frac{1}{2n}||\mathbb{Y} - \mathbb{X}\beta||^2 + \lambda
\sum_{j=1}^{p}|\beta_j|
\right\rbrace.
\end{equation*}
Define $A_{0} = \left\lbrace j\,:\, \beta_{0,j} \ne 0\right\rbrace$ to
be the index set of nonzero coefficients in the true model
and let
$\widehat{A}_{n}(\lambda) = \left\lbrace j\,:\,
\widehat{\beta}_{n,j}(\lambda; \mathbb{Y}, \mathbb{X}) \ne 0
\right\rbrace$ denote the active set at $\lambda$.   For
any $S \subseteq \left\lbrace 1,\ldots, p\right\rbrace$,
write $\mathbb{X}_{S}$ to denote the design matrix composed
of variables indexed by $S$; let $S^c$ denote the complement of $S$
and $\mathcal{N}(S)$ the number of elements in $S$.
Define $\boldsymbol{\Sigma} = n^{-1}\left(\mathbb{X}_{A_0}, \mathbb{X}_{A_0^C}
\right)^{T}\left(\mathbb{X}_{A_0}, \mathbb{X}_{A_0^C}
\right)$; $\widehat{I}_{n}(\lambda) = \mathcal{N}\left
\lbrace \widehat{A}_{n}(\lambda) \bigcap
  A_0\right\rbrace$;
and $\widehat{U}_{n}(\lambda) = \mathcal{N}\left\lbrace \widehat{A}_{n}(\lambda)\setminus
  A_0\right\rbrace$.
Thus, the false selection rate at $\lambda$ is
$p_n(\lambda) = E\left[ \widehat{U}_{n}(\lambda)/\max\left\lbrace
\widehat{I}_{n}(\lambda)+\widehat{U}_{n}(\lambda), 1
\right\rbrace\right]$.

\subsection{Estimating the false selection rate}
In this section, we provide a description of our estimator of the
false selection rate for each model along the Lasso solution path and
provide theoretical justification;
 details of the implementation are deferred to the subsequent
section.  The proposed estimator is constructed in three stages: (S1)
apply screening to form a preliminary estimator of the set of nonzero
coefficients, $A_0$; (S2) generate pseudo-variables
that mimic the unimportant variables, i.e., those in $A_0^c$;
and (S3) apply the Lasso to a dataset
composed of the selected variables from the screening step and the
generated pseudo-variables; the proportion of pseudo-variables in the active
set, $\widehat{A}_{n}(\lambda)$, is the estimated false selection rate
at tuning parameter value $\lambda$.

Let $r = rank(\mathbf{\mathbb{X}})$ and for any square matrix, $U$, write
$U^{-}$ to denote a pseudo-inverse.   For any non-empty
subset $S$ of $\left\lbrace 1,\ldots, p \right\rbrace$, define
$Q_{11}(S) = n^{-1}\mathbb{X}_{S}^{\T}\mathbb{X}_{S}$,
$Q_{12}(S) = n^{-1}\mathbb{X}_{S}^{\T}\mathbb{X}_{S^c}$,
$Q_{21}(S) = n^{-1}\mathbb{X}_{S^c}^{\T}\mathbb{X}_{S}$,
 $Q_{22}(S) = n^{-1}\mathbb{X}_{S^c}^{\T}\mathbb{X}_{S^c}$,
and $P_{\mathbb{X}_{S}} =
\mathbb{X}_{S}(\mathbb{X}_{S}^{\T}\mathbb{X}_{S})^{-}\mathbb{X}_{S}^{\T}$.
The estimator $\widehat{p}_{n}(\lambda)$
of $p_n(\lambda)$ is constructed as follows.
\begin{itemize}
\item[] Step 1 (Screening): For the full data
  $(\mathbb{X}, \mathbb{Y})$, apply a viable variable selection
  method to construct a preliminary estimator, $\widehat{A}_{0,n}$, of
  the set of nonzero coefficients $A_0$. Let $\widehat{r}_{0}$ denote
  the rank of $\mathbb{X}_{\widehat{A}_{0,n}}$.
\item[] Step 2 (Pseudo-variable generation): Let
  $\Omega(\widehat{A}_{0,n}) \in \mathbb{R}^{(r-\widehat{r}_{0})\times
    \left\lbrace p - \mathcal{N}(\widehat{A}_{0,n})\right\rbrace}$
satisfy
\begin{equation*}
 \Omega(\widehat{A}_{0,n})^{\T}\Omega(\widehat{A}_{0,n})
= Q_{22}(\widehat{A}_{0,n}) -
  Q_{21}(\widehat{A}_{0,n})Q_{11}^{-}(\widehat{A}_{0,n})
  Q_{12}(\widehat{A}_{0,n}), 
\end{equation*}
and
let $V(\widehat{A}_{0,n}) \in \mathbb{R}^{n\times (r-\widehat{r}_{0})}$ be any
orthonormal matrix that is orthogonal to the column
space of  $\mathbb{X}_{\widehat{A}_{0,n}}$.
Pseudo-variables have the form
    \begin{align}
    \label{eq:phony}
    \mathbb{X}_{\mathrm{pseudo}} = P_{\mathbb{X}_{\widehat{A}_{0,n}}}
          \mathbb{X}_{\widehat{A}_{0,n}^c} + \sqrt{n}
          V(\widehat{A}_{0,n})\Omega(\widehat{A}_{0,n}).
    \end{align}
In Section \ref{sec:comput}, we describe how to calculate
$\Omega(\widehat{A}_{0,n})$ and generate $V(\widehat{A}_{0,n})$
randomly thereby allowing for generating replicate random pseudo-variables.
\end{itemize}


\begin{itemize}

      \item[] Step 3 (Error rate estimation): Fit the Lasso estimator using
        $\mathbb{X}_{\mathrm{new}} = (\mathbb{X}_{\widehat{A}_{0,n}},
        \mathbb{X}_{\mathrm{pseudo}})$,
        and calculate
        $\widehat{A}_{n}^{\mathrm{new}}(\lambda) = \{j:
        \widehat{\beta}_{n, j}(\lambda; \mathbb{Y},
        \mathbb{X}_{\mathrm{new}}) \neq 0 \}$, and subsequently
    \begin{align}
    \widehat{p}_{n}(\lambda) =
          \frac{\widehat{U}_{n}^{\mathrm{new}}(\lambda)
          }
          {
          \max\left\lbrace
          \widehat{I}_{n}^{\mathrm{new}}(\lambda) +
          \widehat{U}_{n}^{\mathrm{new}}(\lambda), 1
          \right\rbrace
          }
          = \frac{
          \mathcal{N}\left\lbrace
          \widehat{A}_{n}^{\mathrm{new}}(\lambda)
          \setminus \widehat{A}_{0,n}
          \right\rbrace
          }
          {\max\left[
          \mathcal{N}\left\lbrace \widehat{A}_{0,n}^{\mathrm{new}}(\lambda)
          \right\rbrace, 1
          \right]
          },
    \end{align}
        where $\widehat{I}_{n}^{\mathrm{new}}(\lambda) =
        \mathcal{N} \left\lbrace \widehat{A}_{n}^{\mathrm{new}}(\lambda) \bigcap
        \widehat{A}_{0,n}\right\rbrace$ and
        $\widehat{U}_{n}^{\mathrm{new}}(\lambda) =
         \mathcal{N}\left\lbrace
          \widehat{A}_{n}^{\mathrm{new}}(\lambda)
          \setminus \widehat{A}_{0,n}
          \right\rbrace$.
\end{itemize}

To stabilize our estimator, we repeat the above steps $B$ times to
obtain the estimators $\widehat{p}_{n}^{(1)}(\lambda),\ldots,
\widehat{p}_{n}^{(B)}(\lambda)$ and subsequently compute
$\overline{\widehat{p}}_{n}(\lambda) =
B^{-1}\sum_{b=1}^{B}\widehat{p}_{n}^{(b)}(\lambda)$. The following results
are proved in the Appendix; throughout we
implicitly assume that all requisite moments exist and are finite.
\begin{restatable}{lemma}{sameCov}
    \label{lem:sameCov}
    Suppose $\widehat{A}_{0,n} = A_{0}$ with probability one, then
        $n^{-1}(\mathbb{X}_{\widehat{A}_{0,n}}, \mathbb{X}_{\mathrm{pseudo}})^\T
        (\mathbb{X}_{\widehat{A}_{0,n}},
                \mathbb{X}_{\mathrm{pseudo}}) =
                \boldsymbol{\Sigma}.$
                Furthermore, $\left\lbrace \widehat{I}_{n}(\lambda),
                  \widehat{U}_{n}(\lambda)\right\rbrace$
                and $\left\lbrace
                  \widehat{I}_{n}^{\mathrm{new}}(\lambda),
                  \widehat{U}_{n}^{\mathrm{new}}(\lambda)
                  \right\rbrace$ are equal in distribution.
\end{restatable}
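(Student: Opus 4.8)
The plan is to treat the two assertions separately, first establishing the Gram-matrix identity by a direct block computation and then leveraging it to obtain the distributional equality. Throughout I would write $S = \widehat{A}_{0,n}$, which by hypothesis equals $A_0$ almost surely, and abbreviate $P = P_{\mathbb{X}_S}$, $V = V(S)$, $\Omega = \Omega(S)$, and $Q_{ij} = Q_{ij}(S)$. At the outset I would record the three structural facts that drive everything: $P$ is the orthogonal projection onto the column space of $\mathbb{X}_S$, so $P\mathbb{X}_S = \mathbb{X}_S$ and $P = P^\T = P^2$; the matrix $V$ is orthonormal with columns orthogonal to that column space, so $V^\T V = I$, $\mathbb{X}_S^\T V = 0$, and $PV = 0$; and $\Omega^\T\Omega = Q_{22} - Q_{21}Q_{11}^- Q_{12}$ by construction.

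For the first assertion I would expand $n^{-1}(\mathbb{X}_S, \mathbb{X}_{\mathrm{pseudo}})^\T(\mathbb{X}_S, \mathbb{X}_{\mathrm{pseudo}})$ block by block. The $(1,1)$ block is $Q_{11}$ immediately. For the $(1,2)$ block, substituting \eqref{eq:phony} and using $\mathbb{X}_S^\T V = 0$ together with $\mathbb{X}_S^\T P = (P\mathbb{X}_S)^\T = \mathbb{X}_S^\T$ collapses $n^{-1}\mathbb{X}_S^\T\mathbb{X}_{\mathrm{pseudo}}$ to $Q_{12}$; the $(2,1)$ block is its transpose $Q_{21}$. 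The $(2,2)$ block is the only one requiring work: expanding $\mathbb{X}_{\mathrm{pseudo}}^\T\mathbb{X}_{\mathrm{pseudo}}$, the cross terms vanish because $PV = 0$, the quadratic term in $V$ reduces via $V^\T V = I$ to $n\Omega^\T\Omega$, and the remaining term is $\mathbb{X}_{S^c}^\T P\mathbb{X}_{S^c} = nQ_{21}Q_{11}^- Q_{12}$; adding these and invoking the defining identity for $\Omega^\T\Omega$ gives $n^{-1}\mathbb{X}_{\mathrm{pseudo}}^\T\mathbb{X}_{\mathrm{pseudo}} = Q_{22}$. Since $S = A_0$, the four blocks assemble into $\boldsymbol{\Sigma}$. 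One point I would flag is that $P$, and hence both $\mathbb{X}_{S^c}^\T P\mathbb{X}_{S^c}$ and the combination $Q_{21}Q_{11}^- Q_{12}$, is invariant to the choice of generalized inverse, so the identity is unambiguous even though $Q_{11}^-$ is not unique.

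For the second assertion the key observation is that the Lasso objective depends on a design $\mathbb{Z}$ and response $\mathbb{Y}$ only through $\mathbb{Z}^\T\mathbb{Z}$ and $\mathbb{Z}^\T\mathbb{Y}$, since $\tfrac{1}{2n}\|\mathbb{Y} - \mathbb{Z}\beta\|^2 = \tfrac{1}{2n}\beta^\T(\mathbb{Z}^\T\mathbb{Z})\beta - \tfrac{1}{n}\beta^\T(\mathbb{Z}^\T\mathbb{Y}) + \text{const}$. Consequently, with the block partition into the first $\mathcal{N}(A_0)$ coordinates and the remainder, the pair $(\widehat{I}_n(\lambda), \widehat{U}_n(\lambda))$ is one and the same deterministic function $g$ of $(\mathbb{Z}^\T\mathbb{Z}, \mathbb{Z}^\T\mathbb{Y})$ evaluated at $\mathbb{Z} = (\mathbb{X}_{A_0}, \mathbb{X}_{A_0^c})$, while $(\widehat{I}_n^{\mathrm{new}}(\lambda), \widehat{U}_n^{\mathrm{new}}(\lambda))$ is $g$ evaluated at $\mathbb{Z}_{\mathrm{new}} = (\mathbb{X}_{A_0}, \mathbb{X}_{\mathrm{pseudo}})$; the block sizes agree because $\mathbb{X}_{\mathrm{pseudo}}$ has $p - \mathcal{N}(A_0)$ columns. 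It therefore suffices to show that $(\mathbb{Z}^\T\mathbb{Z}, \mathbb{Z}^\T\mathbb{Y})$ and $(\mathbb{Z}_{\mathrm{new}}^\T\mathbb{Z}_{\mathrm{new}}, \mathbb{Z}_{\mathrm{new}}^\T\mathbb{Y})$ are equal in distribution. The Gram components are deterministically equal by the first assertion, so the joint law is pinned down by the cross products alone.

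I would finish by conditioning on $V$ (which is drawn using only the fixed design, independently of $\mathbb{Y}$), under which $\mathbb{X}_{\mathrm{pseudo}}$ is fixed and both cross products are affine images of $\mathbb{Y} \sim \mathrm{Normal}(\mathbb{X}_{A_0}\beta_{0,A_0}, \sigma^2 I)$, hence jointly Gaussian, so matching their first two moments suffices. The covariances are $\sigma^2\mathbb{Z}^\T\mathbb{Z}$ and $\sigma^2\mathbb{Z}_{\mathrm{new}}^\T\mathbb{Z}_{\mathrm{new}}$, equal by the first assertion. For the means, the shared first block $\mathbb{X}_{A_0}^\T\mathbb{X}_{A_0}\beta_{0,A_0}$ is immediate, and in the second block $\mathbb{X}_{\mathrm{pseudo}}^\T\mathbb{X}_{A_0}\beta_{0,A_0} = \mathbb{X}_{A_0^c}^\T\mathbb{X}_{A_0}\beta_{0,A_0}$ because $P$ fixes the vector $\mathbb{X}_{A_0}\beta_{0,A_0}$, which lies in the column space of $\mathbb{X}_{A_0}$, while $V^\T\mathbb{X}_{A_0} = 0$ kills the $V\Omega$ component of the pseudo-variables; this is precisely where the hypothesis $\widehat{A}_{0,n} = A_0$ enters, guaranteeing the mean vector lies in the span of the retained columns. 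Since the resulting conditional law does not depend on $V$ and coincides with the law of $\mathbb{Z}^\T\mathbb{Y}$, integrating over $V$ gives the unconditional equality, and applying $g$ yields the claim. I expect the main obstacle to be the bookkeeping that legitimizes the reduction to $g$---verifying that the same function governs both experiments (equal block dimensions and a well-defined active-set map, with care taken over possible non-uniqueness of the Lasso solution when $p \gg n$)---rather than any single computation.
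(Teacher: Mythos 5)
Your proof follows essentially the same route as the paper's: the identical block-by-block expansion of the Gram matrix (cross terms killed by $V^\T\mathbb{X}_S=0$ and $PV=0$, the $(2,2)$ block assembled from $Q_{21}Q_{11}^{-}Q_{12}$ plus $\Omega^\T\Omega$), followed by the observation that the Lasso solution depends on the data only through the Gram matrix and the cross product with $\mathbb{Y}$, which is Gaussian with matching moments. You are somewhat more explicit than the paper in checking that the means of the two cross-product vectors agree (the paper works with $\mathbb{Z}^\T\boldsymbol{\epsilon}$ and leaves the signal term implicit), but the substance and structure of the argument are the same.
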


The preceding result shows that were the set of important variables, $A_0$,
known, substituting the pseudo-variables for the unimportant
variables, $A_0^c$, does not affect the false selection rate.
Of course, $A_0$ is not known in practice;
the following result shows that preceding result holds provided that the
initial screening procedure is selection consistent.

\begin{restatable}{theorem}{errEst}
\label{thm:errEst}
Assume that  $\widehat{A}_{0,n}\rightarrow A_0$ with probability
one.  Then, for any bounded function
$g:\mathbb{R}^{2}\rightarrow\mathbb{R}$, it follows that
$$\sup_{\lambda}  \left |E \left[ g \left\lbrace \widehat{I}_{n}(\lambda),
      \widehat{U}_{n}(\lambda) \right\rbrace
 \right]
-  E \left[ g \left\lbrace \widehat{I}_{n}^{\mathrm{new}}(\lambda),
    \widehat{U}_{n}^{\mathrm{new}}(\lambda) \right\rbrace
\right]
\right |= o(1). $$
\end{restatable}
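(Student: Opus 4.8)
\noindent
\emph{Proof proposal.}
The plan is to reduce the asymptotic statement to the exact distributional identity furnished by Lemma~\ref{lem:sameCov} by introducing a coupled, idealized version of the procedure and then paying only a vanishing price for the event on which screening fails. Write $T_n(\lambda) = g\{\widehat{I}_{n}(\lambda),\widehat{U}_{n}(\lambda)\}$ and $T_n^{\mathrm{new}}(\lambda) = g\{\widehat{I}_{n}^{\mathrm{new}}(\lambda),\widehat{U}_{n}^{\mathrm{new}}(\lambda)\}$, let $M=\sup|g|<\infty$, and set $E_n = \{\widehat{A}_{0,n}=A_0\}$. Because $\widehat{A}_{0,n}$ takes values in the finite collection of subsets of $\{1,\ldots,p\}$, the almost-sure convergence $\widehat{A}_{0,n}\to A_0$ is equivalent to $\widehat{A}_{0,n}=A_0$ for all large $n$ almost surely; bounded convergence then gives $P(E_n^c)\to 0$. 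This is the only place the selection-consistency hypothesis enters.

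Next I would introduce an idealized statistic $T_n^{*}(\lambda)=g\{\widehat{I}_{n}^{*}(\lambda),\widehat{U}_{n}^{*}(\lambda)\}$, defined exactly as $T_n^{\mathrm{new}}(\lambda)$ except that the screening set is \emph{forced} to equal the true $A_0$, while the external randomness used to draw the orthonormal matrix $V$ in~\eqref{eq:phony} is shared with the actual procedure (a coupling). Since this idealized screening returns $A_0$ with probability one, Lemma~\ref{lem:sameCov} applies verbatim and yields $\{\widehat{I}_{n}^{*}(\lambda),\widehat{U}_{n}^{*}(\lambda)\}\stackrel{d}{=}\{\widehat{I}_{n}(\lambda),\widehat{U}_{n}(\lambda)\}$, whence
$$E[T_n^{*}(\lambda)] = E[T_n(\lambda)], \qquad \text{for every } \lambda.$$
This is a purely distributional identity and requires no conditioning. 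The key observation is then a \emph{pathwise} one: on $E_n$ we have $\widehat{A}_{0,n}=A_0$, so the shared draw of $V$ makes the pseudo-variables in~\eqref{eq:phony} identical for the actual and idealized constructions, the two Lasso fits coincide, and the set operations defining the intersection and set-difference counts agree; hence $T_n^{\mathrm{new}}(\lambda)=T_n^{*}(\lambda)$ on $E_n$, for all $\lambda$. Splitting on $E_n$ and its complement and using boundedness of $g$,
$$\bigl| E[T_n^{\mathrm{new}}(\lambda)] - E[T_n^{*}(\lambda)]\bigr| = \bigl| E[(T_n^{\mathrm{new}}(\lambda)-T_n^{*}(\lambda))\mathbf{1}_{E_n^c}]\bigr| \le 2M\,P(E_n^c).$$
Combining with $E[T_n^{*}(\lambda)]=E[T_n(\lambda)]$ gives $|E[T_n(\lambda)]-E[T_n^{\mathrm{new}}(\lambda)]|\le 2M\,P(E_n^c)$ for every $\lambda$; since the right-hand side is free of $\lambda$, taking the supremum and invoking $P(E_n^c)\to 0$ finishes the argument.

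The main obstacle, and the reason a naive argument fails, is that $T_n$ and $T_n^{\mathrm{new}}$ cannot be compared pathwise even on $E_n$, since they arise from two genuinely different design matrices, while a direct conditioning on $\{\widehat{A}_{0,n}=A_0\}$ is illegitimate because that event is correlated with $\mathbb{Y}$ and thereby distorts the conditional law of the errors. The device that resolves this is the intermediate $T_n^{*}$, which is pathwise equal to $T_n^{\mathrm{new}}$ on $E_n$ (through the shared $V$) yet equal in law to $T_n$ unconditionally (through Lemma~\ref{lem:sameCov}); boundedness of $g$ then absorbs the low-probability failure event uniformly in $\lambda$, which is what delivers the $\sup_\lambda$ for free. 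Finally, I would verify that the coupling of $V$ is well defined on $E_n$, i.e. that the dimension $r-\widehat{r}_0$ and the orthogonality constraint defining $V(\widehat{A}_{0,n})$ reduce to those defining $V(A_0)$, which is immediate from $\widehat{r}_0 = \mathrm{rank}(\mathbb{X}_{A_0})$ there.
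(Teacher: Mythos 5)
Your proposal is correct, and it shares the paper's skeleton --- introduce an oracle version of the procedure whose screening step returns $A_0$ deterministically, apply Lemma~\ref{lem:sameCov} to identify its law with that of $\{\widehat{I}_n(\lambda),\widehat{U}_n(\lambda)\}$, and absorb the screening-failure event using boundedness of $g$ and $P(\widehat{A}_{0,n}\neq A_0)\to 0$ --- but the way you bridge the actual and oracle procedures is genuinely different and, in fact, tighter. The paper telescopes through the conditional expectation $E[\,g\{\widehat{I}_n^{\mathrm{new}}(\lambda),\widehat{U}_n^{\mathrm{new}}(\lambda)\}\mid\widehat{A}_{0,n}=A_0\,]$ in two steps (its equations (\ref{prth1eq1}) and (\ref{prth1eq2})), and the second of these --- that this conditional expectation is uniformly close to the unconditional expectation for the oracle construction --- is asserted without argument; as you note, it is not automatic, because conditioning on $\{\widehat{A}_{0,n}=A_0\}$ perturbs the law of $\boldsymbol{\epsilon}$ and hence of the Lasso path. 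Your coupling replaces both steps with the single pathwise identity $T_n^{\mathrm{new}}(\lambda)=T_n^{*}(\lambda)$ on $E_n$ (valid for all $\lambda$ simultaneously once $V$ --- and, implicitly, the choice of $\Omega$ --- is shared and the dimension check $\widehat{r}_0=\mathrm{rank}(\mathbb{X}_{A_0})$ on $E_n$ is made), yielding the clean bound $\lvert E[T_n(\lambda)]-E[T_n^{\mathrm{new}}(\lambda)]\rvert\le 2M\,P(E_n^c)$ uniformly in $\lambda$. What your route buys is a fully justified version of the paper's middle step and an explicit, $\lambda$-free rate in terms of $P(E_n^c)$; what the paper's route buys is brevity. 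Both ultimately rest on the same two ingredients, Lemma~\ref{lem:sameCov} and selection consistency of the screener.
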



\begin{restatable}{corollary}{disConv}
\label{cor:disConv}
Assume that  $\widehat{A}_{0,n}\rightarrow A_0$ with probability
one.  Then, for any bounded function
$g:\mathbb{R}^{2}\rightarrow\mathbb{R}$, it follows that
$$\sup_{\lambda, t}  \left |P \left[ g \left\lbrace \widehat{I}_{n}(\lambda),
      \widehat{U}_{n}(\lambda)  \right\rbrace \leq t
 \right]
-  P \left[ g \left\lbrace \widehat{I}_{n}^{\mathrm{new}}(\lambda),
    \widehat{U}_{n}^{\mathrm{new}}(\lambda)
 \right\rbrace  \leq t
\right]
\right |= o(1). $$
\end{restatable}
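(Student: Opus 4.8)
The plan is to deduce the distributional statement of the Corollary from the integrated statement of Theorem~\ref{thm:errEst} by choosing $g$ in that theorem to be the indicator of a sublevel set, and then to upgrade the resulting pointwise-in-$t$ control to uniform-in-$t$ control by exploiting that the quantities being compared are integer-valued and bounded.

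First I would fix $t \in \mathbb{R}$ and define the function $g_t:\mathbb{R}^2\to\mathbb{R}$ by $g_t(i,u) = \mathbf{1}\{g(i,u)\le t\}$. Since $g_t$ takes values in $\{0,1\}$ it is bounded, and by construction
\[
P\!\left[ g\{\widehat{I}_n(\lambda),\widehat{U}_n(\lambda)\}\le t \right] = E\!\left[ g_t\{\widehat{I}_n(\lambda),\widehat{U}_n(\lambda)\} \right],
\]
with the analogous identity for the ``new'' quantities. Applying Theorem~\ref{thm:errEst} to $g_t$ therefore yields, for each fixed $t$,
\[
\sup_\lambda \left| P\!\left[ g\{\widehat{I}_n(\lambda),\widehat{U}_n(\lambda)\}\le t \right] - P\!\left[ g\{\widehat{I}_n^{\mathrm{new}}(\lambda),\widehat{U}_n^{\mathrm{new}}(\lambda)\}\le t \right] \right| = o(1).
\]

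The remaining and principal difficulty is to make this uniform in $t$, since the $o(1)$ rate above may a priori depend on $t$. The key observation is that $\widehat{I}_n(\lambda)$, $\widehat{U}_n(\lambda)$, $\widehat{I}_n^{\mathrm{new}}(\lambda)$, and $\widehat{U}_n^{\mathrm{new}}(\lambda)$ are all nonnegative integers bounded above by $p$, so each pair takes values in the finite grid $\mathcal{G}=\{0,1,\ldots,p\}^2$. Consequently $g$ evaluated at either pair takes values only in the finite set $\mathcal{V}=g(\mathcal{G})$, and each map $t\mapsto P[\,\cdot\le t\,]$ is a right-continuous step function whose jumps occur only at points of $\mathcal{V}$. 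Both step functions equal $0$ below $\min\mathcal{V}$, equal $1$ at and above $\max\mathcal{V}$, and are constant between consecutive elements of $\mathcal{V}$; hence the supremum over $t$ of their absolute difference is attained on $\mathcal{V}$, giving
\[
\sup_t \left| P\!\left[ g\{\widehat{I}_n(\lambda),\widehat{U}_n(\lambda)\}\le t \right] - P\!\left[ g\{\widehat{I}_n^{\mathrm{new}}(\lambda),\widehat{U}_n^{\mathrm{new}}(\lambda)\}\le t \right] \right| = \max_{v\in\mathcal{V}} \left| E[g_v\{\widehat{I}_n(\lambda),\widehat{U}_n(\lambda)\}] - E[g_v\{\widehat{I}_n^{\mathrm{new}}(\lambda),\widehat{U}_n^{\mathrm{new}}(\lambda)\}] \right|.
\]

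Finally I would take the supremum over $\lambda$ and move it inside the finite maximum using $\sup_\lambda \max_{v} \le \max_{v}\sup_\lambda$. For each fixed $v\in\mathcal{V}$ the indicator $g_v$ is bounded, so Theorem~\ref{thm:errEst} gives $\sup_\lambda | E[g_v\{\widehat{I}_n,\widehat{U}_n\}] - E[g_v\{\widehat{I}_n^{\mathrm{new}},\widehat{U}_n^{\mathrm{new}}\}] | = o(1)$; since $\mathcal{V}$ is a fixed finite set, the maximum of these finitely many $o(1)$ terms is again $o(1)$, which yields $\sup_{\lambda,t}|\cdots| = o(1)$. I expect the only genuine obstacle to be the passage from pointwise to uniform control in $t$, and it is resolved entirely by the discreteness and boundedness of the selection counts, which collapse the supremum over the threshold $t$ into a maximum over the finite value set $\mathcal{V}$.
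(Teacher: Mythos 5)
Your argument is correct in outline and is considerably more explicit than the paper's own treatment, which disposes of this corollary in a single line as an ``immediate'' consequence of Theorem~\ref{thm:errEst}. You correctly identify that the only content beyond the theorem is the uniformity in $t$, and that applying the theorem to the indicator $g_t=\mathbf{1}\{g(\cdot)\le t\}$ handles each fixed $t$. The one weak point is your mechanism for the uniformity: you collapse $\sup_t$ to a maximum over the value set $\mathcal{V}=g(\{0,\dots,p\}^2)$ and then invoke ``a maximum of finitely many $o(1)$ terms is $o(1)$.'' That step is valid only if $\mathcal{V}$ is a \emph{fixed} finite set; the paper explicitly allows $p=p_n\gg n$, in which case the grid, and hence $\mathcal{V}$, grows with $n$, and the maximum of $O(p_n^2)$ individually $o(1)$ terms need not be $o(1)$. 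The clean repair --- and arguably the intended reading of ``immediate'' --- is to observe that the proof of Theorem~\ref{thm:errEst} bounds the discrepancy by a constant multiple of $\|g\|_\infty\, P(\widehat{A}_{0,n}\ne A_0)$: the third displayed step there is an exact distributional identity via Lemma~\ref{lem:sameCov}, so only the conditioning steps contribute error, and that error depends on $g$ only through its sup-norm. Hence the $o(1)$ is uniform over the whole class $\{g:\|g\|_\infty\le 1\}$, and since every indicator $g_t$ lies in that class, the supremum over $t$ (indeed over all Borel sets, not just sublevel sets) is controlled at once, with no appeal to discreteness and no dependence on $p$.
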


\begin{restatable}{corollary}{MultierrEst}
\label{cor:MultierrEst}
Assume $\widehat{A}_{0,n}\rightarrow  A_{0}$ with probability one.
Then, setting $g(v,w) = v/\max(v+w, 1)$ shows
$$\sup_{\lambda}  \left | p_n(\lambda) -
B^{-1} \sum_{b=1}^{B} E \left\{ \widehat{p}^b_n(\lambda)  \right\}\right |= o(1). $$
\end{restatable}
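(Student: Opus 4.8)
The plan is to deduce the corollary directly from Theorem~\ref{thm:errEst}, recognizing both $p_n(\lambda)$ and the replicate-averaged expectation as the expectation of one bounded count functional $g$ evaluated on $\{\widehat I_n(\lambda),\widehat U_n(\lambda)\}$ and on $\{\widehat I_n^{\mathrm{new}}(\lambda),\widehat U_n^{\mathrm{new}}(\lambda)\}$, respectively. First I would record the two structural identities that make this a special case of the theorem: by the definition of the false selection rate, $p_n(\lambda)=E[g\{\widehat I_n(\lambda),\widehat U_n(\lambda)\}]$, while each replicate of the estimator is exactly $\widehat p_n^b(\lambda)=g\{\widehat I_n^{\mathrm{new}}(\lambda),\widehat U_n^{\mathrm{new}}(\lambda)\}$, so that $E\{\widehat p_n^b(\lambda)\}=E[g\{\widehat I_n^{\mathrm{new}}(\lambda),\widehat U_n^{\mathrm{new}}(\lambda)\}]$.

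Next I would check that $g$ meets the boundedness hypothesis of Theorem~\ref{thm:errEst}. Although the ratio $v/\max(v+w,1)$ is unbounded on all of $\mathbb{R}^2$, the statistics $\widehat I_n,\widehat U_n,\widehat I_n^{\mathrm{new}},\widehat U_n^{\mathrm{new}}$ are all non-negative integer counts, and on the quadrant $\{v\ge 0,\,w\ge 0\}$ the functional lies in $[0,1]$. Hence $g$ can be replaced by any globally bounded function agreeing with it on $\mathbb{Z}_{\ge 0}^2$ without altering any quantity in the corollary, and the theorem applies to that bounded version.

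The remaining step is to dispose of the average over the $B$ replicates, which I would handle by the triangle inequality:
\[
\sup_{\lambda}\left| p_n(\lambda)-B^{-1}\sum_{b=1}^{B}E\{\widehat p_n^b(\lambda)\}\right|
\le B^{-1}\sum_{b=1}^{B}\sup_{\lambda}\left| E[g\{\widehat I_n(\lambda),\widehat U_n(\lambda)\}]-E[g\{\widehat I_n^{\mathrm{new}}(\lambda),\widehat U_n^{\mathrm{new}}(\lambda)\}]\right|.
\]
Each summand is $o(1)$ by Theorem~\ref{thm:errEst}, and since the bound is uniform in $b$ the average is $o(1)$ as well. Equivalently, because the pseudo-variable randomization is i.i.d.\ across replicates, each $E\{\widehat p_n^b(\lambda)\}$ equals the common value $E[g\{\widehat I_n^{\mathrm{new}}(\lambda),\widehat U_n^{\mathrm{new}}(\lambda)\}]$, which collapses the average to a single term and reduces the claim to the theorem verbatim.

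I do not expect a substantive obstacle here: the corollary is essentially a specialization of Theorem~\ref{thm:errEst} to the false selection functional together with a trivial averaging argument. The only points that will demand any care are the reduction of $g$ to a bounded function on the range of the count statistics, and the uniform-in-$b$ passage to the replicate average; both are routine once the two structural identities above are in place.
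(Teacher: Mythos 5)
Your proposal is correct and matches the paper's approach: the paper simply declares the corollary an immediate consequence of Theorem~\ref{thm:errEst}, and you have filled in exactly the routine details that declaration presumes (identifying $p_n(\lambda)$ and $E\{\widehat{p}_n^b(\lambda)\}$ as expectations of the bounded count functional, and averaging over the $B$ identically distributed replicates). The only cosmetic point is the order of arguments in $g$ (the FSR puts $\widehat{U}_n$ in the numerator, so strictly one wants $g(v,w)=w/\max(v+w,1)$ applied to $(\widehat{I}_n,\widehat{U}_n)$), but this imprecision is inherited from the paper's own statement and does not affect the argument.
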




The preceding results require a selection-consistent screening
procedure; we provide such a selection procedure
based on pseudo-variables in the Supplemental Materials.
While the theoretical assumption of selection consistency might be
still strong, empirical results in the next section suggest that
screening based on Lasso tuned by 10-fold cross validation (which is
not selection consistent) leads
to satisfactory results.

In small samples, we have found that the empirical performance of our procedure
can be improved by augmenting $\mathbb{X}_{\mathrm{new}}$ with a permutation of
$\mathbb{X}_{\widehat{A}_{0,n}}$, say $\mathbb{X}_{\mathrm{perm}} =
(\mathbb{X}_{\widehat{A}_{0,n}}, \mathbb{X}_{\mathrm{pseudo}},
G\mathbb{X}_{\widehat{A}_{0,n}})$ where $G$ is a random permutation matrix. The
intuition for adding this permutation is to compensate for over-estimation of
$A_0$ in finite samples (which in turn leads to underestimation of the false
selection rate).
  See Proposition 1.1
in the Supplemental Materials for an analog of Corollary \ref{cor:MultierrEst}
for this modified procedure.

Control of the FSR at specified error rate $\alpha$ is achieved by
first estimating the FSRs for a sequence of tuning parameter
values,
$\lambda_{(1)}, \ldots, \lambda_{(m)}$ and then selecting the tuning
parameter
$\widehat{\lambda} = \min \{\lambda_{(i)}: \widehat{p}(\lambda_{(i)}) \leq
\alpha \}$.
The final model is obtained by fitting the Lasso using $(\mathbb{X},
\mathbb{Y})$ at tuning parameter $\widehat{\lambda}$.

\subsection{Computation of pseudo-variables}
\label{sec:comput}
Our procedure for generating pseudo-variables is based
on the following result which is proved in the Supplemental
Materials.
\begin{restatable}{lemma}{sol}
    \label{lem:sol}
        For any non-empty subset $S$ of
        $\left\lbrace 1,\ldots, p\right\rbrace$ such that
        $\mathbb{X}_{S}$ has rank $r(S)$, denote $s = \mathcal{N}(S)$. Then
        $\left\lbrace \mathbb{X}_{S},
          \mathbb{X}_{\mathrm{pseudo}}(S)\right\rbrace^{\T} \left\lbrace
          \mathbb{X}_{S}, \mathbb{X}_{\mathrm{pseudo}}(S)\right\rbrace
        = \left( \mathbb{X}_{S}, \mathbb{X}_{S^c}\right)^{\T}
        \left(\mathbb{X}_{S}, \mathbb{X}_{S^c}\right)$
        if and only if
        $\mathbb{X}_{\mathrm{pseudo}}(S) =
        P_{\mathbb{X}_{S}}\mathbb{X}_{S^c} +
        \sqrt{n}V(S)\Omega(\mathbb{X}, S)$
        for some $V(S)$ and $\Omega(\mathbb{X}, S)$, where $V(S)$ is an
        $n \times \left\lbrace r - r(S)\right\rbrace$ orthonormal
        matrix that is orthogonal to
        $\mathbb{X}_{S}$, and $\Omega(\mathbb{X}, S)$ is an
        $\left\lbrace r - r(S)\right\rbrace \times \left\lbrace p - s \right\rbrace$ matrix such that
        $\Omega(\mathbb{X}, S)^\T \Omega(\mathbb{X}, S)= Q_{22}(S) -
        Q_{21}(S)Q_{11}^{-}(S) Q_{12}(S)$.
\end{restatable}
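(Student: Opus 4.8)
The plan is to read the claimed Gram equality as an identity between symmetric block matrices whose leading $(1,1)$ block, $\mathbb{X}_S^\T \mathbb{X}_S$, is automatically common to both sides. Hence $\{\mathbb{X}_S, \mathbb{X}_{\mathrm{pseudo}}(S)\}^\T \{\mathbb{X}_S, \mathbb{X}_{\mathrm{pseudo}}(S)\} = (\mathbb{X}_S, \mathbb{X}_{S^c})^\T (\mathbb{X}_S, \mathbb{X}_{S^c})$ is equivalent to the pair of conditions (i) $\mathbb{X}_S^\T \mathbb{X}_{\mathrm{pseudo}}(S) = \mathbb{X}_S^\T \mathbb{X}_{S^c}$ (off-diagonal block) and (ii) $\mathbb{X}_{\mathrm{pseudo}}(S)^\T \mathbb{X}_{\mathrm{pseudo}}(S) = \mathbb{X}_{S^c}^\T \mathbb{X}_{S^c}$ (trailing block). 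I would prove the two implications by verifying (i) and (ii) for the stated form and, conversely, by extracting the stated form from (i) and (ii).

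For the ``if'' direction, I substitute $\mathbb{X}_{\mathrm{pseudo}}(S) = P_{\mathbb{X}_S}\mathbb{X}_{S^c} + \sqrt{n}\,V(S)\Omega(\mathbb{X},S)$ and use that $P_{\mathbb{X}_S}$ is the symmetric idempotent projector onto the column space of $\mathbb{X}_S$ (so $P_{\mathbb{X}_S}\mathbb{X}_S = \mathbb{X}_S$), that $V(S)$ is orthonormal and orthogonal to $\mathbb{X}_S$ (so $\mathbb{X}_S^\T V(S) = 0$, $P_{\mathbb{X}_S}V(S) = 0$, $V(S)^\T V(S) = I$), and the defining identity $\Omega^\T\Omega = Q_{22}(S) - Q_{21}(S)Q_{11}^{-}(S)Q_{12}(S)$. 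Condition (i) is immediate, since the projected term yields $\mathbb{X}_S^\T P_{\mathbb{X}_S}\mathbb{X}_{S^c} = \mathbb{X}_S^\T\mathbb{X}_{S^c}$ while $\mathbb{X}_S^\T V(S) = 0$ kills the second term. For (ii), the cross terms vanish because $P_{\mathbb{X}_S}V(S) = 0$, leaving $\mathbb{X}_{S^c}^\T P_{\mathbb{X}_S}\mathbb{X}_{S^c} + n\,\Omega^\T\Omega$; rewriting $\mathbb{X}_{S^c}^\T P_{\mathbb{X}_S}\mathbb{X}_{S^c} = n\,Q_{21}(S)Q_{11}^{-}(S)Q_{12}(S)$ and inserting the $\Omega$ identity collapses this to $n\,Q_{22}(S) = \mathbb{X}_{S^c}^\T\mathbb{X}_{S^c}$. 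This direction is routine.

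For the converse, which carries the real content, I assume (i) and (ii) and split $\mathbb{X}_{\mathrm{pseudo}}(S) = P_{\mathbb{X}_S}\mathbb{X}_{\mathrm{pseudo}}(S) + W$ with $W = (I - P_{\mathbb{X}_S})\mathbb{X}_{\mathrm{pseudo}}(S)$. Condition (i) forces $P_{\mathbb{X}_S}\mathbb{X}_{\mathrm{pseudo}}(S) = P_{\mathbb{X}_S}\mathbb{X}_{S^c}$, fixing the in-space part, while $\mathbb{X}_S^\T W = 0$ by construction. Substituting into (ii) and using $P_{\mathbb{X}_S}W = 0$ to remove cross terms gives $W^\T W = \mathbb{X}_{S^c}^\T(I - P_{\mathbb{X}_S})\mathbb{X}_{S^c} = n\{Q_{22}(S) - Q_{21}(S)Q_{11}^{-}(S)Q_{12}(S)\}$. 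It then remains to realize $W = \sqrt{n}\,V(S)\Omega$ with $V(S)$ and $\Omega$ of the stated shapes. The main obstacle is the rank/dimension bookkeeping: I would show $\mathrm{rank}(W) = \mathrm{rank}(W^\T W) = \mathrm{rank}\{(I - P_{\mathbb{X}_S})\mathbb{X}_{S^c}\} = \mathrm{rank}(\mathbb{X}) - \mathrm{rank}(\mathbb{X}_S) = r - r(S)$, so that the column space of $W$ is an $\{r - r(S)\}$-dimensional subspace of the orthogonal complement of $\mathrm{col}(\mathbb{X}_S)$. Taking $V(S)$ to be an orthonormal basis of $\mathrm{col}(W)$ yields an $n\times\{r - r(S)\}$ orthonormal matrix orthogonal to $\mathbb{X}_S$ with $V(S)V(S)^\T W = W$; setting $\Omega = n^{-1/2}V(S)^\T W$ then gives $W = \sqrt{n}\,V(S)\Omega$ and $\Omega^\T\Omega = n^{-1}W^\T W = Q_{22}(S) - Q_{21}(S)Q_{11}^{-}(S)Q_{12}(S)$, with $\Omega$ of size $\{r - r(S)\}\times\{p - s\}$ as required.

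The delicate points I would document carefully are the invariance of $\mathbb{X}_S(\mathbb{X}_S^\T\mathbb{X}_S)^{-}\mathbb{X}_S^\T$ and of $\mathbb{X}_{S^c}^\T\mathbb{X}_S(\mathbb{X}_S^\T\mathbb{X}_S)^{-}\mathbb{X}_S^\T\mathbb{X}_{S^c}$ to the choice of generalized inverse, so that the $Q_{11}^{-}(S)$ in the statement is unambiguous, and the rank identity $\mathrm{rank}\{(I - P_{\mathbb{X}_S})\mathbb{X}_{S^c}\} = r - r(S)$, which simultaneously guarantees that the target matrix $Q_{22}(S) - Q_{21}(S)Q_{11}^{-}(S)Q_{12}(S)$ admits a factorization with exactly $r - r(S)$ rows and that $\mathrm{col}(\mathbb{X}_S)^\perp$, of dimension $n - r(S) \ge r - r(S)$, is large enough to accommodate $V(S)$.
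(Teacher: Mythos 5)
Your proposal is correct and follows essentially the same route as the paper: block-wise reduction of the Gram identity to the off-diagonal and trailing conditions, direct verification for sufficiency, and for necessity a decomposition of $\mathbb{X}_{\mathrm{pseudo}}(S)$ into its projection onto $\mathrm{col}(\mathbb{X}_{S})$ (pinned down by the off-diagonal block) plus an orthogonal remainder $W$ satisfying $W^{\T}W = n\{Q_{22}(S)-Q_{21}(S)Q_{11}^{-}(S)Q_{12}(S)\}$. Your treatment is in fact more careful than the paper's, which asserts the factorization $V_\epsilon^{*}=\sqrt{n}VA$ without the rank identity $\mathrm{rank}\{(I-P_{\mathbb{X}_{S}})\mathbb{X}_{S^{c}}\}=r-r(S)$ that justifies the stated dimensions of $V(S)$ and $\Omega$.
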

\noindent
Thus, the preceding result characterizes a class of potential
pseudo-variables indexed by the matrices $V(S)$ and
$\Omega(\mathbb{X}, S)$.

To generate pseudo-variables, the first part
$P_{\mathbb{X}_S} \mathbb{X}_{S^c}$ is calculated directly using a
QR decomposition.
The second part,
$V(S)$, is constructed using the form
$V(S) = V_1 V_2,$ where $V_1$ is an
$n \times \left\lbrace n - r(S)\right\rbrace$ orthonormal matrix which is orthogonal to
 $\mathbb{X}_{S}$, and $V_2$ is a random orthonormal matrix.
To find $V_1$, we compute the QR decomposition $\mathbb{X}_{S} = Q_xR_x$ and
then choose $V_1$ to be the last $n - r(S)$ columns of $Q_x$, which are an
orthonormal basis for the null space of $\mathbb{X}_{S}^T$.  Subsequently,
$V_2$ is a random orthonormal matrix distributed with Haar measure
\cite[][]{mezzadri2006generate}.

To find $\Omega$ such that
$\Omega(\mathbb{X}, S)^{\T}\Omega(\mathbb{X}, S) =
Q_{22}(S) -
Q_{21}(S)Q_{11}^{-}(S)
Q_{12}(S)$,
it is not necessary to compute $Q_{22}, Q_{21}, Q_{11}$, which is
computationally expensive for $p \gg n$. To see this, define
$\mathcal{E}_{1|2} = (I - P_{\mathbb{X}_{S}})
\mathbb{X}_{S^c}$
so that
$\mathcal{E}_{1|2}^\T\mathcal{E}_{1|2} =
(Q_{22}-Q_{21}Q_{11}^{-}Q_{12})$,
then compute the QR decomposition
$\mathcal{E}_{1|2} = Q_\mathcal{E}R_\mathcal{E}$ and choose
$\Omega$ to be the first $r - r(S)$ rows of $R_\mathcal{E}$.

\section{Simulations}
\label{sec:sim}
We examine the finite-sample performance of the proposed method in
terms of FSR control and true selection rate (TSR) across
data sets with varying dimension, number of nonzero coefficients,
signal strength,   and  correlation structure.   Our examination is
based
on the comparison of the
following methods:
pseudo-1, the proposed variable addition method with the screening
procedure given in the Supplemental Materials; pseudo-2, the proposed
variable addition method with the screening done using the Lasso tuned
with 10-fold cross-validation;
Knockoff and Knockoff+ \citep[][]{barber2015controlling};
and pseudo-Wu, a variable-addition method proposed to control
FSR in forward selection \citep[][]{wu2007controlling}.


In implementing our proposed methods we included the permutation term as
discussed in Section 2; results without the permutation are presented in the
Supplemental Materials. In our implementation of the proposed pseudo-variable
methods, we repeated pseudo-variable generation $B=20$ times in each iteration.
The knockoff and knockoff+ methods, are as implemented in the R package
\emph{knockoff} with default parameter settings. The pseudo-Wu is as
implemented on the authors' website. Their implementation requires $n > 2p$ so
that $p$-values for all variables can be
calculated. 
As suggested by the authors, we use a bootstrap size $B = 200$ for the
pseudo-Wu method.

The data are generated from the linear model
$Y_i = X_i^T\boldsymbol{\beta}_0 + \epsilon_i,\ i = 1, \ldots n, $
where $\epsilon_i \sim N(0, 1)$ and $X_i \sim ~ N(0_{p \times 1}, C)$
with $C_{ij} = \rho^{|i-j|}$.  Define
$\boldsymbol{\beta}_0 = (A, A, 0, 0, A, 0, \ldots, 0, 0)^t$, where $A$
is the signal amplitude and positions of nonzero coefficient are sampled
without replacement from $\{1, \ldots, p\}$. Denote the number of
nonzero coefficients by $s$.

We study four different factors thought to influence FSR/TSR
estimation:
    \begin{itemize}
        \item[1.] predictor dimension: we fix $n = 200,\ \rho = 0.5,\ A = 1,\ s = 5$,
            and vary $p = 30$, $70$, $110$, $150,\ 190,\ 230,\ 330,\ 430,\ 530$;
        \item[2.] correlation magnitudes: we fix $n = 200,\ p = 50,\ A = 1,\ s = 5$,
            and vary $\rho = 0,\ 0.1$, $0.2$, $\ldots, 0.8, \ 0.9$;
        \item[3.] signal amplitude: we fix $n = 200,\ p = 50,\ \rho = 0.5,\ s = 5$,
            and vary $A = 0.1,\ 0.2,\ 0.3,\  \ldots,\ 1$;
        \item[4.] number of nonzero coefficients: we fix $n = 200,\ p
                  = 50,\ \rho = 0.5,\ A = 1$,
            and vary $s = 1,\ 2,\ 3,\  \ldots,\ 20$.
    \end{itemize}

For each of the above combinations of parameter values, we first generate
twenty different  values of $\boldsymbol{\beta}_0$
and then for each $
\boldsymbol{\beta}_0$, we generate $50$ datasets. Thus, for each
combination of parameter values, we generate 1,000 replicates. Results
are based on the average across these replicates.
In all settings, we set $\alpha=0.20$, to be the target false
selection rate;
 results for additional values of
$\alpha$ are given in the Supplemental Materials.


Simulation performance across the different parameter settings are displayed in
Figures \ref{fig:dim}-\ref{fig:amp}. Standard errors of the FSR and TSR
averages are less than 0.015 for all simulation settings. It can be seen that
the proposed methods are consistently less biased than alternatives in terms of
FSR. For TSR, knockoff and knockoff+ decrease rapidly as $p \rightarrow n$.
Furthermore, when number of true signal is sparse, i.e., when $s$ is small,
knockoff+ and Pseudo-Wu have low power.

Both of the proposed pseudo-variable methods performed favorably relative to
competing methods.  That pseudo-2 performed well is encouraging as it does not
satisfy the selection consistency criterion required in our theoretical
results, suggesting that the proposed pseudo-variable methods are robust to
mild violations of this assumption. Furthermore, in the Supplemental Materials,
we present application of the proposed methods to a Cox proportional hazards
model as well as a logistic regression model; these simulations are
qualitatively similar to those presented here suggesting the proposed method
can be applied more generally than the linear model case for which our theory
was developed.

	\begin{figure}
		\begin{subfigure}{.5\textwidth}
			\centering
			\includegraphics[scale = 0.35]{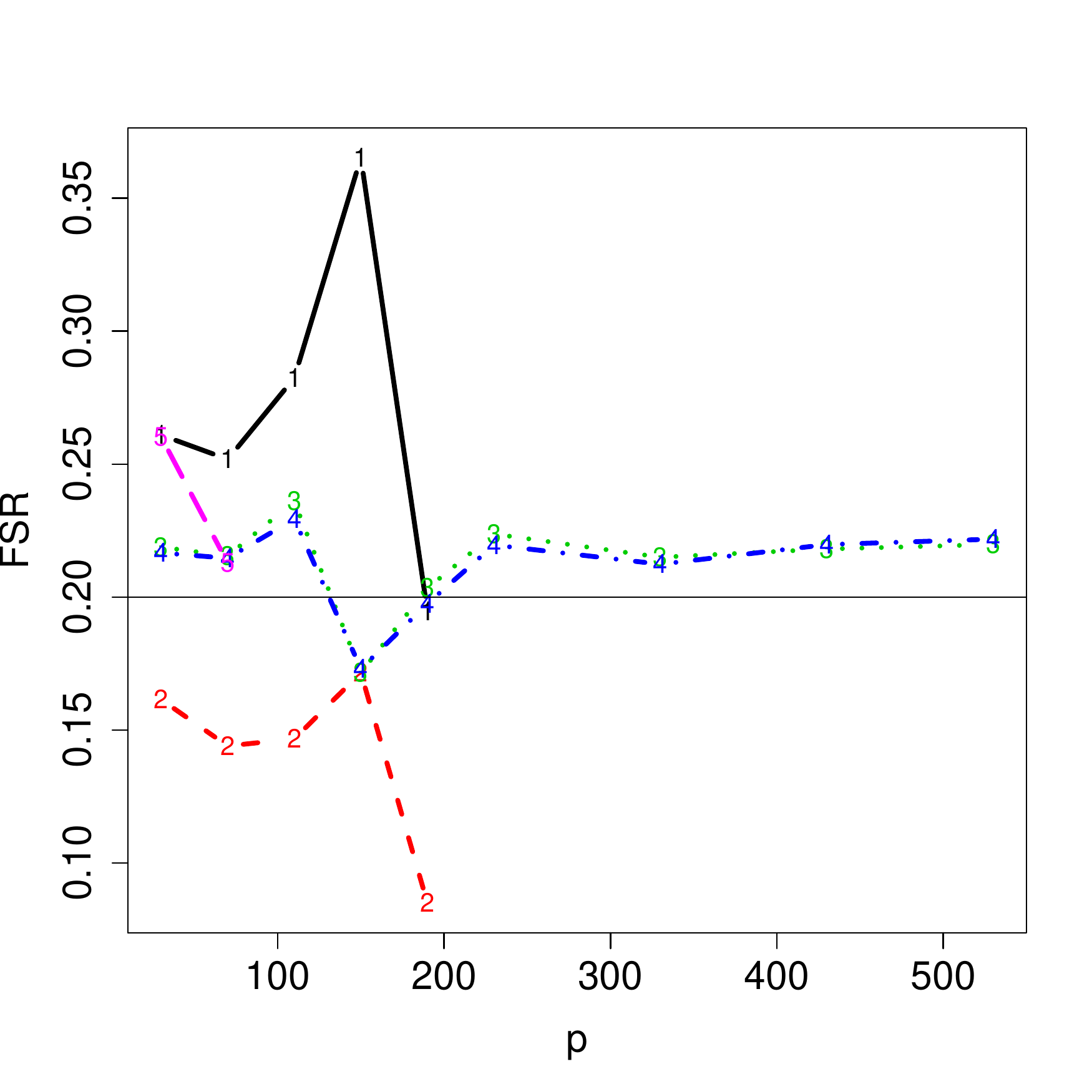}
			\caption{False selection rate vs $p$}
		\end{subfigure}%
		\begin{subfigure}{.5\textwidth}
			\centering
			\includegraphics[scale = 0.35]{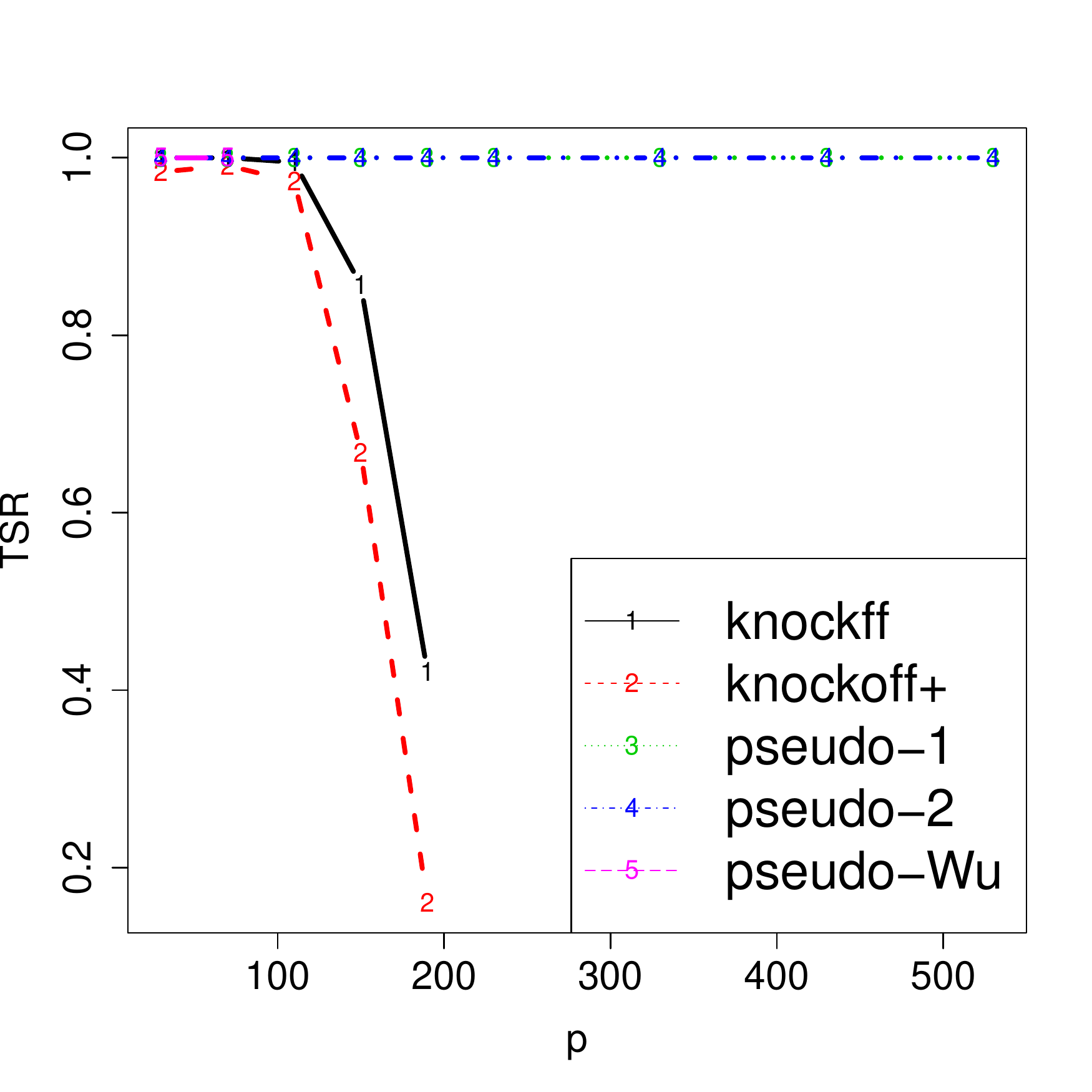}
			\caption{True selection rate vs $p$}
		\end{subfigure}
		\caption{Performances under different dimensions at
                  $\alpha = 0.2$. Left and right figure shows the average
                 FSR and TSR respectively. The Knockff and Knockoff methods require $n > p$ and the Wu's pseudo-variable method requires $n > 2p$. }
		\label{fig:dim}
	\end{figure}

	
	\begin{figure}
		\begin{subfigure}{.5\textwidth}
			\centering
			\includegraphics[scale = 0.35]{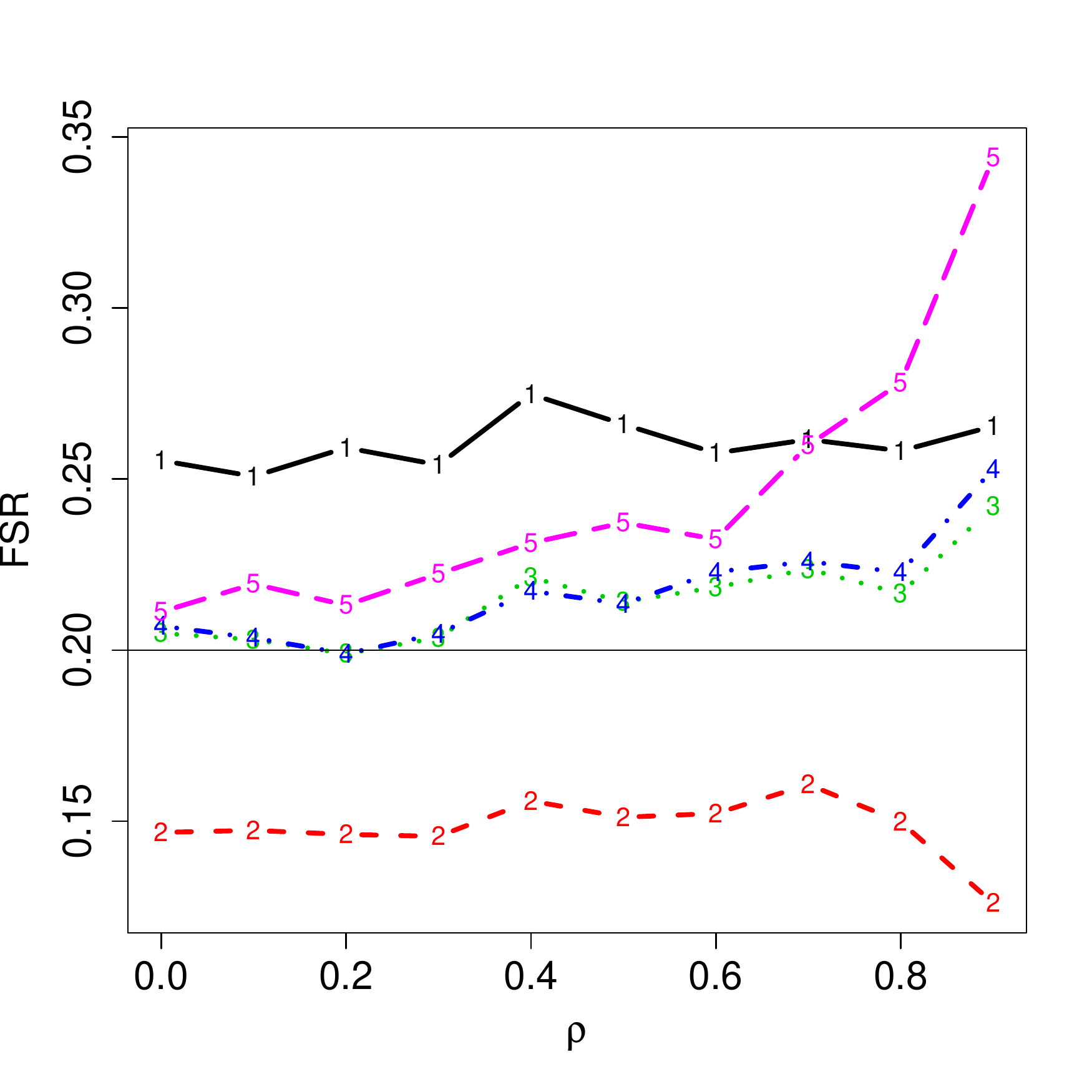}
			\caption{False selection rate vs $\rho$}
		\end{subfigure}%
		\begin{subfigure}{.5\textwidth}
			\centering
			\includegraphics[scale = 0.35]{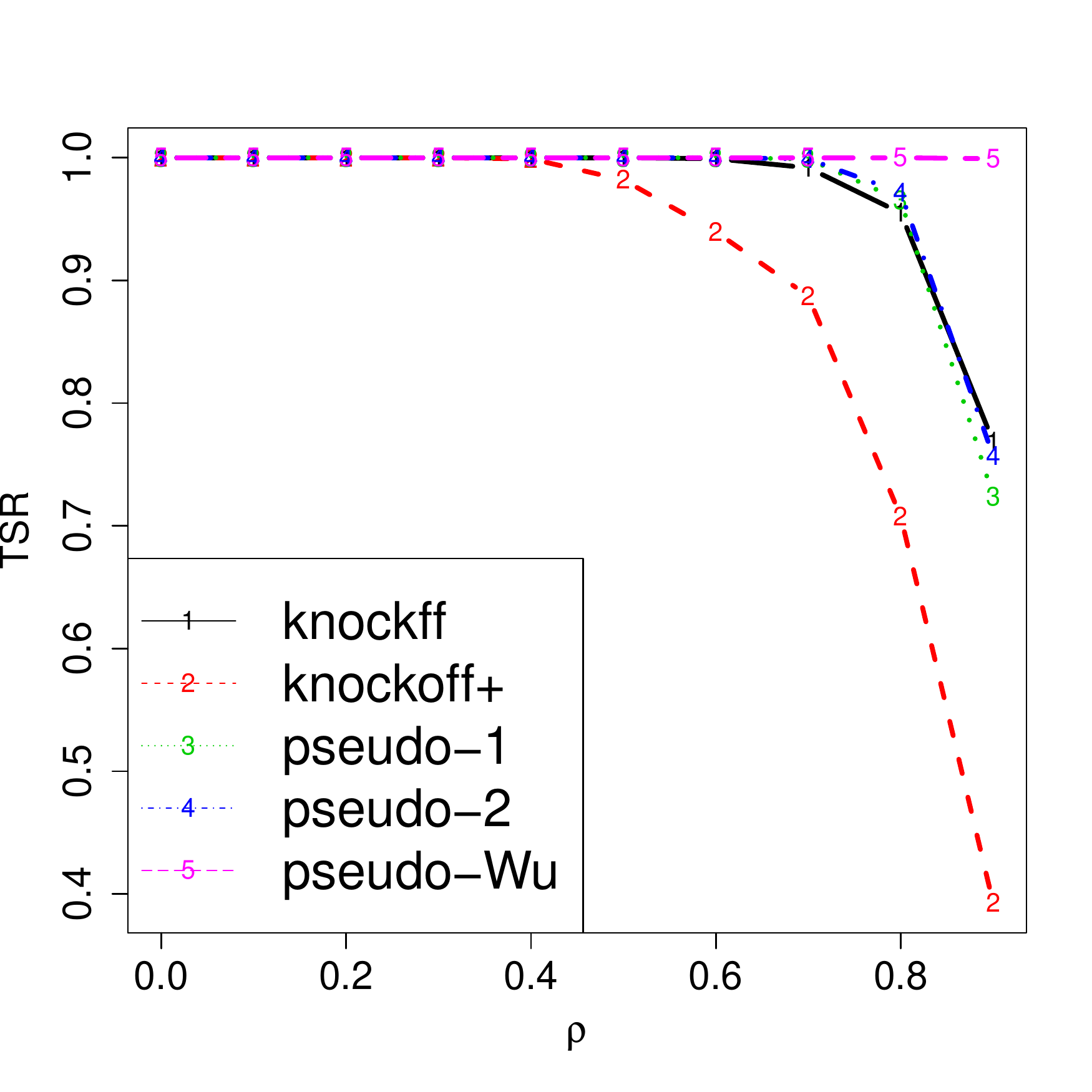}
			\caption{True selection rate vs $\rho$}
		\end{subfigure}
		\caption{Performances under different correlations at $\alpha = 0.2$. Left and right figure shows the average FSR and TSR respectively.}
		\label{fig:cor}
	\end{figure}

	\begin{figure}
		\begin{subfigure}{.5\textwidth}
			\centering
			\includegraphics[scale = 0.35]{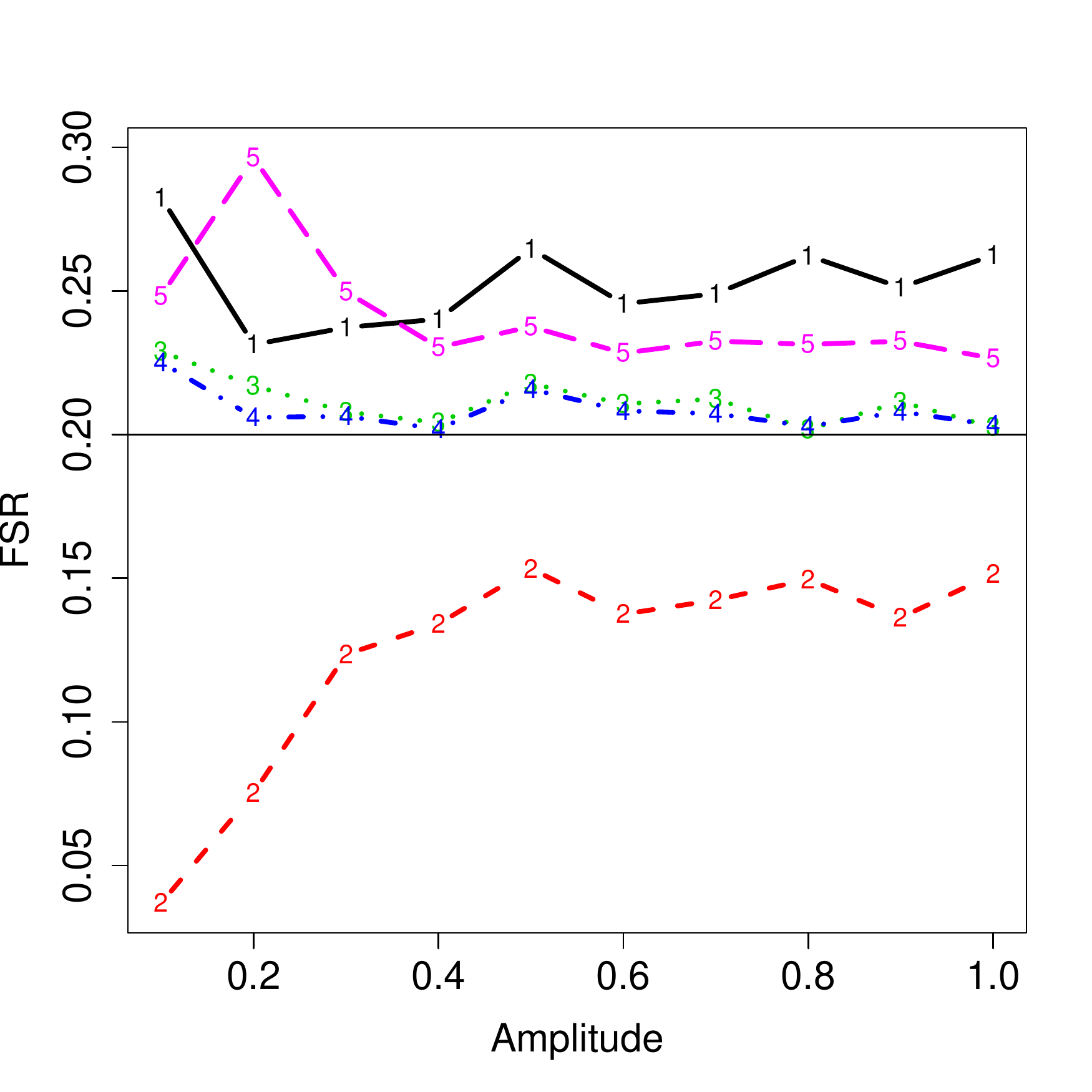}
			\caption{False selection rate vs $A$}
		\end{subfigure}%
		\begin{subfigure}{.5\textwidth}
			\centering
			\includegraphics[scale = 0.35]{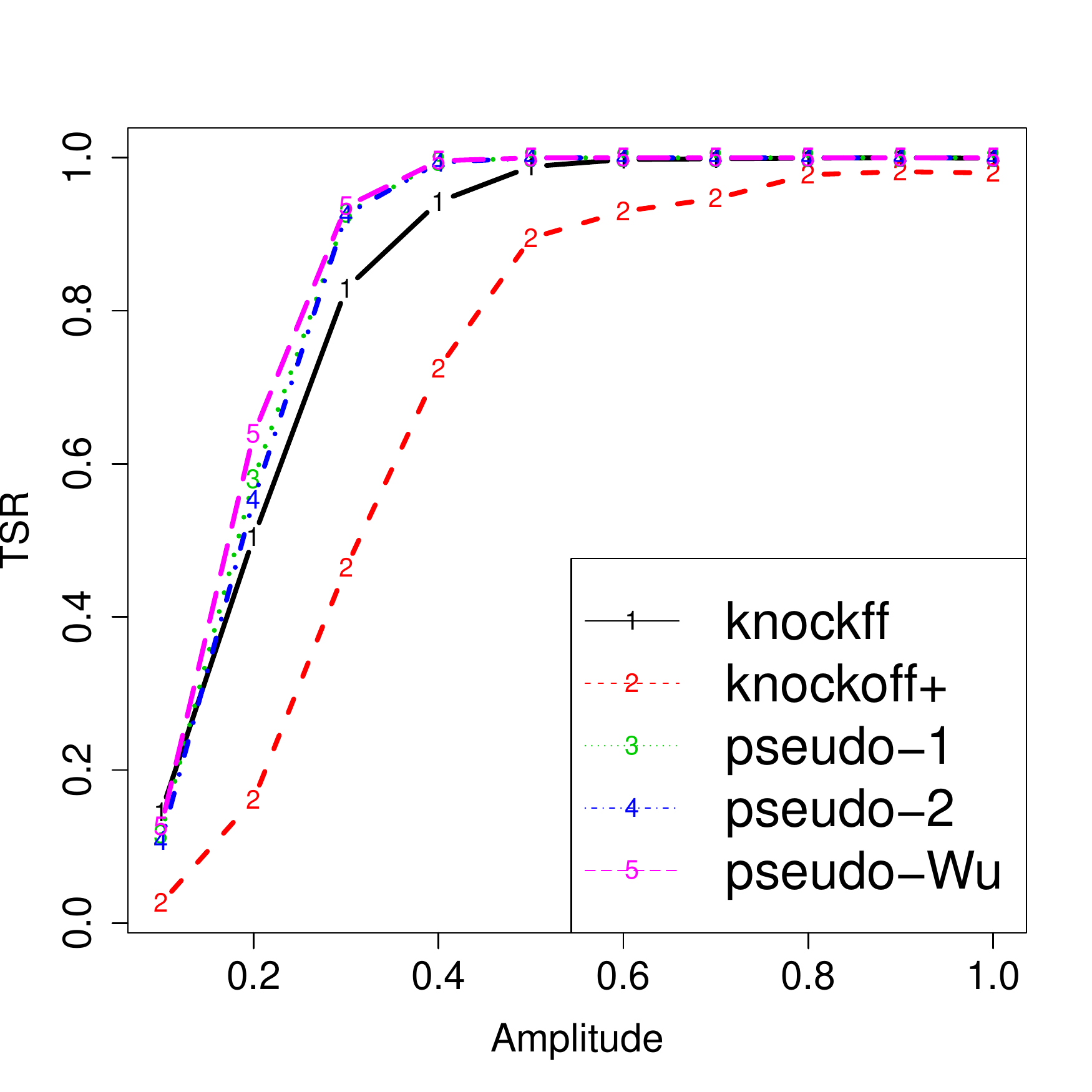}
			\caption{True selection rate vs $A$}
		\end{subfigure}
		\caption{Performances under different coefficient amplitude at $\alpha = 0.2$. Left and right figure shows the average FSR and TSR respectively.}
		\label{fig:amp}
	\end{figure}

	\begin{figure}
		\begin{subfigure}{.5\textwidth}
			\centering
			\includegraphics[scale = 0.35]{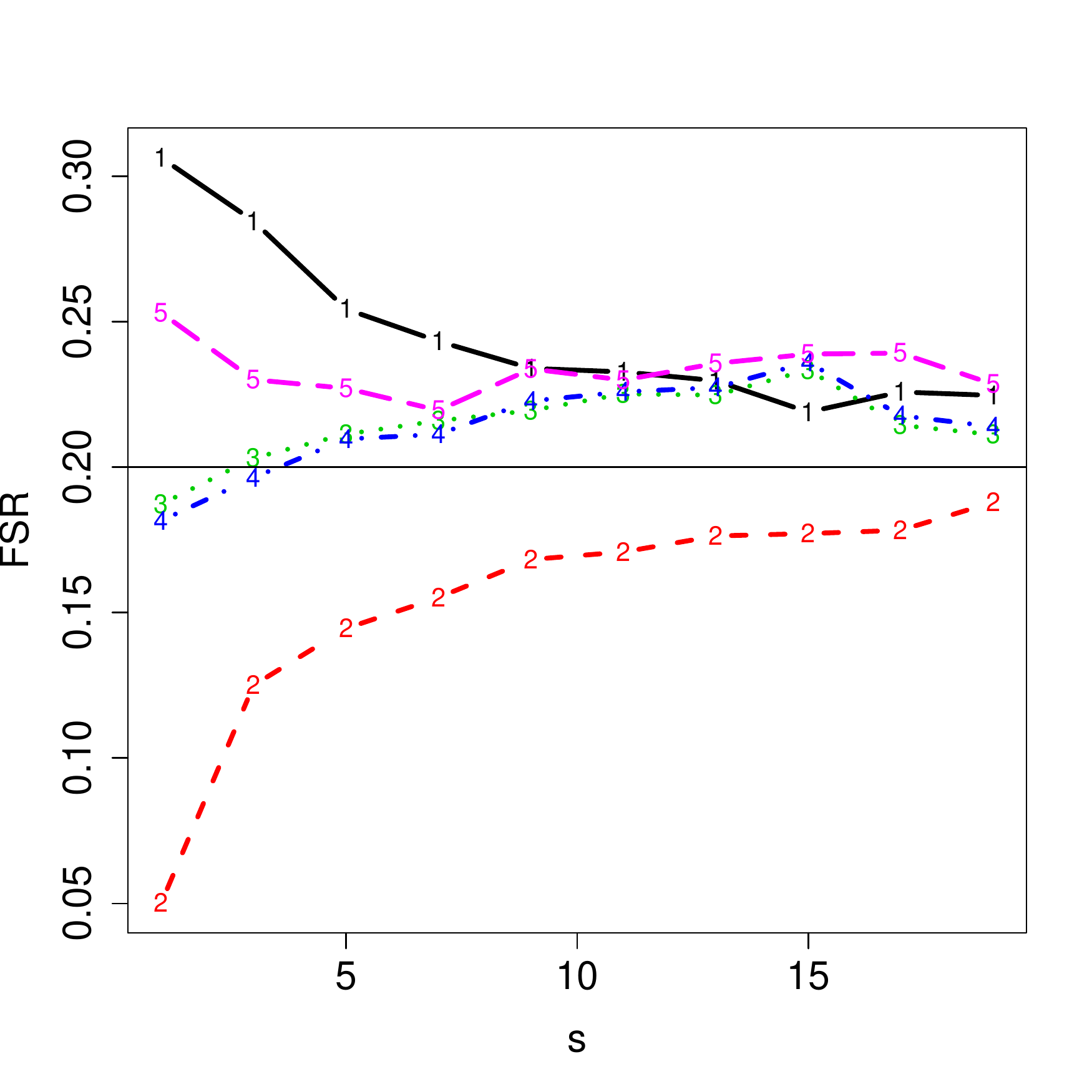}
			\caption{False selection rate vs $s$}
		\end{subfigure}%
		\begin{subfigure}{.5\textwidth}
			\centering
			\includegraphics[scale = 0.35]{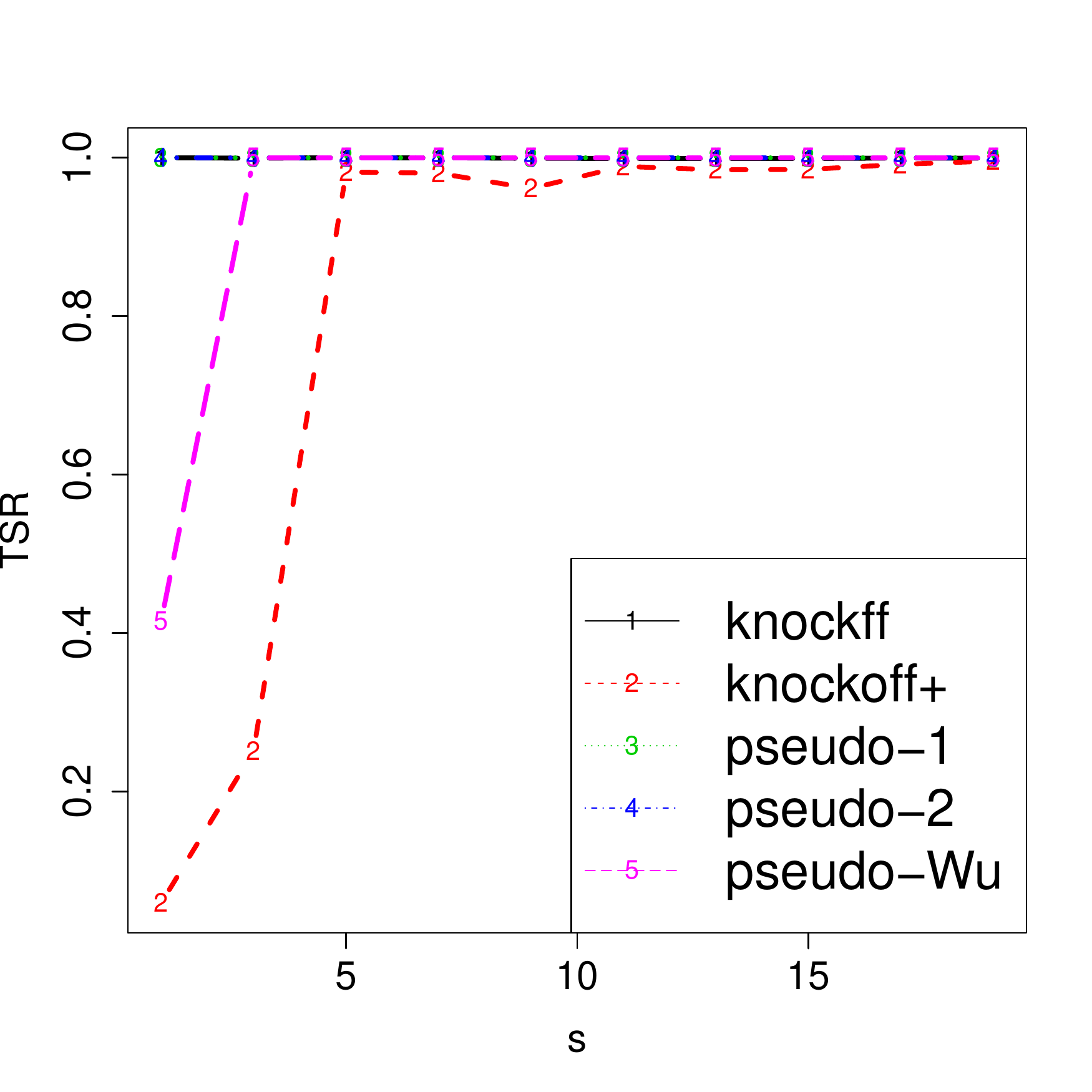}
			\caption{True selection rate vs $s$}
		\end{subfigure}
		\caption{Performances under different number of nonzero coefficients at $\alpha = 0.2$. Left and right figure shows the average FSR and TSR respectively.}
		\label{fig:spa}
	\end{figure}

\section{Illustrative examples}
\label{sec:real}
\subsection{Prostate cancer data}
Our first illustrative example uses data from a study of prostate specific
antigen (PSA) in $n=97$ prostate cancer patients \citep{stamey1989prostate}.
One of the goals of the study was to understand the relationship between PSA
and  eight biomarkers: log cancer volume (lcavol); log prostate weight
(lweight); age in years (age); log of the amount of benign prostatic
hyperplasia  (lbph); seminal vesicle invasion (svi); log of capsular
penetration (lcp); Gleason score (gleason); and percent of Gleason scores 4 or
5 (pgg45).

As in previous analyses, we fit a linear model for the regression of log PSA
(lpsa) on the preceding eight biomarkers.
We fit the model using the proposed pseudo-variable method with
screening done using the Lasso tuned using 10-fold cross validation,
and $B=100$ resamples. The Lasso solution
path is presented in Figure \ref{fig:prostate}. The vertical line on the left
corresponds to an estimated FSR of $\widehat{\alpha} = 0.1$; the associated model
has
three predictors:
lcavol, lweight, svi. The middle vertical line corresponds to an estimated
FSR of  $\widehat{\alpha} = 0.2$; the  associated model has four predictors:
lcavol, lweight, svi, pgg45.
The right vertical
line corresponds to an estimated FSR of $\widehat{\alpha}=0.3$; the
associated model has five predictors: lcavol, lweight, svi, pgg45, lbph.

\begin{figure}
    \centering
    \includegraphics[scale =0.5]{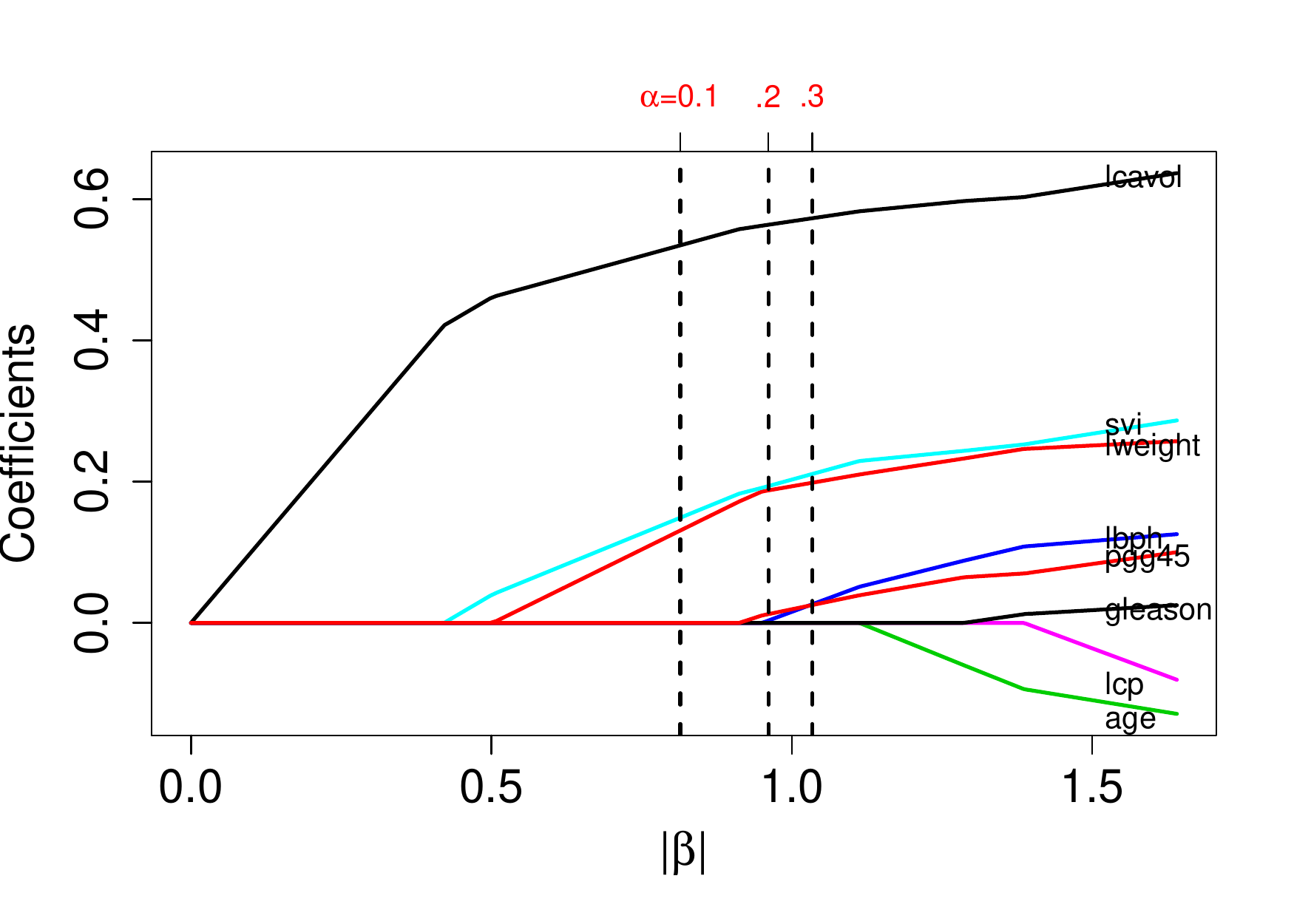}
    \caption{Lasso solution path for prostate cancer data.
          Vertical lines from left to right correspond to estimated
          FSRs of $\alpha=0.1$, $\alpha=0.2$, and $\alpha=0.3$.}
    \label{fig:prostate}
\end{figure}

\subsection{Leukemia cancer gene expression data}
Our second illustrative example used data from leukemia study
\citep[see][]{efron2012large}.  The primary outcome is binary
cancer type: acute myeloid leukemia (AML) or acute lymphoblastic
leukemia (ALL).  The goal is understand how gene expression data
relates to cancer type.  Expression levels are measured for
$p=7128$ genes on $n=72$ subjects.  Thus, this second example
demonstrates the use of the proposed method in the $p\gg n$ setting.


We fit a penalized logistic regression model for cancer type on gene
expression
levels. To estimate FSR, screening is done using the Lasso tuned using 10-fold cross validation, and we set $B = 100$. The Lasso solution
path with estimated FSRs is displayed in Figure
\ref{fig:prostate}. The vertical lines
on the figure, read from left to right, correspond to estimated FSRs
of $\widehat{\alpha} = 0.1, 0.2$ and $0.3$; it can be seen that these
correspond to four, six and eleven selected genes. The choice of
an appropriate model along this path should be dictated by the
costs associated with a false positive and other domain-specific
considerations.

\begin{figure}
    \centering
    \includegraphics[scale =0.5]{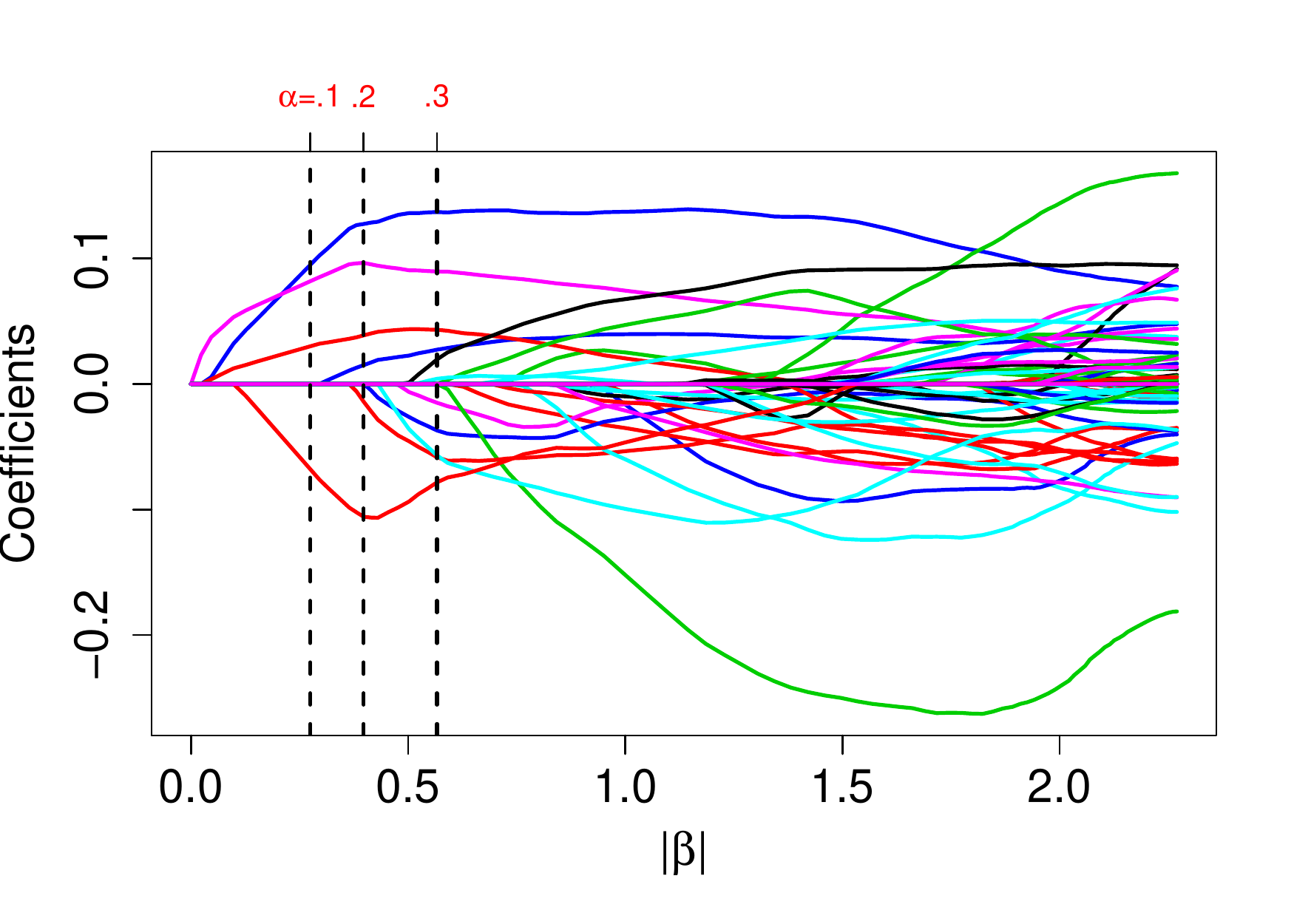}
    \caption{Lasso solution path for leukemia cancer gene expression data.
          Vertical lines from left to right correspond to estimated
          FSRs of $\alpha=0.1$, $\alpha=0.2$, and $\alpha=0.3$.}
    \label{fig:leukemia}
\end{figure}

\section{Conclusion}
\label{sec:conc}
We proposed a novel variable-addition method to estimate the FSR in penalized
regression.  The proposed method provides (asymptotically) unbiased
estimates of the FSR uniformly over the solution path even when $p \gg
n$.  The primary motivation for the proposed methodology was to
label the solution path with estimated operating characteristics that
are meaningful in a domain context.   While our focus was on
linear models with a fixed design, simulation results suggest broader
applicability.   Indeed, one of the appealing features of
variable-addition methods is that they can be applied (at least in
principle) to black-box models. Evaluation of the theoretical
properties of the proposed method to such models is a topic
for future research.

\newpage
\bigskip
\begin{center}
    {\large\bf SUPPLEMENTARY MATERIAL}
\end{center}

\begin{description}
    \item[Simulation results:] Additional simulation results are presented in the online supplement to this article.

    \item[Phony-variables algorithm for screening:] Details of pseudo-variables algorithm for screening with proof of selection consistency.

    \item[Proofs and technical details:] Detailed proofs are provided in the online supplement to this article.

    \item[R package:] A R package is provided in the online supplement to this article.

\end{description}

%
%
%
%

%

\bibliographystyle{apa}

\bibliography{pseudo}


\newpage
\section{Proof and Technical Details}
\sameCov*

\begin{proof}
        First for $n^{-1}(\mathbb{X}_{\widehat{A}_0})^\T\mathbb{X}_{\mathrm{pseudo}}$, we have  \begin{align*}
        & n^{-1}(\mathbb{X}_{\widehat{A}_0})^\T  \mathbb{X}_{\mathrm{pseudo}} \\
        & = n^{-1}(\mathbb{X}_{\widehat{A}_0})^\T\left\{  P_{\mathbb{X}_{\widehat{A}_0}} \mathbb{X}_{\widehat{A}_0^c} + \sqrt{n} V\Omega\right\} \\
        & = Q_{12}(A_0),
        \end{align*}
        where the last equality holds since $V$ is orthogonal to the column space of $\mathbb{X}_{\widehat{A}_0}$.

        Then,
        \begin{align*}
        & n^{-1}\mathbb{X}_{\mathrm{pseudo}} ^\T\mathbb{X}_{\mathrm{pseudo}}  \\
        & = n^{-1}\left\{  P_{\mathbb{X}_{\widehat{A}_0}} \mathbb{X}_{\widehat{A}_0^c} + \sqrt{n} V\Omega\right\}^\T \left\{  P_{\mathbb{X}_{\widehat{A}_0}} \mathbb{X}_{\widehat{A}_0^c} + \sqrt{n} V\Omega\right\} \\
        &= \left \{n^{-1}\mathbb{X}_{\widehat{A}_0^c}^\T P_{\mathbb{X}_{\widehat{A}_0}}P_{\mathbb{X}_{\widehat{A}_0}} \mathbb{X}_{\widehat{A}_0^c} \right \} + \left\{ \Omega^\T\Omega\right\} \\
        & = \left \{n^{-1}\mathbb{X}_{\widehat{A}_0^c}^\T \left [\mathbb{X}_{\widehat{A}_0} \{\mathbb{X}_{\widehat{A}_0}^\T\mathbb{X}_{\widehat{A}_0}\}^-\mathbb{X}_{\widehat{A}_0}^\T \right ] \mathbb{X}_{\widehat{A}_0^c}\right\} + \left\{ \Omega^\T\Omega\right\}\\
        &= Q_{21}Q_{11}^{-}Q_{12} + (Q_{22}-Q_{21}Q_{11}^{-}Q_{12}) \\
        &= Q_{22}(\widehat{A}_0).
        \end{align*}

    This completes the proof that
    \begin{align*}
    n^{-1}(\mathbb{X}_{\widehat{A}_0}, \mathbb{X}_{\mathrm{pseudo}})^\T (\mathbb{X}_{\widehat{A}_0}, \mathbb{X}_{\mathrm{pseudo}}) = \boldsymbol{\Sigma}.
    \end{align*}

    Then we know,
     $n^{-1/2}(\mathbb{X}_{\widehat{A}_0}, \mathbb{X}_{\mathrm{pseudo}})^T\boldsymbol{\epsilon}$ follows a multivariate normal distribution with mean $\mathbf{0}$ and covariance $\boldsymbol{\Sigma}$, which is the same as distribution of $n^{-1/2}(\mathbb{X}_{A_0}, \mathbb{X}_{{\widehat{A}_0}^c})\boldsymbol{\epsilon}$. Since the Lasso solution only depends on $\boldsymbol{\Sigma}$ and $(\mathbb{X}_{\widehat{A}_0}, \mathbb{X}_{\mathrm{pseudo}}^T\boldsymbol{\epsilon})$,  we have that $\left (\widehat{I}_n\left(\lambda\right), \widehat{U}_n(\lambda) \right)$  and $\left (\widehat{I}_n^{\mathrm{new}}\left(\lambda\right), \widehat{U}_n^{\mathrm{new}}(\lambda) \right)$ are identically distributed.
\end{proof}

\errEst*
\begin{proof}
    First, we have
    \begin{align}
    \label{prth1eq1}
    \sup_{\lambda}  \left |E \left\{ g \left(\widehat{I}_n^{\mathrm{new}}(\lambda), \widehat{U}_n^{\mathrm{new}}(\lambda) \right) \mid \widehat{A}_{0, n} = A_0\right\} -  E \left\{ g \left(\widehat{I}_n^{\mathrm{new}}(\lambda), \widehat{U}_n^{\mathrm{new}}(\lambda)\right) \right\} \right |= o(1).
    \end{align}
    This follows from
    \begin{align*}
    &  E \left\{ g \left(\widehat{I}_n^{\mathrm{new}}(\lambda), \widehat{U}_n^{\mathrm{new}}(\lambda)\right) \right\} \\
    &= E \left\{ g \left(\widehat{I}_n^{\mathrm{new}}(\lambda), \widehat{U}_n^{\mathrm{new}}(\lambda) \right) \mid \widehat{A}_{0, n} = A_0\right\} P(\widehat{A}_{0, n} = A_0) \\ & +E \left\{ g \left(\widehat{I}_n^{\mathrm{new}}(\lambda), \widehat{U}_n^{\mathrm{new}}(\lambda) \right) \mid \widehat{A}_{0, n} \neq  A_0\right\}P(\widehat{A}_{0, n} \neq A_0),
    \end{align*}
    and $P(\widehat{A}_{0, n} = A_0)  = 1 -o(1)$.

    Then, suppose modifying screening step to be deterministic, one constructs pseudo-variables always using the true active set, i.e., $\mathbb{X}_{\mathrm{pseudo}} =  P_{\mathbb{X}_{A_0}}\mathbb{X}_{A_0^c} + \sqrt{n} V(A_0) \Omega(A_0)$. Denote the corresponding number of important variables at $\lambda$ as $\widehat{I}_n^{1}(\lambda)$, the number of unimportant variables at $\lambda$ as $\widehat{U}_n^{1}(\lambda)$. Then we have
    \begin{align}
    \label{prth1eq2}
    \sup_{\lambda}  \left |E \left\{ g \left(\widehat{I}_n^{\mathrm{new}}(\lambda), \widehat{U}_n^{\mathrm{new}}(\lambda) \right) \mid \widehat{A}_{0, n} = A_0\right\}  -  E \left\{ g \left(\widehat{I}_n^{1}(\lambda), \widehat{U}_n^{1}(\lambda) \right) \right\}\right |= o(1).
    \end{align}
    And by Lemma \ref{lem:sameCov}
    \begin{align}
    \label{prth1eq3}
    \sup_{\lambda}  \left |E \left\{ g \left(\widehat{I}_n(\lambda), \widehat{U}_n(\lambda) \right) \right\} -  E \left\{ g \left(\widehat{I}_n^{1}(\lambda), \widehat{U}_n^{1}(\lambda) \right) \right\}\right |= o(1).
    \end{align}
    Therefore, the results follow by combining eq. (\ref{prth1eq1}), (\ref{prth1eq2}) and (\ref{prth1eq3}).
\end{proof}

%
\disConv*
\MultierrEst*
\begin{proof}
Corollary \ref{cor:disConv} and Corollary \ref{cor:MultierrEst} are immediate results from Theorem \ref{thm:errEst}.
\end{proof}

\sol*
\begin{proof}
Sufficient condition is proven by similar argument of Lemma \ref{lem:sameCov}. It is remaining to prove necessary condition.
First, we should have
$\mathbb{X}_{S}^\T \mathbb{X}_{\mathrm{pseudo}} = \mathbb{X}_{S}^\T \mathbb{X}_{S^c}$. Therefore, $\mathbb{X}_{\mathrm{pseudo}}$ should have the form $\mathbb{X}_{\mathrm{pseudo}} = \mathbb{X}_{S^c} + V_\epsilon$, where $V$ is a matrix orthogonal to column space of $\mathbb{X}_{S}$. Then
\begin{align*}
\mathbb{X}_{\mathrm{pseudo}} &= \mathbb{X}_{S^c} + V_\epsilon \\
&=  \mathbb{X}_{S^c} + (I - P_{\mathbb{X}_{S}})\mathbb{X}_{S^c} + V_\epsilon - (I - P_{\mathbb{X}_{S}})\mathbb{X}_{S^c} \\
&= P_{\mathbb{X}_{S}}\mathbb{X}_{S^c} + V^*_\epsilon,
\end{align*}
where $V^*_\epsilon$ is a matrix orthogonal to column space of $\mathbb{X}_{S}$. Express $V^*_\epsilon$ to be $V^*_\epsilon = \sqrt{n}V A$, where $V$ is an orthonormal matrix that orthogonal to column space of $\mathbb{X}_{S}$ and $A$ to be any matrix with right dimension.

Then we will prove $A^TA = (Q_{22} - Q_{21}Q_{11}^-Q_{12})$. To satisfy condition that
$\mathbb{X}_{\mathrm{pseudo}}^\T \mathbb{X}_{\mathrm{pseudo}} = \mathbb{X}_{S^c}^\T \mathbb{X}_{S^c}$. Namely, $\left (P_{\mathbb{X}_{S}}\mathbb{X}_{S^c} + \sqrt{n}V A \right )^\T \left (P_{\mathbb{X}_{S}}\mathbb{X}_{S^c} + \sqrt{n} V A \right ) =  \mathbb{X}_{S^c}^\T \mathbb{X}_{S^c}$. By simple calculation, we have
$n A^\T A = \mathbb{X}_{S^c}^\T (I - P_{\mathbb{X}_{S}}) \mathbb{X}_{S^c}$. Therefore, $A^TA = (Q_{22} - Q_{21}Q_{11}^-Q_{12})$.

\end{proof}

\subsection{Proof of error rate estimation with permutation added}
Suppose that fitting Lasso with $(\mathbb{X}, G \mathbb{X}_{\widehat{A}_{0, n}}, Y)$, denote the corresponding number of unimportant variables and important variables at $\lambda$ as $\widehat{U}^*_n(\lambda)$ and $\widehat{I}^*_n(\lambda)$ respectively. Assume $ E [ U^*_n(\lambda) /
\max\{U^*_n(\lambda) + I^*_n(\lambda), 1\} ] \geq  E [ \widehat{U}_n /
\max\{\widehat{U}_n + \widehat{I}_n, 1\} ]$, then we have the following result
\begin{proposition}
    \label{pro:perm}
If $\lim_{n\rightarrow \infty} P(\widehat{A}_{0,n} = A_0) = 1$, then
$\sup_{\lambda}  \left\{E(\widehat{p}_n(\lambda)) - p_n(\lambda) \right\} \geq o(1)$.
\end{proposition}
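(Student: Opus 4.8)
The plan is to follow the three-step template used in the proof of Theorem~\ref{thm:errEst}, treating the permuted columns $G\mathbb{X}_{\widehat{A}_{0,n}}$ as an extra block of ``false'' candidates and then closing the argument with the stated inequality hypothesis. Write $g(v,w)=v/\max(v+w,1)$, so that the permutation-augmented estimator is $\widehat{p}_n(\lambda)=g(\widehat{I}^{\mathrm{new}}_n(\lambda),\widehat{U}^{\mathrm{new}}_n(\lambda))$ with the selected permuted columns folded into the false count $\widehat{U}^{\mathrm{new}}_n$, while $p_n(\lambda)=E\{g(\widehat{I}_n(\lambda),\widehat{U}_n(\lambda))\}$. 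First I would reduce to the event of correct screening: since $g$ is bounded by $1$ and $P(\widehat{A}_{0,n}=A_0)=1-o(1)$, the decomposition used in eq.~(\ref{prth1eq1}) gives
$$\sup_{\lambda}\left|E\{\widehat{p}_n(\lambda)\}-E\{\widehat{p}_n(\lambda)\mid \widehat{A}_{0,n}=A_0\}\right|=o(1),$$
so it suffices to control the conditional expectation on $\{\widehat{A}_{0,n}=A_0\}$.

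The crux is the comparison on $\{\widehat{A}_{0,n}=A_0\}$. There the estimator is built from the design $(\mathbb{X}_{A_0},\mathbb{X}_{\mathrm{pseudo}},G\mathbb{X}_{A_0})$, whereas the quantities $(\widehat{I}^*_n,\widehat{U}^*_n)$ in the hypothesis arise from $(\mathbb{X}_{A_0},\mathbb{X}_{A_0^c},G\mathbb{X}_{A_0})$. I would show these two designs induce identically distributed selection counts up to $o(1)$. Because the Lasso active set is a function of the Gram matrix and of $(\cdot)^{\T}\mathbb{Y}$, I would compare both objects block by block. The $\mathbb{X}_{A_0}$--$\mathbb{X}_{A_0}$, $\mathbb{X}_{A_0}$--middle, and middle--middle Gram blocks coincide by Lemma~\ref{lem:sameCov} (the pseudo-variables reproduce $Q_{12}$ and $Q_{22}$ exactly); the permuted block $G\mathbb{X}_{A_0}$ is common to both; and the mean/noise decomposition of $(\cdot)^{\T}\mathbb{Y}$ matches in distribution by the same multivariate-normal argument invoked for Lemma~\ref{lem:sameCov}. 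The only block that differs is the cross-covariance between the middle variables and the permuted columns, $n^{-1}\mathbb{X}_{\mathrm{pseudo}}^{\T}G\mathbb{X}_{A_0}$ versus $n^{-1}\mathbb{X}_{A_0^c}^{\T}G\mathbb{X}_{A_0}$.

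This cross-covariance is the main obstacle, and I expect to dispose of it using the randomness of $G$. Writing $\mathbb{X}_{\mathrm{pseudo}}-\mathbb{X}_{A_0^c}=-(I-P_{\mathbb{X}_{A_0}})\mathbb{X}_{A_0^c}+\sqrt{n}\,V(A_0)\Omega(A_0)$, both summands are orthogonal to $\mathbf{1}$ whenever the intercept lies in the span of $\mathbb{X}_{A_0}$, so that $E_G\{n^{-1}(\mathbb{X}_{\mathrm{pseudo}}-\mathbb{X}_{A_0^c})^{\T}G\mathbb{X}_{A_0}\}=0$ and the permutation-induced fluctuations are $O_P(n^{-1/2})$. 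Hence the two Gram matrices agree up to a vanishing perturbation, and a coupling/continuity argument on the active set as a function of the Gram matrix (the device already implicit in Lemma~\ref{lem:sameCov}) transfers this to the integer-valued selection counts, yielding $E\{\widehat{p}_n(\lambda)\mid\widehat{A}_{0,n}=A_0\}=E\{g(\widehat{I}^*_n(\lambda),\widehat{U}^*_n(\lambda))\}+o(1)$ uniformly in $\lambda$.

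Combining the two reductions with the hypothesis $E\{g(\widehat{I}^*_n(\lambda),\widehat{U}^*_n(\lambda))\}\ge p_n(\lambda)$ gives $E\{\widehat{p}_n(\lambda)\}\ge p_n(\lambda)+o(1)$ uniformly, whence $\sup_{\lambda}\{E(\widehat{p}_n(\lambda))-p_n(\lambda)\}\ge o(1)$. The delicate points to check are the uniformity in $\lambda$ of the count comparison and the precise sense in which the $O_P(n^{-1/2})$ Gram perturbation propagates to the discontinuous selection counts; the bounded-$g$ and dominated-convergence bookkeeping mirrors that of Theorem~\ref{thm:errEst}, so the genuinely new content is isolating and bounding the pseudo-versus-permuted cross term.
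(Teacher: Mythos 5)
Your overall route is the paper's: condition on the event $\{\widehat{A}_{0,n}=A_0\}$, argue that on that event the permutation-augmented pseudo design and the design $(\mathbb{X}, G\mathbb{X}_{\widehat{A}_{0,n}})$ yield selection counts with the same asymptotic distribution, and then close with the assumed inequality $E[\widehat{U}^*_n(\lambda)/\max\{\widehat{U}^*_n(\lambda)+\widehat{I}^*_n(\lambda),1\}]\geq p_n(\lambda)$. The paper's own proof is exactly this, compressed to ``by similar argument with the proof of Theorem~\ref{thm:errEst}''; it never mentions the cross-Gram block $n^{-1}\mathbb{X}_{\mathrm{pseudo}}^{\T}G\mathbb{X}_{A_0}$ versus $n^{-1}\mathbb{X}_{A_0^c}^{\T}G\mathbb{X}_{A_0}$. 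You are right that this block is the one place where the two Gram matrices genuinely differ, so your diagnosis of where the real work lies is more candid than the paper's.

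The gap is in your proposed repair of that block. Lemma~\ref{lem:sameCov} works because the two Gram matrices are \emph{exactly} equal, so the two Lasso problems coincide up to an equality in distribution of the Gaussian term; there is no perturbation or continuity device ``implicit'' in it. Replacing exact equality by an $O_P(n^{-1/2})$ discrepancy in one block does not automatically transfer to the integer-valued active-set counts: the Lasso active set is a discontinuous function of the Gram matrix, and at the knots of the solution path it changes under arbitrarily small perturbations, so a uniform-in-$\lambda$ comparison of $E\{g(\widehat I,\widehat U)\}$ needs a margin or genericity argument (e.g., strict dual feasibility outside a $\lambda$-null set) that you assert but do not supply. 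Your computation $E_G\{n^{-1}(\mathbb{X}_{\mathrm{pseudo}}-\mathbb{X}_{A_0^c})^{\T}G\mathbb{X}_{A_0}\}=0$ also silently requires the columns of $(I-P_{\mathbb{X}_{A_0}})\mathbb{X}_{A_0^c}$ and of $V(A_0)\Omega(A_0)$ to be orthogonal to $\mathbf{1}$, i.e., an intercept in the span of $\mathbb{X}_{A_0}$, which is not among the hypotheses. In short: correct skeleton, correct identification of the obstruction, but the step that disposes of the obstruction is the genuinely new content and it remains unproved --- it is also a step the paper itself omits rather than resolves.
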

\begin{proof}
By similar argument with the proof of Theorem \ref{thm:errEst}, we have $$\sup_{\lambda}  \left[ E\left (\widehat{p}_n(\lambda) \right ) - E\left \{\frac{\widehat{U}^*_n(\lambda)}{ \max(\widehat{U}^*_n(\lambda) + \widehat{I}^*_n(\lambda), 1)} \right \} \right] =  o(1).$$ Then combining with the assumption above, we have $\sup_{\lambda}  \left\{E(\widehat{p}_n(\lambda)) - p_n(\lambda) \right\} \geq o(1)$.
\end{proof}
\begin{remark}
The assumption  $ E [ U^*_n(\lambda) /
\max\{U^*_n(\lambda) + I^*_n(\lambda), 1\} ] \geq  E [ \widehat{U}_n /
\max\{\widehat{U}_n + \widehat{I}_n, 1\} ]$ implies that FSR is higher if more unimportant variables are used to fit Lasso. One can easily verify the assumption if $G\mathbb{X}_{A_0}$ is orthogonal to $\mathbb{X}$.
\end{remark}

To remove the assumption    $ E [ U^*_n(\lambda) /
\max\{U^*_n(\lambda) + I^*_n(\lambda), 1\} ] \geq  E [ \widehat{U}_n /
\max\{\widehat{U}_n + \widehat{I}_n, 1\} ]$, one can modify the estimator as $\widehat{p}_n(\lambda) = \max \{\widehat{p}_{n, 1}(\lambda), \{\widehat{p}_{n, 2}(\lambda)\}$, where $\widehat{p}_{n, 1}(\lambda)$ and $\widehat{p}_{n, 2}(\lambda)$ is the FSR estimated at $\lambda$ using the method with and without permutation added respectively. Then $\sup_{\lambda}  \left\{E(\widehat{p}_n(\lambda)) - p_n(\lambda) \right\} \geq o(1)$ follows from $$\sup_{\lambda}  \left[ E(\widehat{p}_{n,2}(\lambda)) - E \left \{ \widehat{U}_n /
\max\{\widehat{U}_n + \widehat{I}_n, 1\} \right \} \right] \geq  o(1).$$ However, it may double the computational complexity, while no significant benefits is observed in the simulation studies since $\widehat{p}_{n, 1}(\lambda)$ is usually bigger than  $\widehat{p}_{n, 2}(\lambda)$.
\newpage
\section{Simulation results for $\alpha = 0.1, 0.3$}
\begin{figure}[!h]
        \begin{subfigure}{.5\textwidth}
            \centering
            \includegraphics[scale = 0.34]{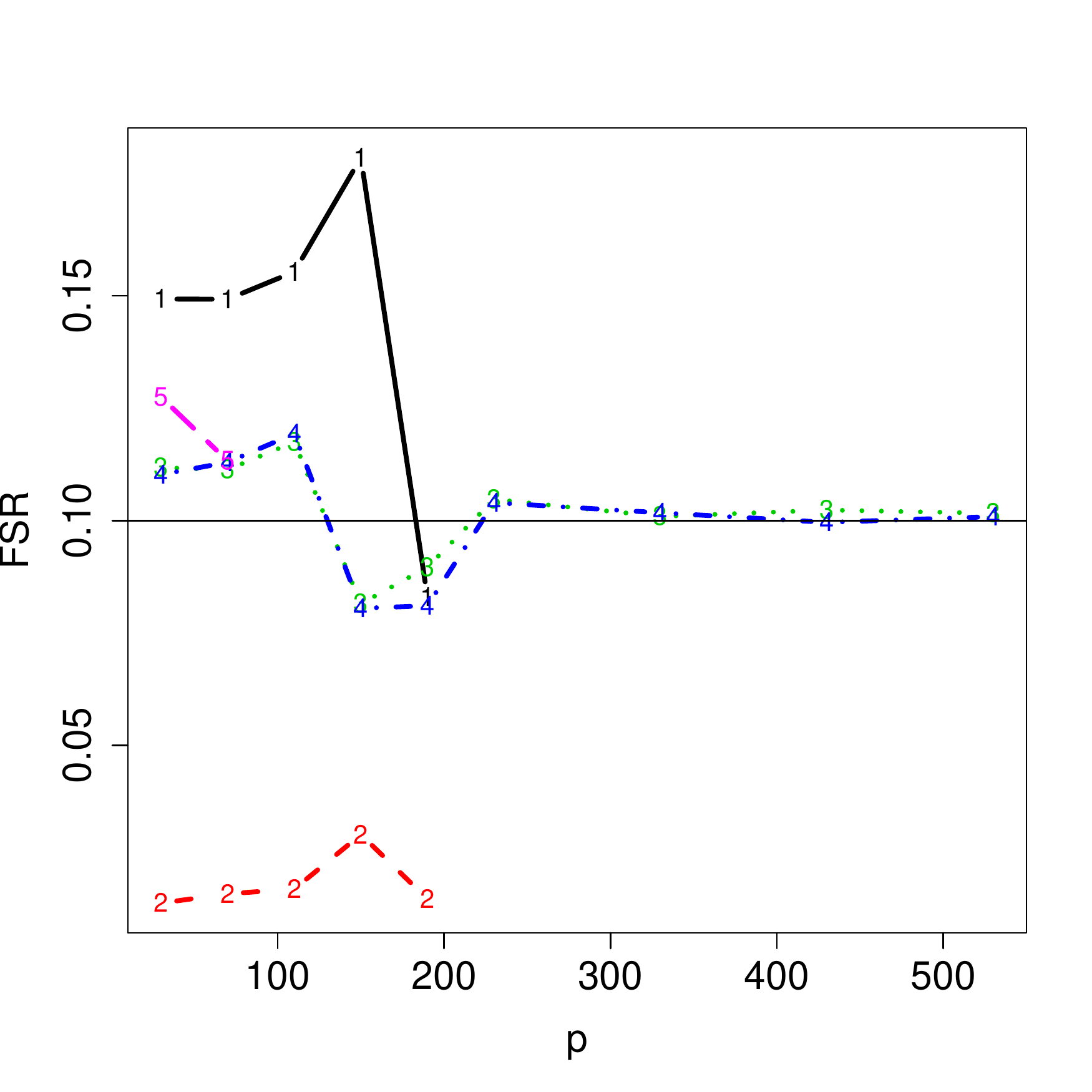}
            \caption{False selection rate vs $p$}
        \end{subfigure}%
        \begin{subfigure}{.5\textwidth}
            \centering
            \includegraphics[scale = 0.34]{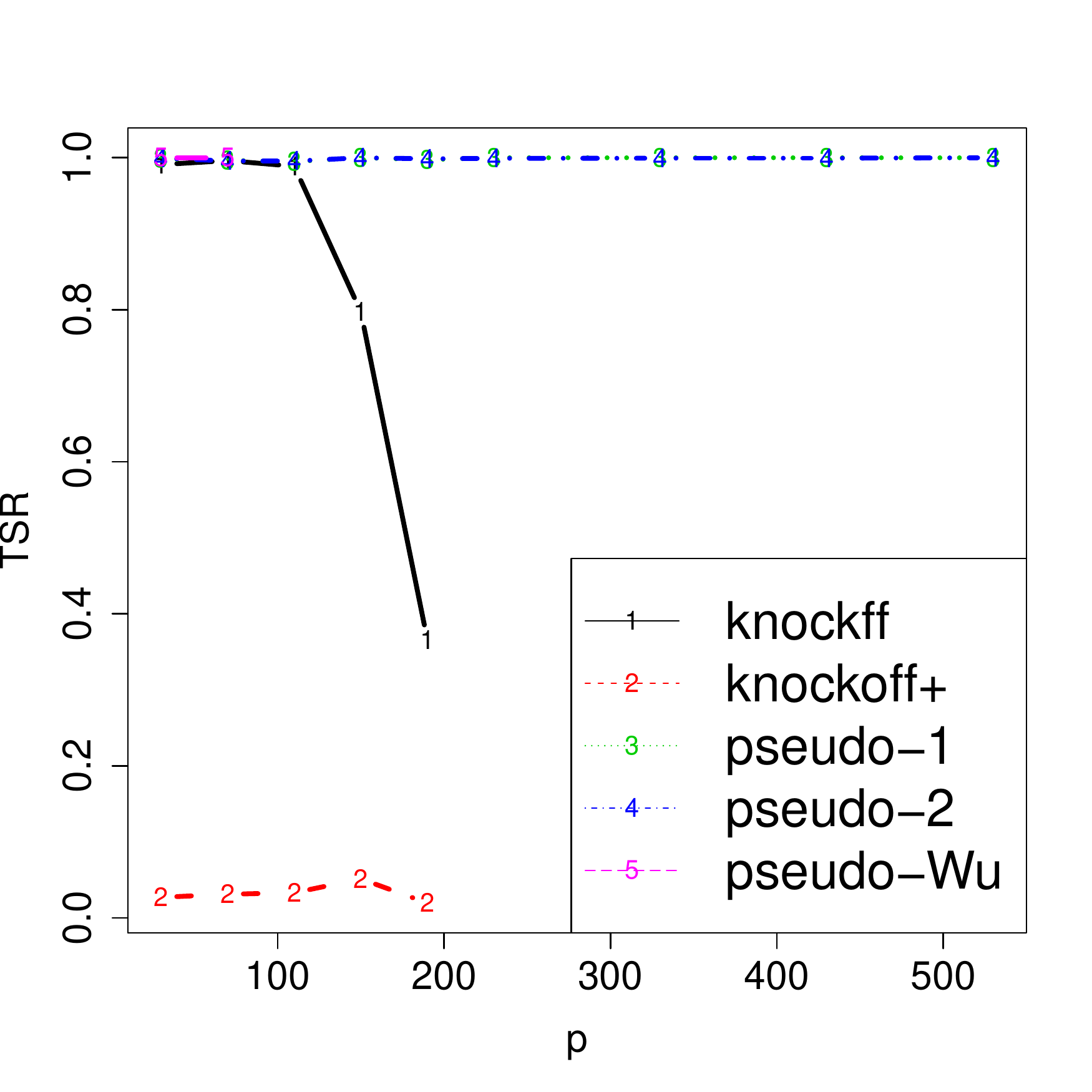}
            \caption{True selection rate vs $p$}
        \end{subfigure}
        \caption{Penalized regression; Performances under different dimensions at $\alpha = 0.1$.}
        \label{fig:dim2}
    \end{figure}

    \begin{figure}[!h]
        \begin{subfigure}{.5\textwidth}
            \centering
            \includegraphics[scale = 0.34]{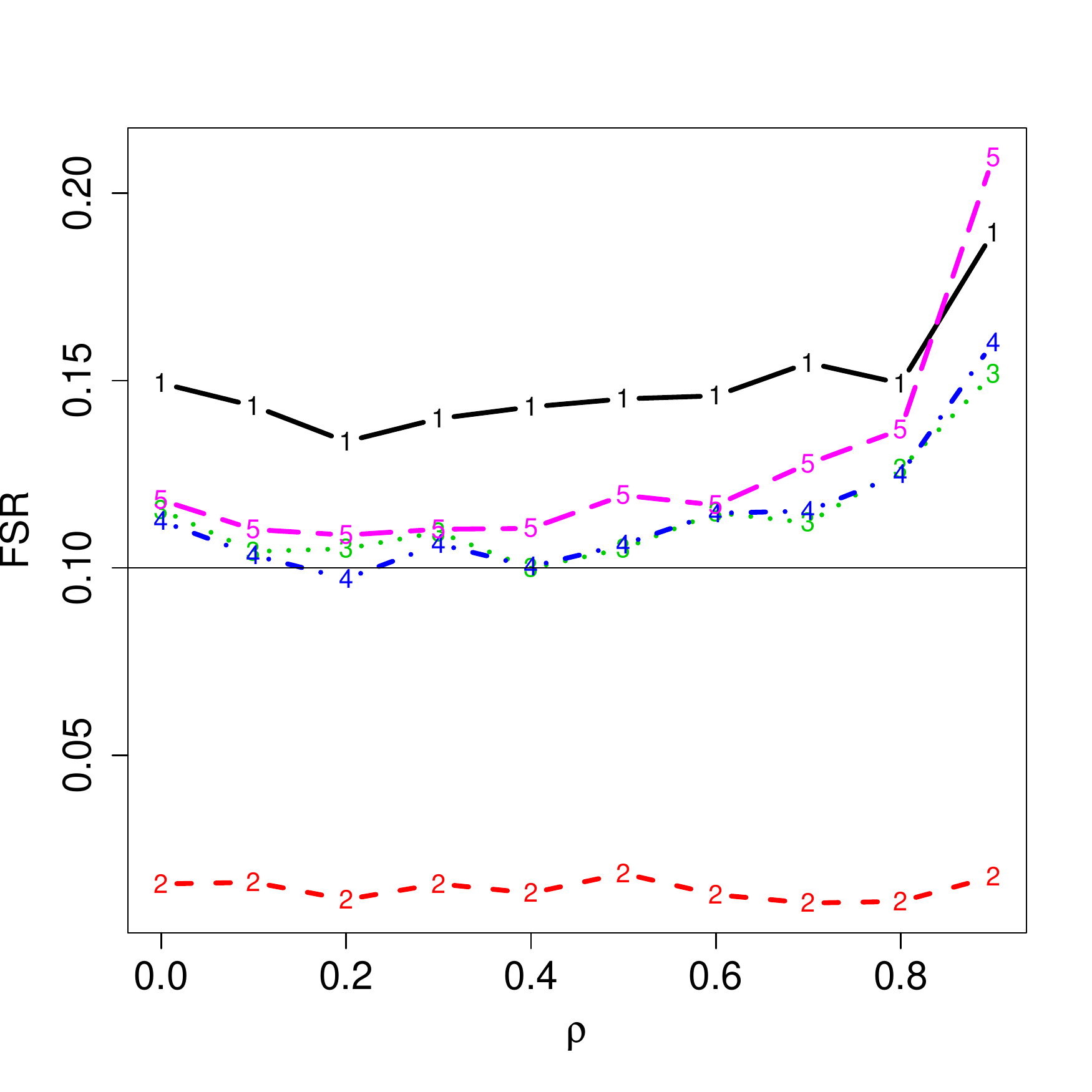}
            \caption{False selection rate vs $\rho$}
        \end{subfigure}%
        \begin{subfigure}{.5\textwidth}
            \centering
            \includegraphics[scale = 0.34]{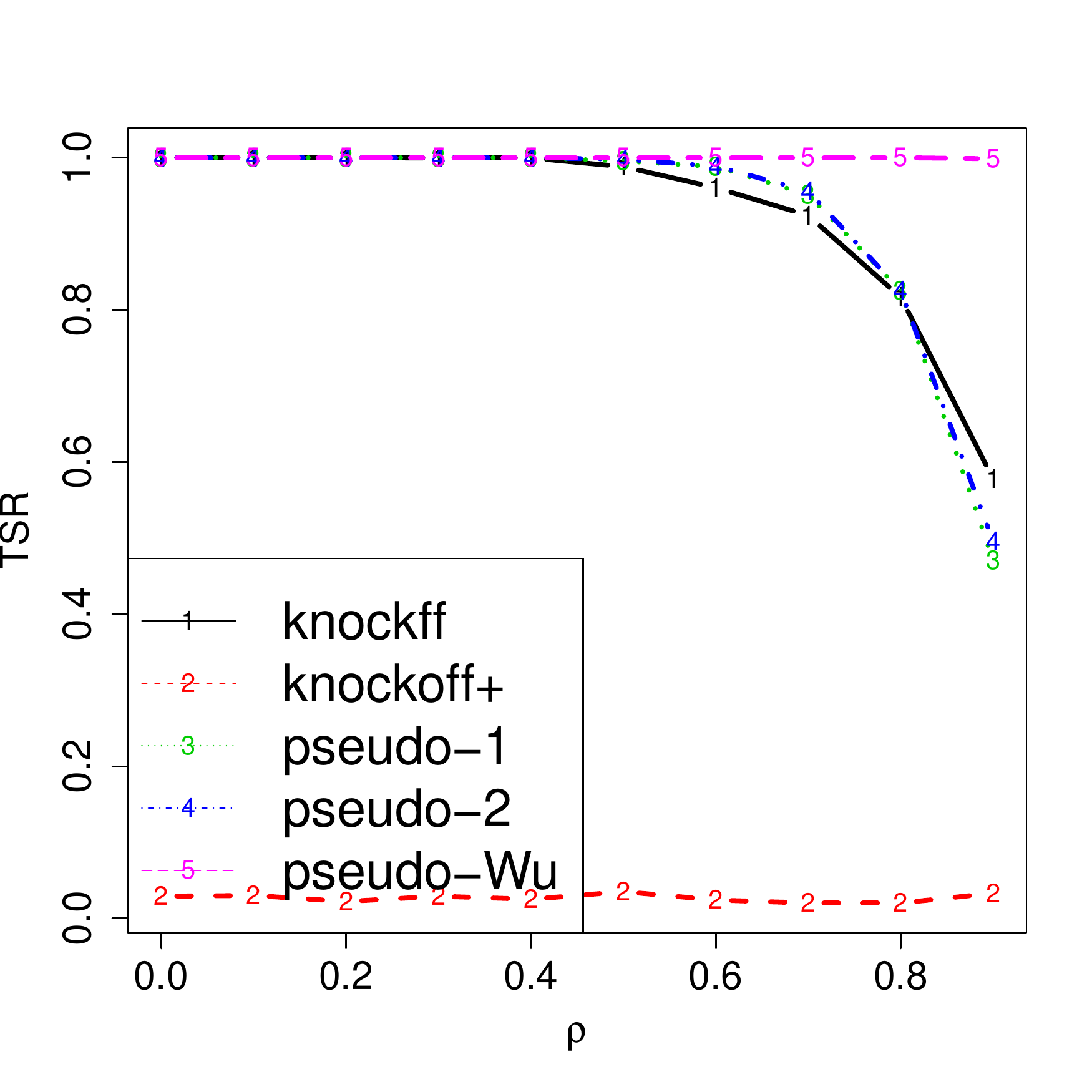}
            \caption{True selection rate vs $\rho$}
        \end{subfigure}
        \caption{Penalized regression; Performances under different correlations at $\alpha = 0.1$.}
        \label{fig:cor2}
    \end{figure}

    \begin{figure}[!h]
        \begin{subfigure}{.5\textwidth}
            \centering
            \includegraphics[scale = 0.34]{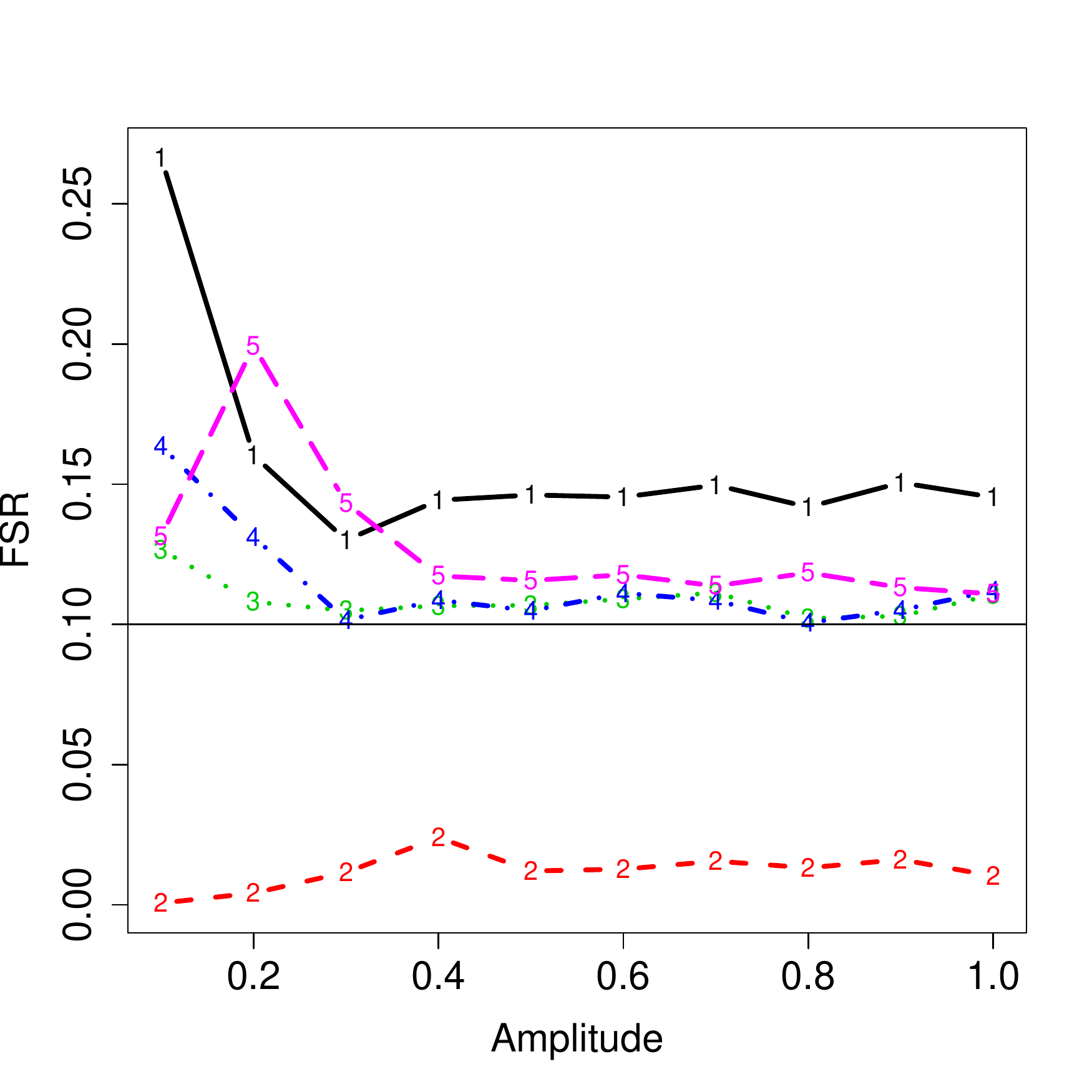}
            \caption{False selection rate vs $A$}
        \end{subfigure}%
        \begin{subfigure}{.5\textwidth}
            \centering
            \includegraphics[scale = 0.34]{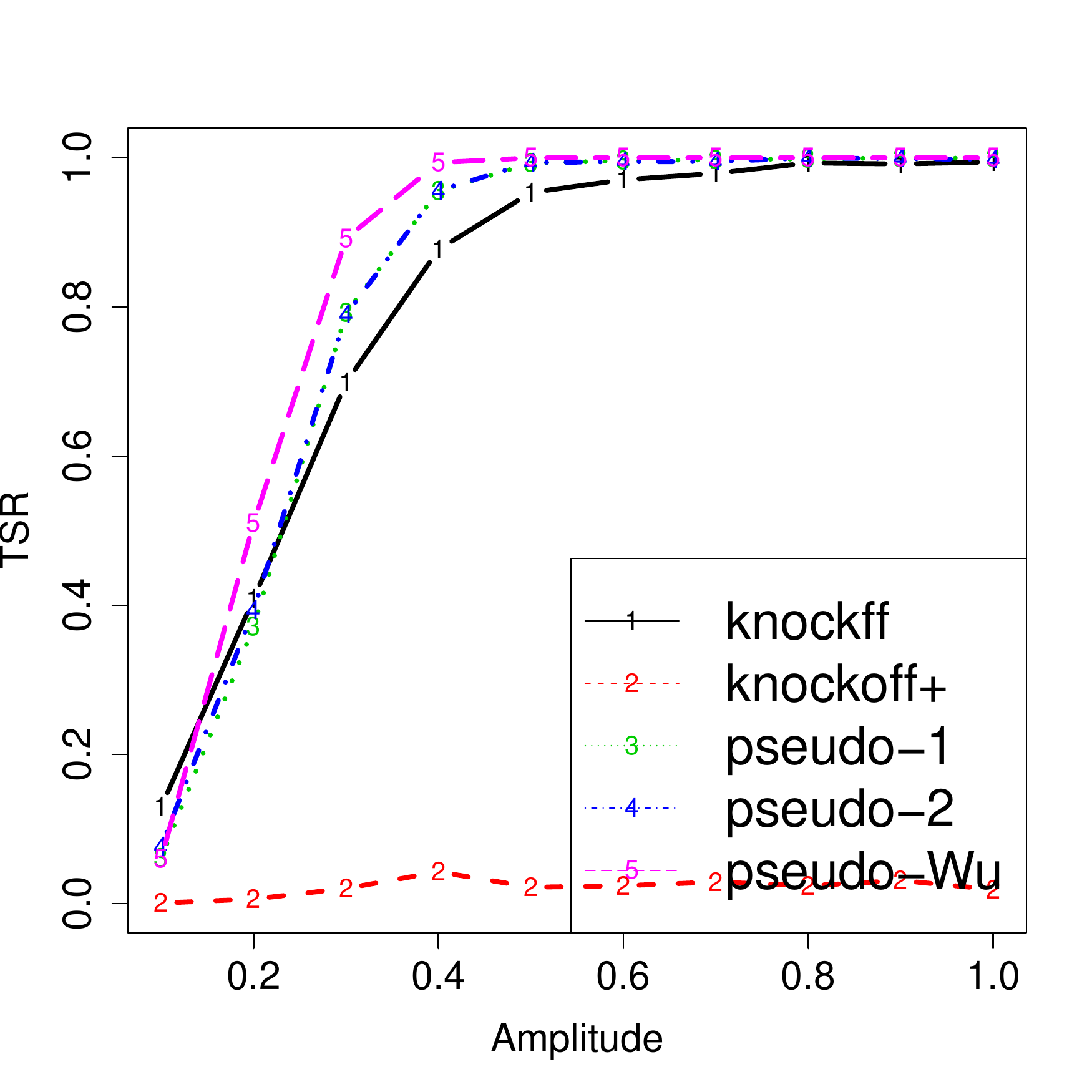}
            \caption{True selection rate vs $A$}
        \end{subfigure}
        \caption{Penalized regression; Performances under different coefficient amplitude at $\alpha = 0.1$.}
        \label{fig:amp2}
    \end{figure}

    \begin{figure}[!h]
        \begin{subfigure}{.5\textwidth}
            \centering
            \includegraphics[scale = 0.32]{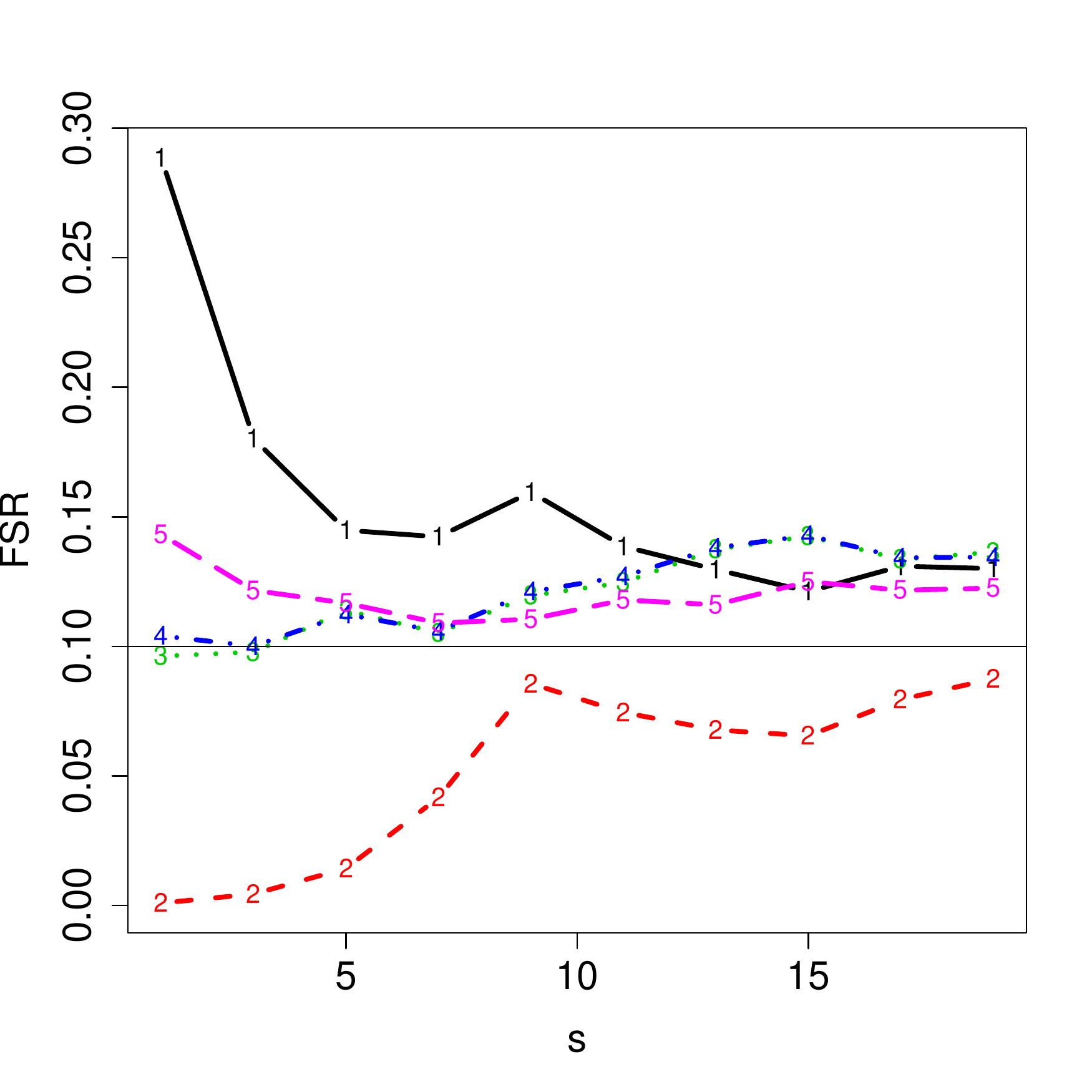}
            \caption{False selection rate vs $s$}
        \end{subfigure}%
        \begin{subfigure}{.5\textwidth}
            \centering
            \includegraphics[scale = 0.32]{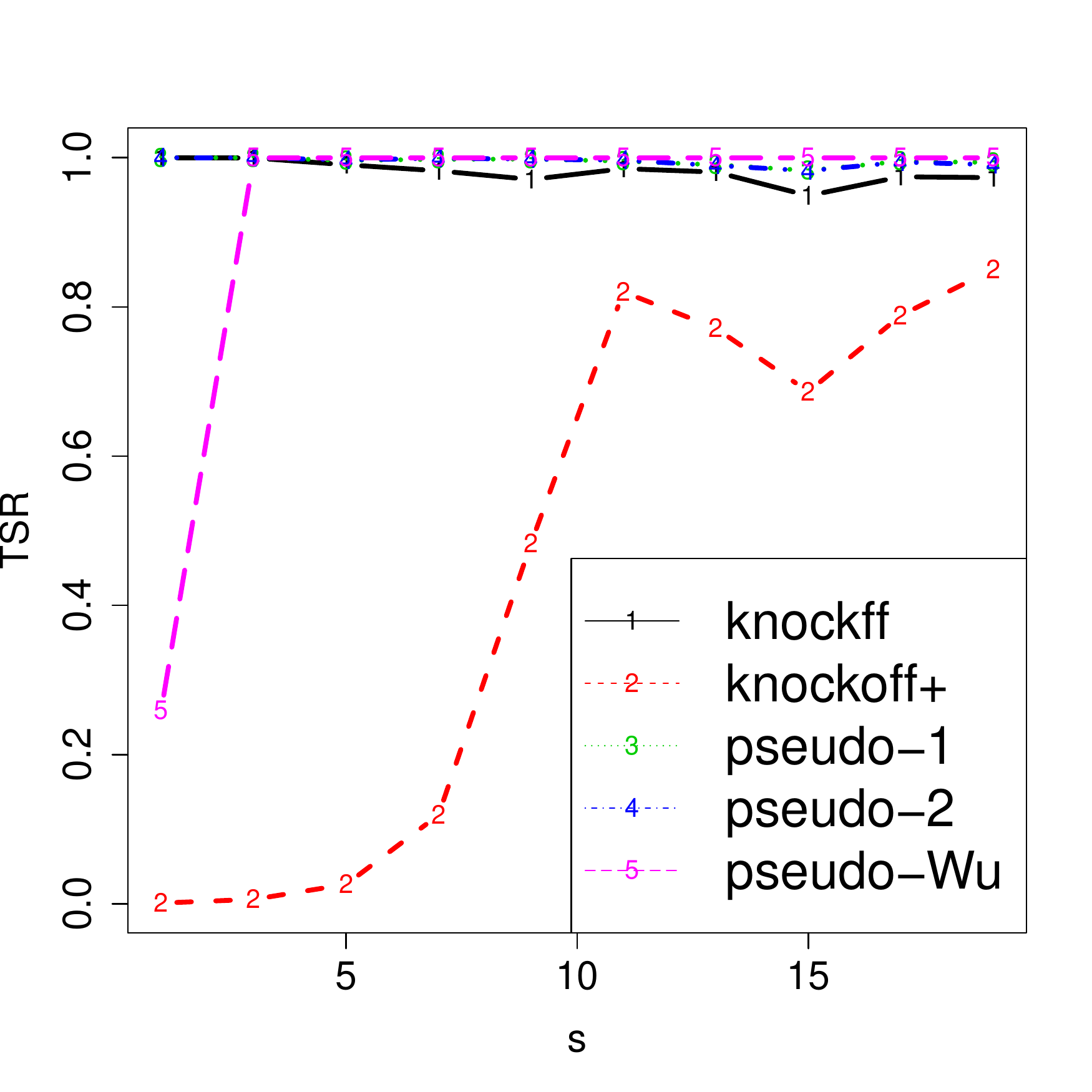}
            \caption{True selection rate vs $s$}
        \end{subfigure}
        \caption{Penalized regression; Performances under different number of nonzero coefficients at $\alpha = 0.1$.}
        \label{fig:spa2}
    \end{figure}

\begin{figure}[!h]
    \begin{subfigure}{.5\textwidth}
        \centering
        \includegraphics[scale=0.35]{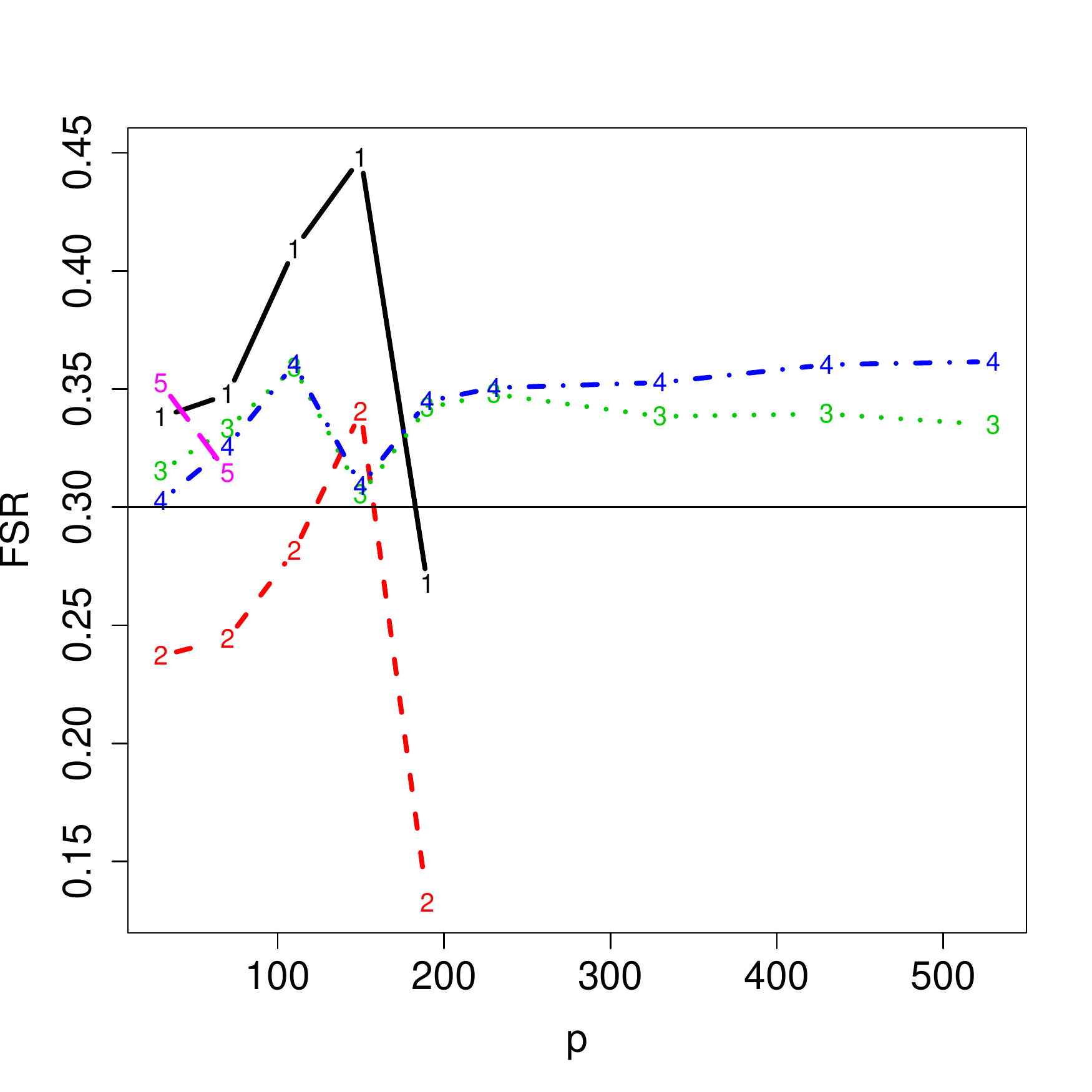}
        \caption{False selection rate vs $p$}
    \end{subfigure}%
    \begin{subfigure}{.5\textwidth}
        \centering
        \includegraphics[scale=0.35]{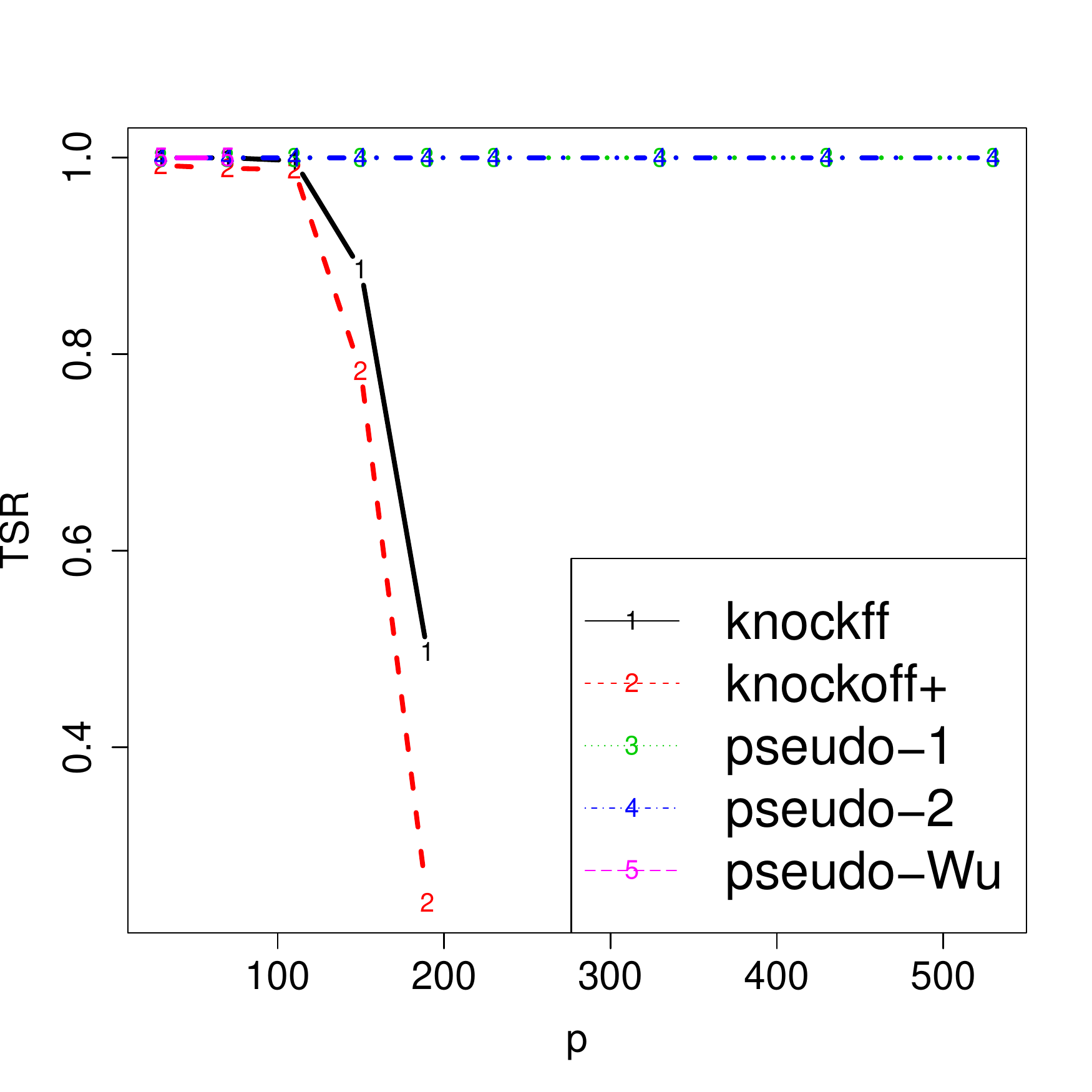}
        \caption{True selection rate vs $p$}
    \end{subfigure}
    \caption{Penalized regression; Performances under different dimensions at $\alpha = 0.3$.}
    \label{fig:dim3}
\end{figure}

\begin{figure}[!h]
    \begin{subfigure}{.5\textwidth}
        \centering
        \includegraphics[scale=0.35]{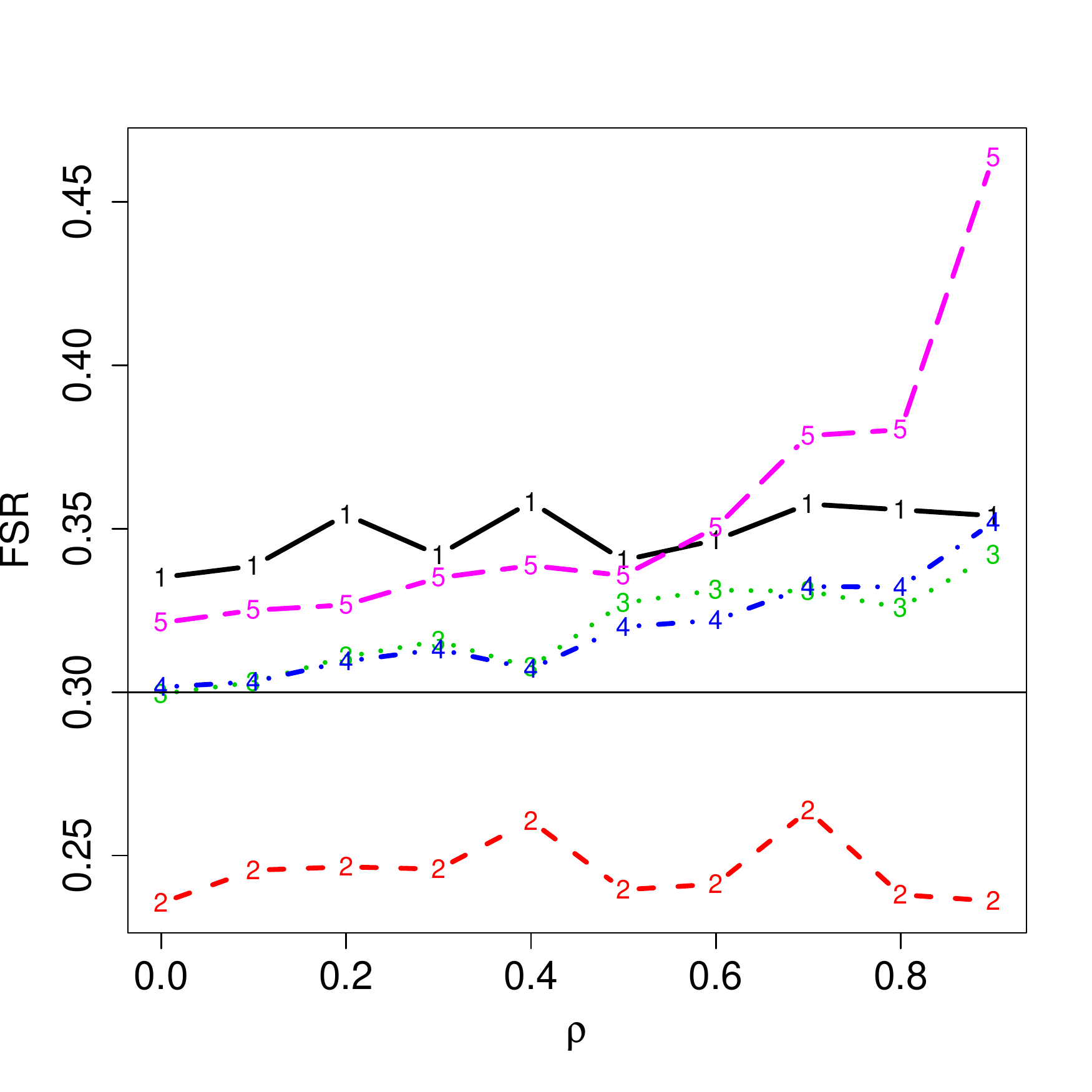}
        \caption{False selection rate vs $\rho$}
    \end{subfigure}%
    \begin{subfigure}{.5\textwidth}
        \centering
        \includegraphics[scale=0.35]{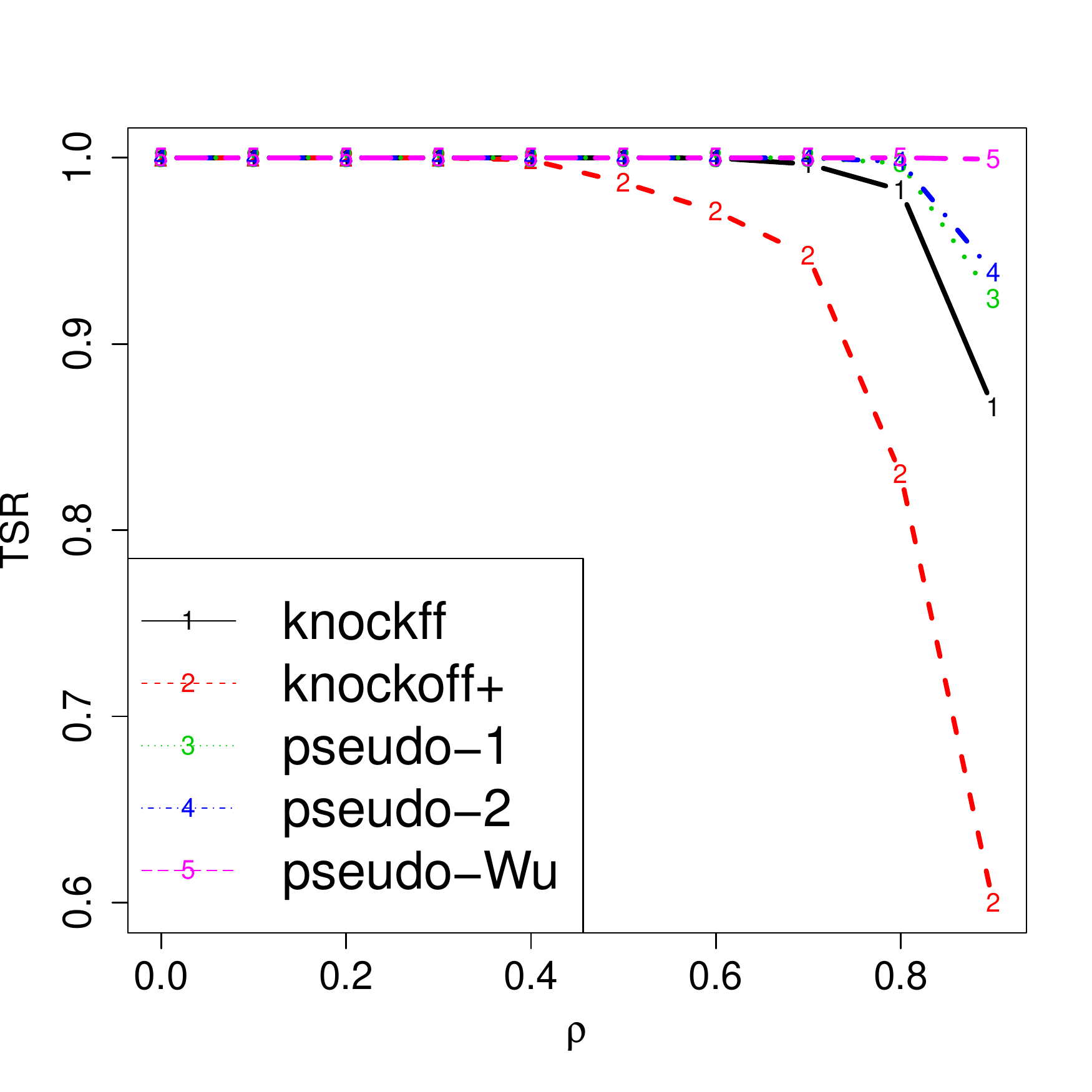}
        \caption{True selection rate vs $\rho$}
    \end{subfigure}
    \caption{Penalized regression; Performances under different correlations at $\alpha = 0.3$.}
    \label{fig:cor3}
\end{figure}

\begin{figure}[!h]
    \begin{subfigure}{.5\textwidth}
        \centering
        \includegraphics[scale=0.35]{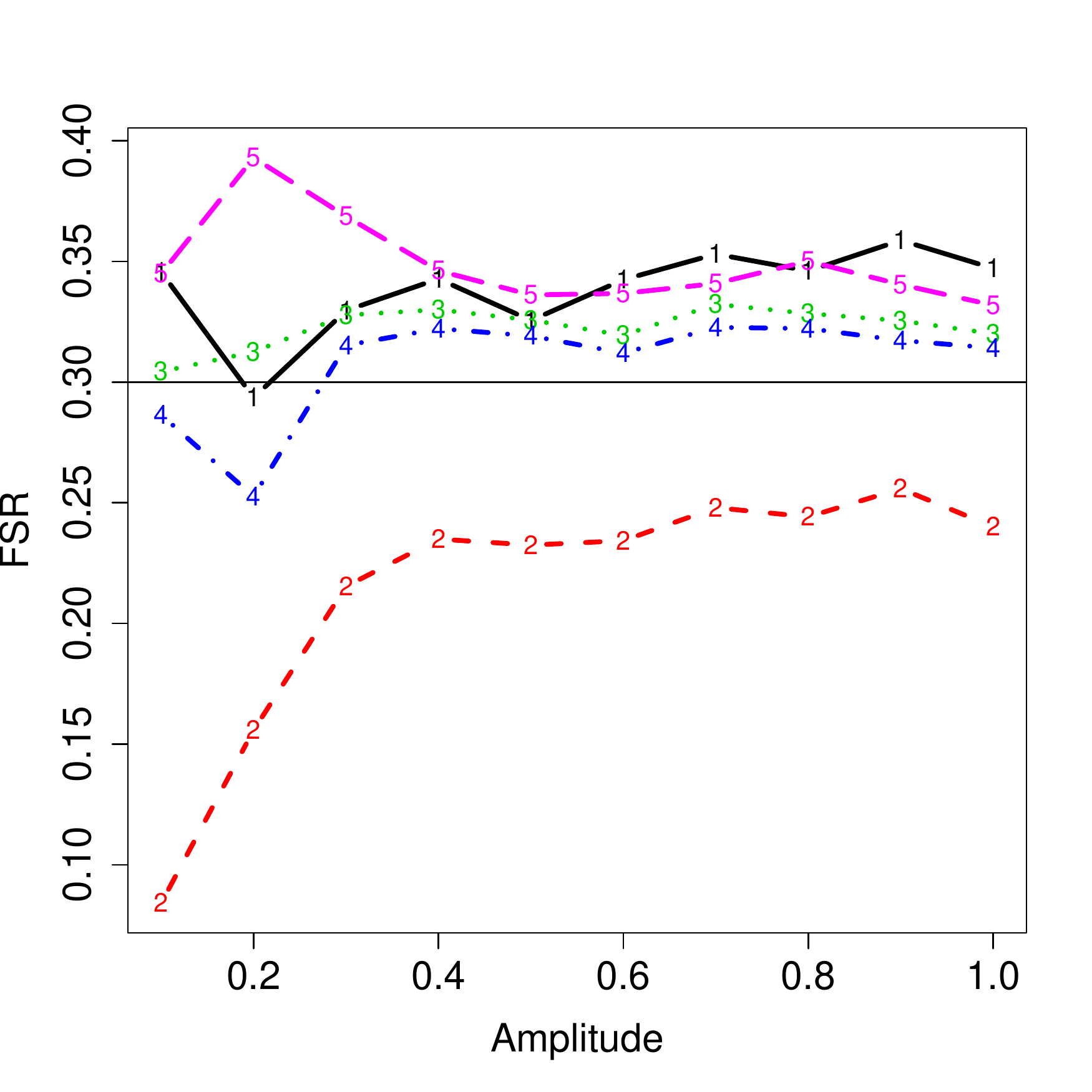}
        \caption{False selection rate vs $A$}
    \end{subfigure}%
    \begin{subfigure}{.5\textwidth}
        \centering
        \includegraphics[scale=0.35]{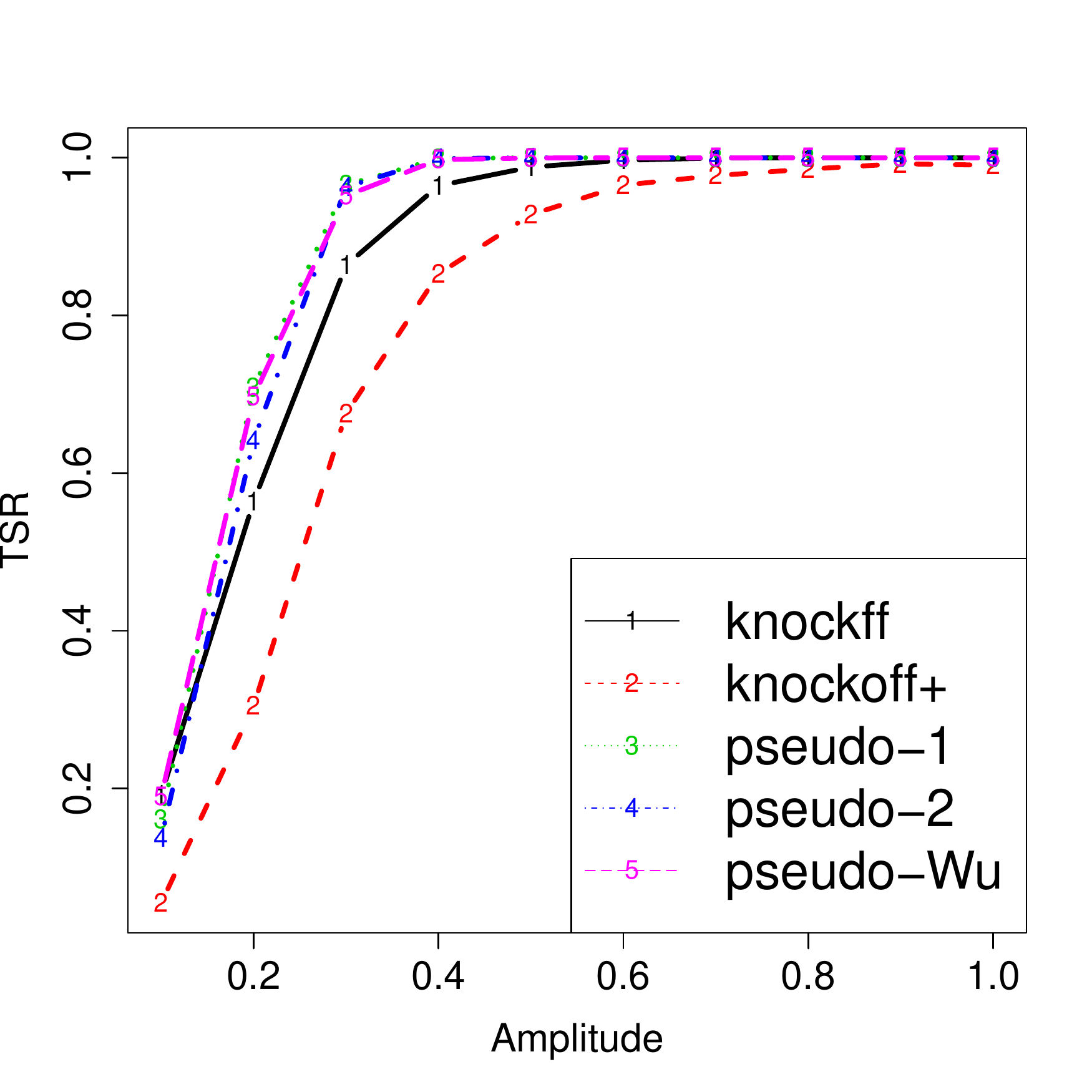}
        \caption{True selection rate vs $A$}
    \end{subfigure}
        \caption{Penalized regression; Performances under different coefficient amplitude at $\alpha = 0.3$.}
    \label{fig:amp3}
\end{figure}

\begin{figure}[!h]
    \begin{subfigure}{.5\textwidth}
        \centering
        \includegraphics[scale=0.32]{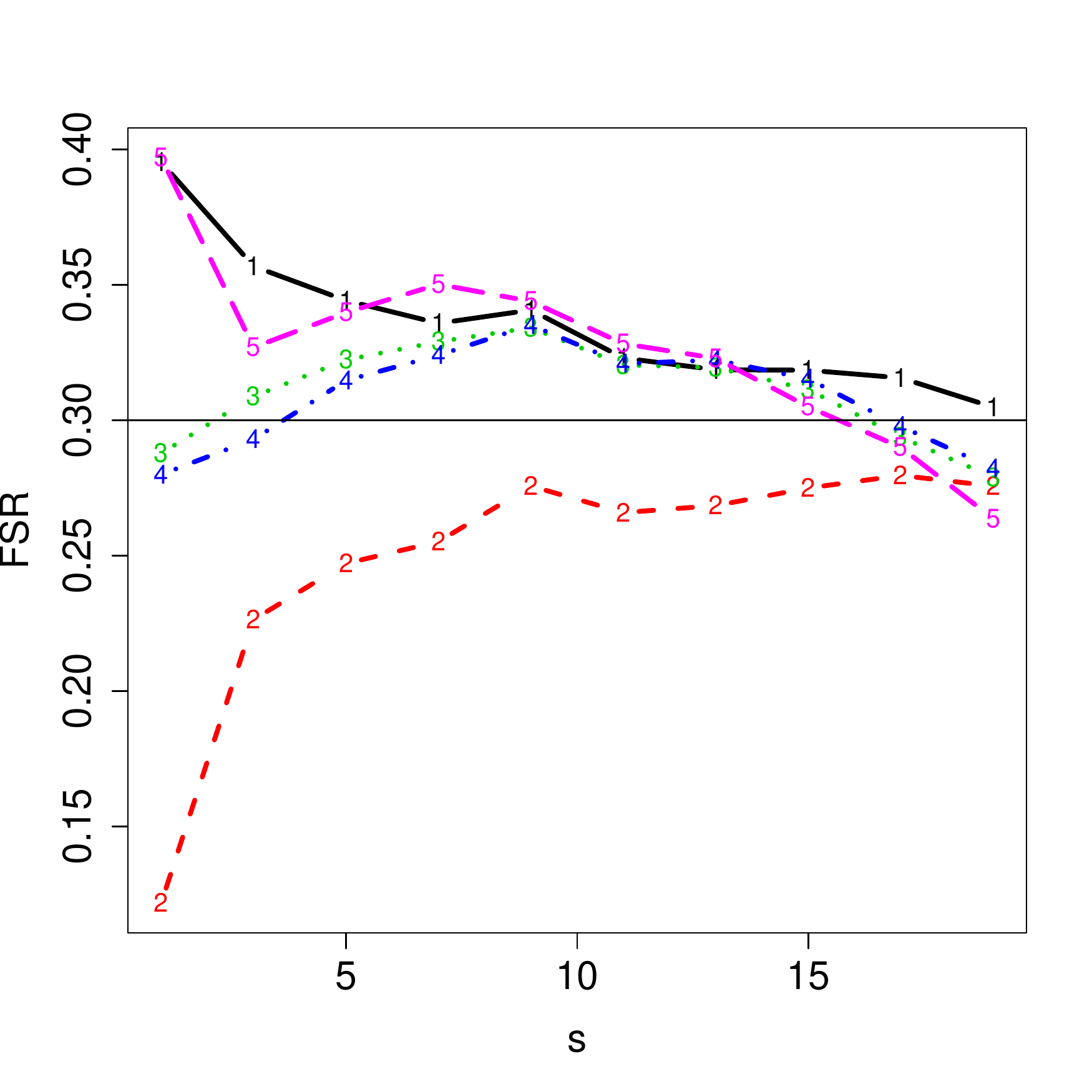}
        \caption{False selection rate vs $s$}
    \end{subfigure}%
    \begin{subfigure}{.5\textwidth}
        \centering
        \includegraphics[scale=0.32]{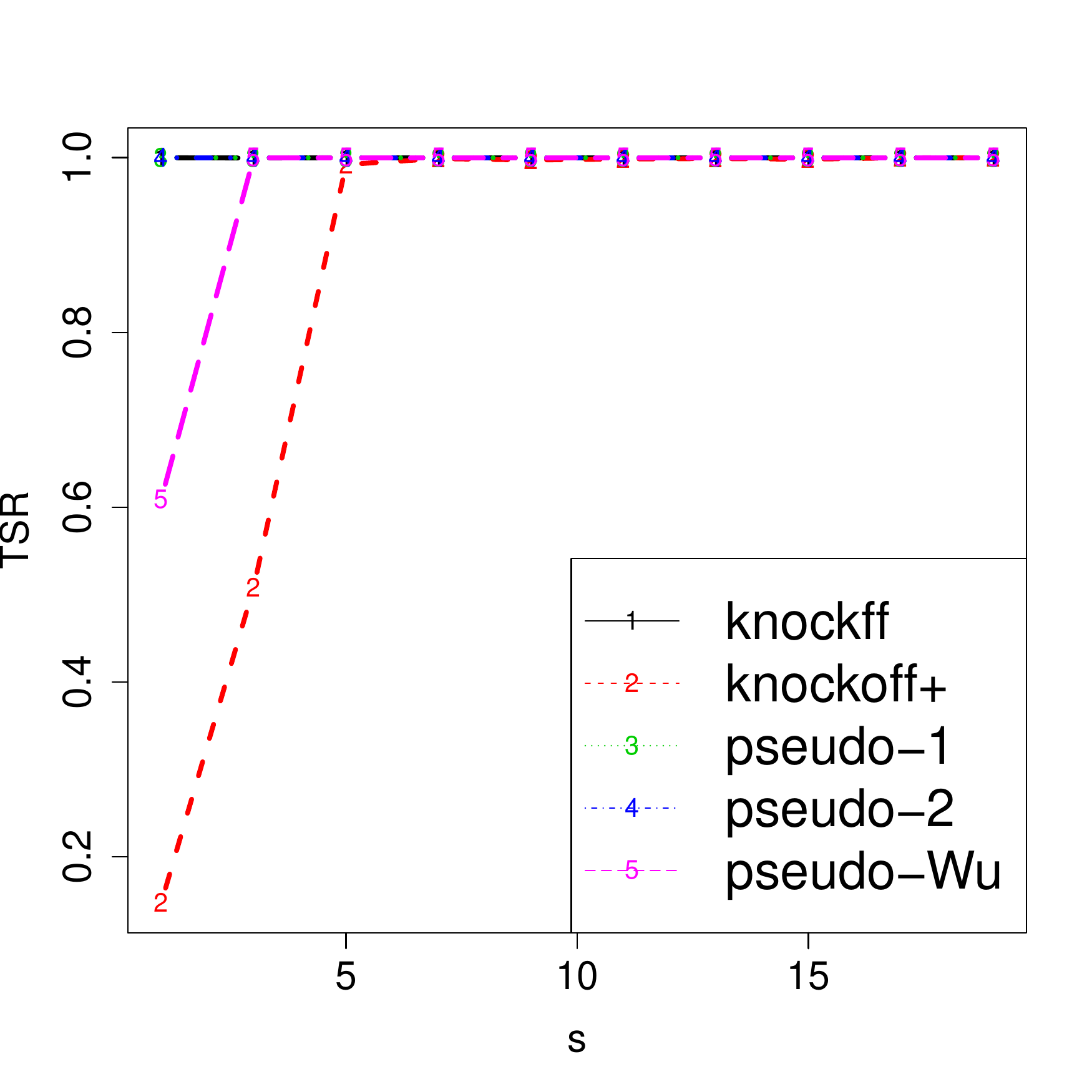}
        \caption{True selection rate vs $s$}
    \end{subfigure}
        \caption{Penalized regression; Performances under different number of nonzero coefficients at $\alpha = 0.3$.}
    \label{fig:spa3}
\end{figure}

\newpage
\section{Simulation results without adding permutation}
\begin{figure}[!h]
    \begin{subfigure}{.5\textwidth}
        \centering
        \includegraphics[scale=0.3]{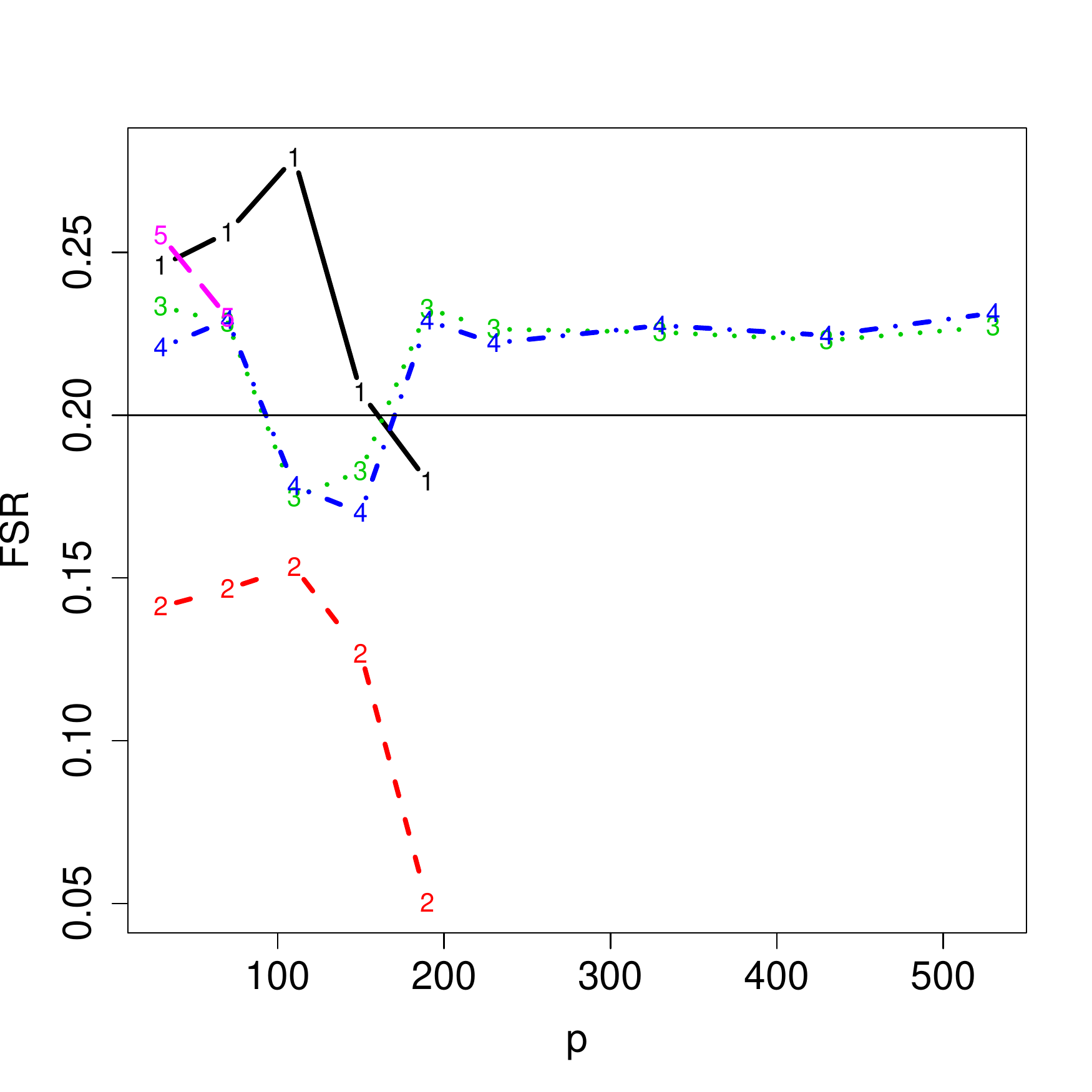}
        \caption{False selection rate vs $p$}
    \end{subfigure}%
    \begin{subfigure}{.5\textwidth}
        \centering
        \includegraphics[scale=0.3]{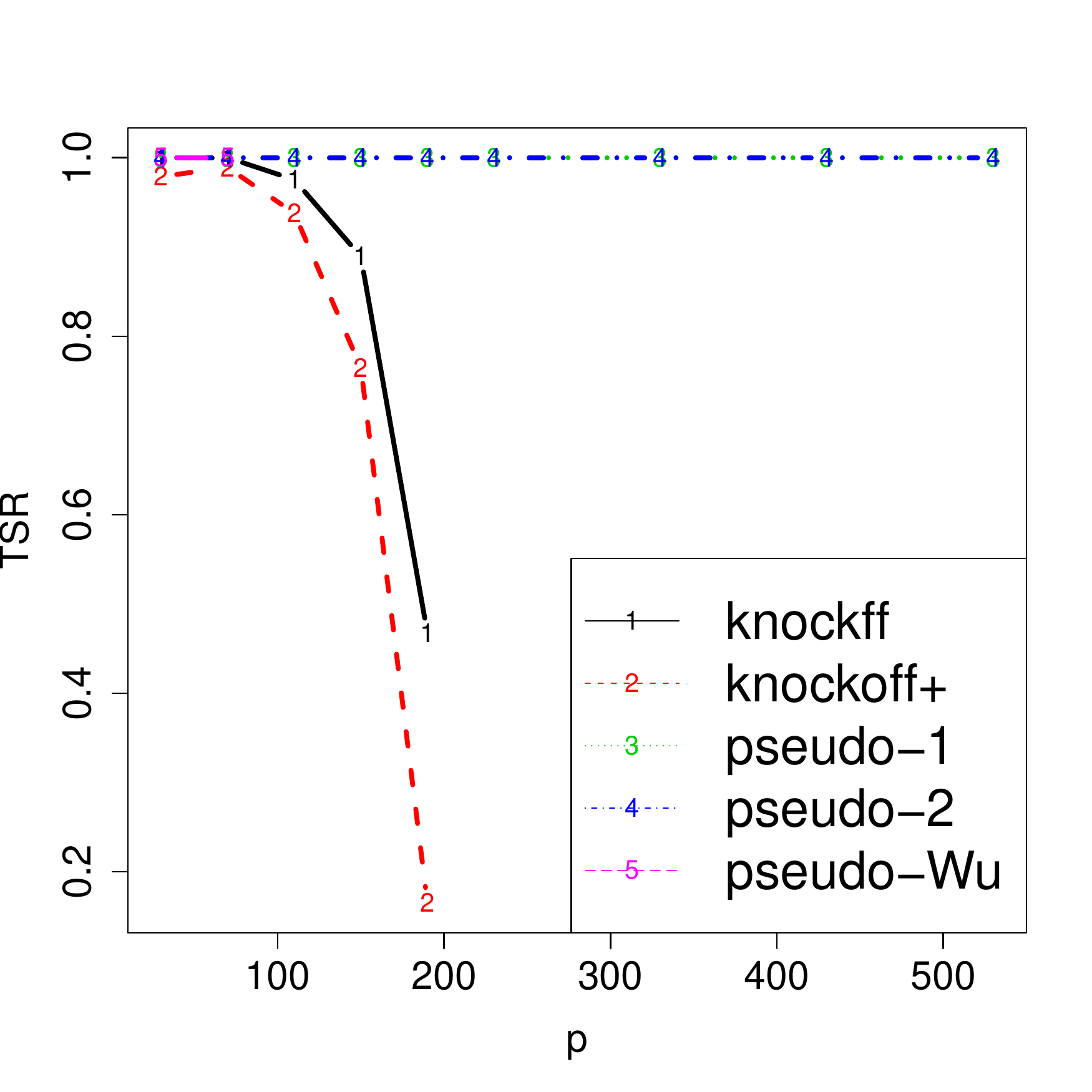}
        \caption{True selection rate vs $p$}
    \end{subfigure}
    \caption{Results without permutation added; Performances under different dimensions at $\alpha = 0.2$}
\end{figure}

\begin{figure}[!h]
    \begin{subfigure}{.5\textwidth}
        \centering
        \includegraphics[scale=0.29]{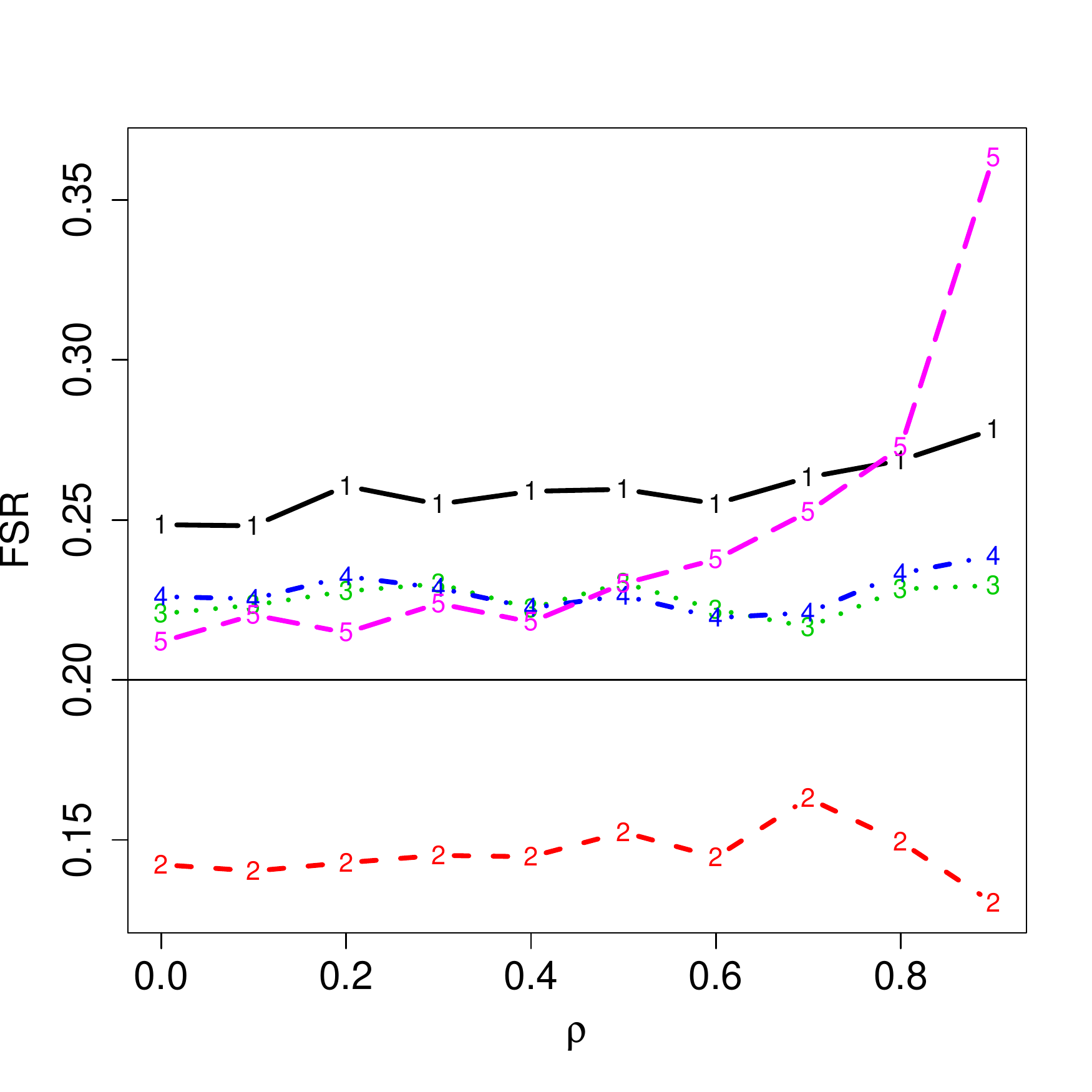}
        \caption{False selection rate vs $\rho$}
    \end{subfigure}%
    \begin{subfigure}{.5\textwidth}
        \centering
        \includegraphics[scale=0.29]{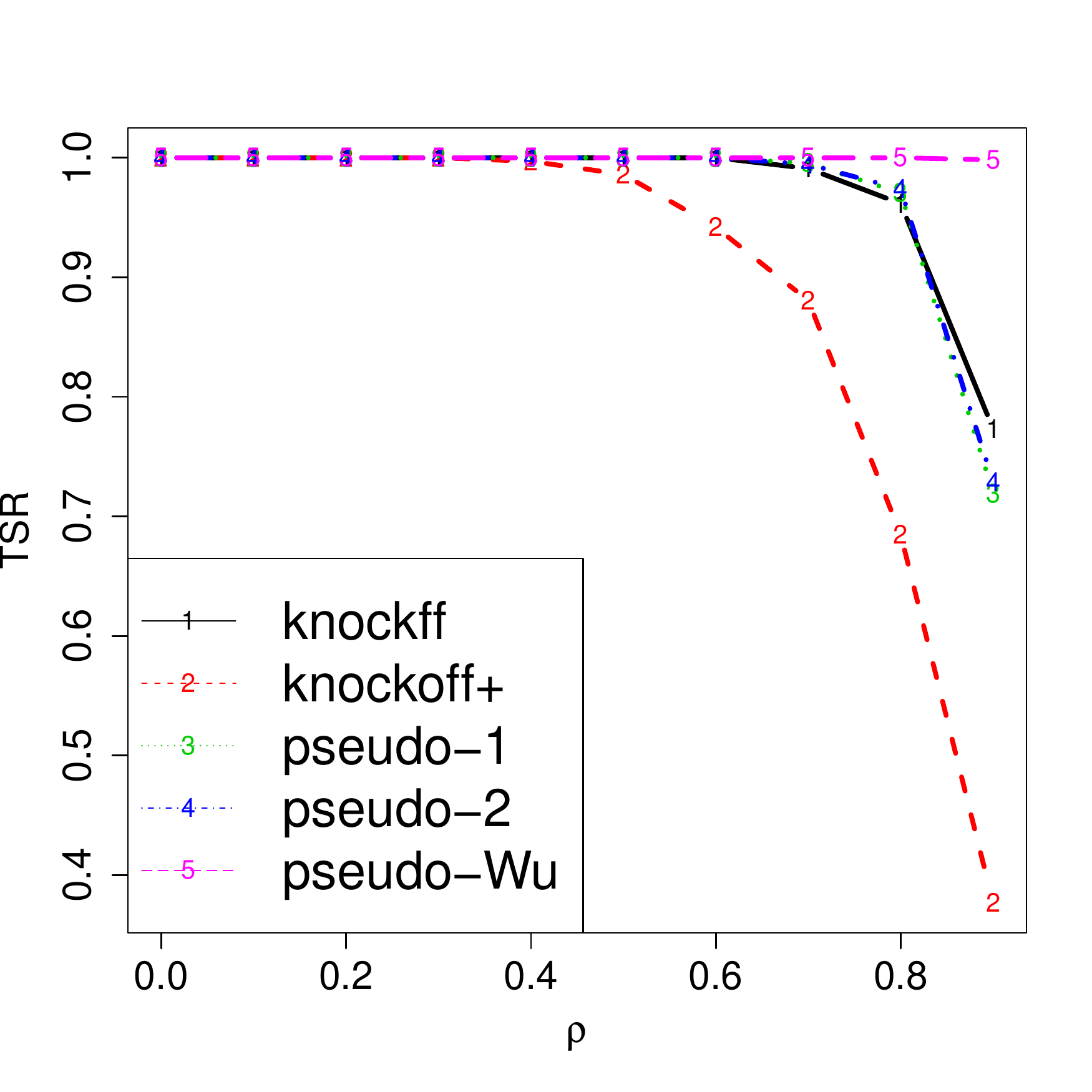}
        \caption{True selection rate vs $\rho$}
    \end{subfigure}
    \caption{Results without permutation added; Performances under different correlations at $\alpha = 0.2$}
\end{figure}

\begin{figure}[!h]
    \begin{subfigure}{.5\textwidth}
        \centering
        \includegraphics[scale=0.35]{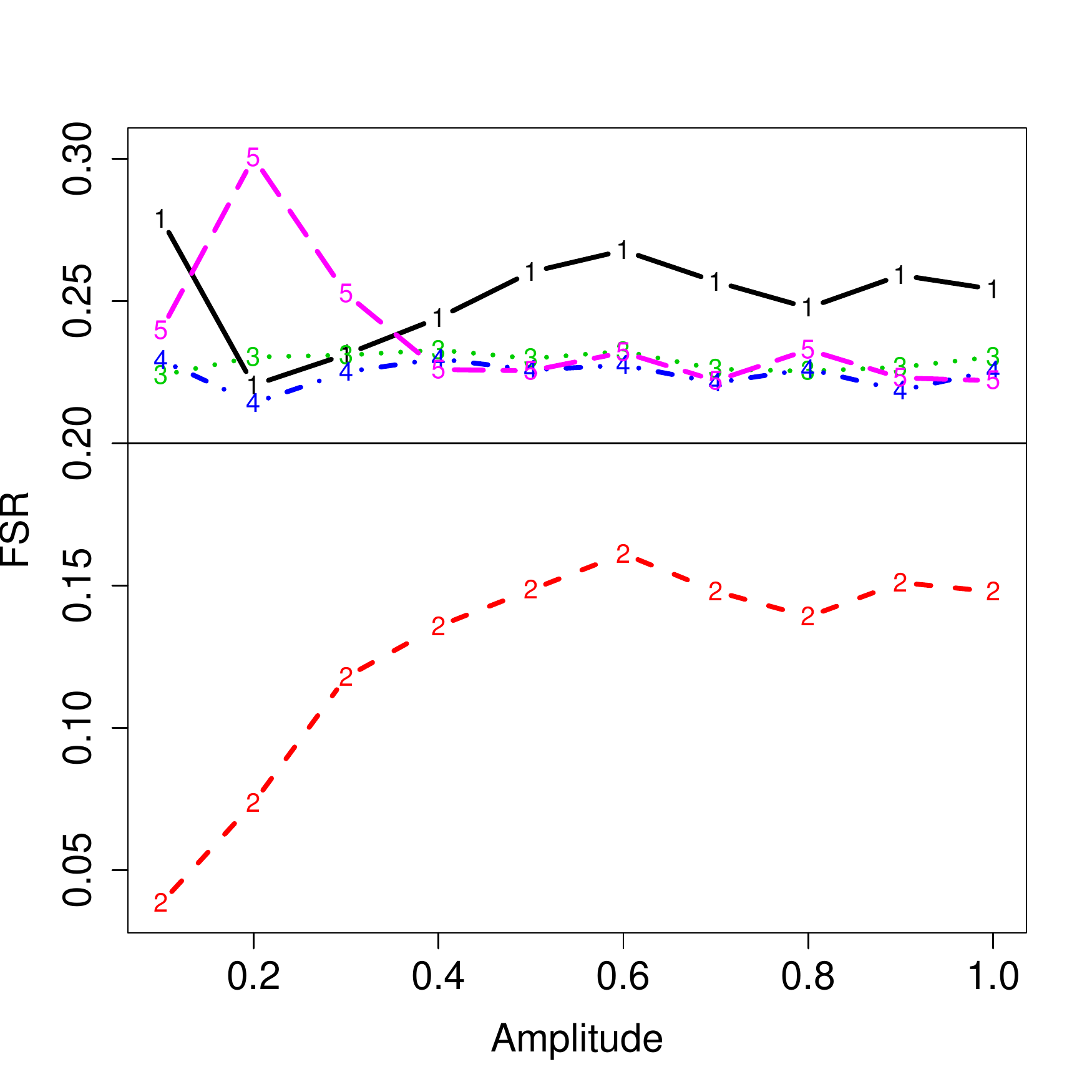}
        \caption{False selection rate vs $A$}
    \end{subfigure}%
    \begin{subfigure}{.5\textwidth}
        \centering
        \includegraphics[scale=0.35]{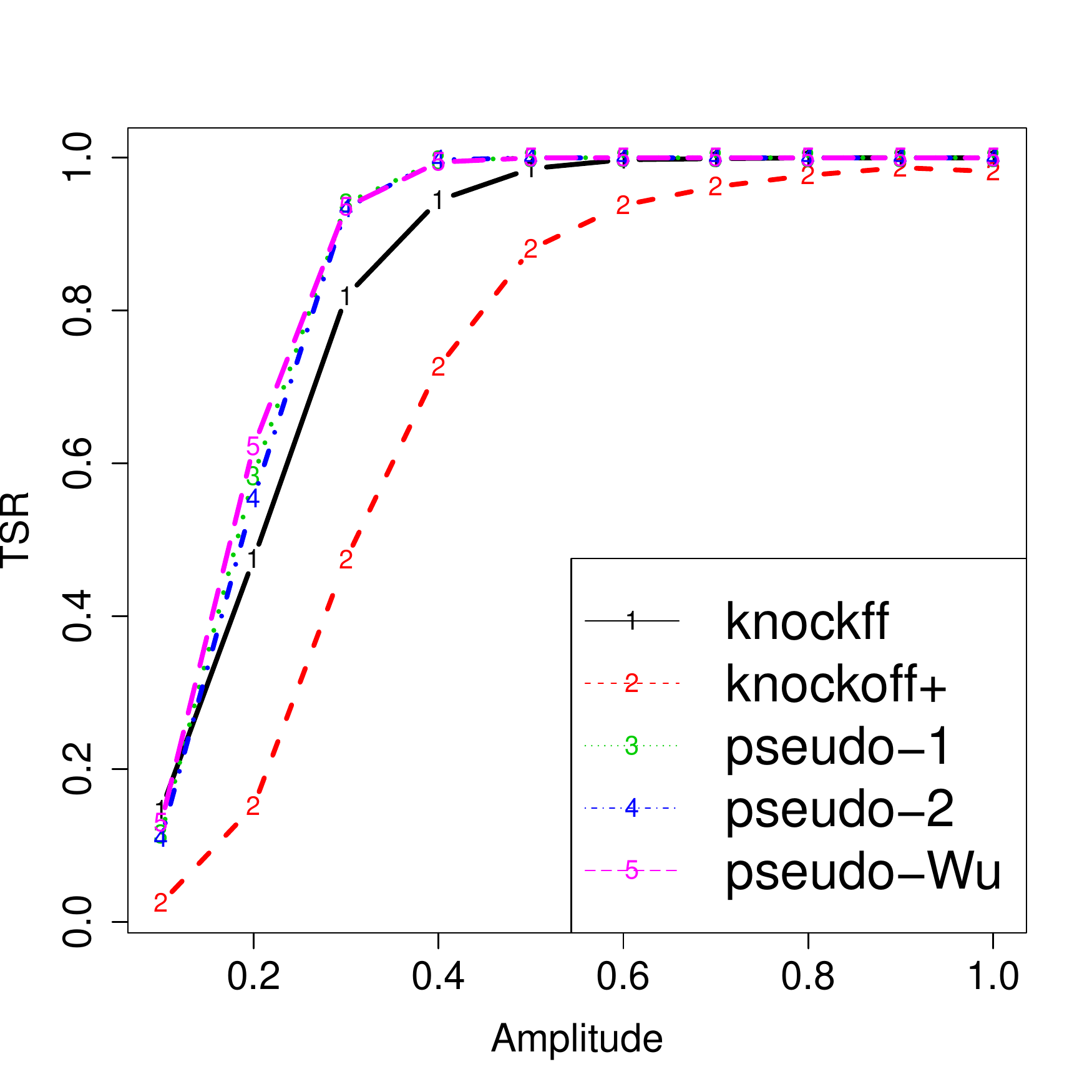}
        \caption{True selection rate vs $A$}
    \end{subfigure}
    \caption{Results without permutation added; Performances under different coefficient amplitude at $\alpha = 0.2$}
\end{figure}

\begin{figure}[!h]
    \begin{subfigure}{.5\textwidth}
        \centering
        \includegraphics[scale=0.32]{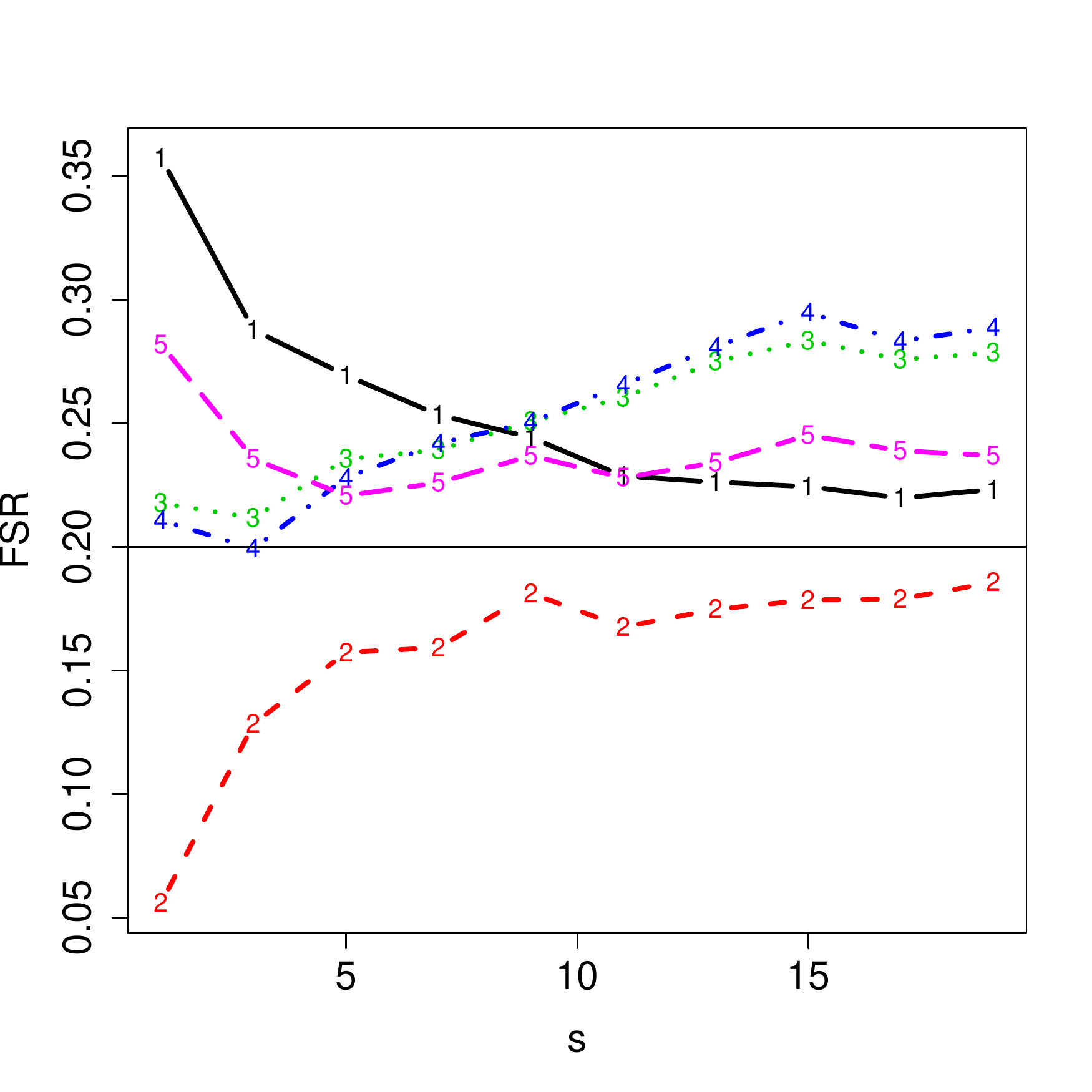}
        \caption{False selection rate vs $s$}
    \end{subfigure}%
    \begin{subfigure}{.5\textwidth}
        \centering
        \includegraphics[scale=0.32]{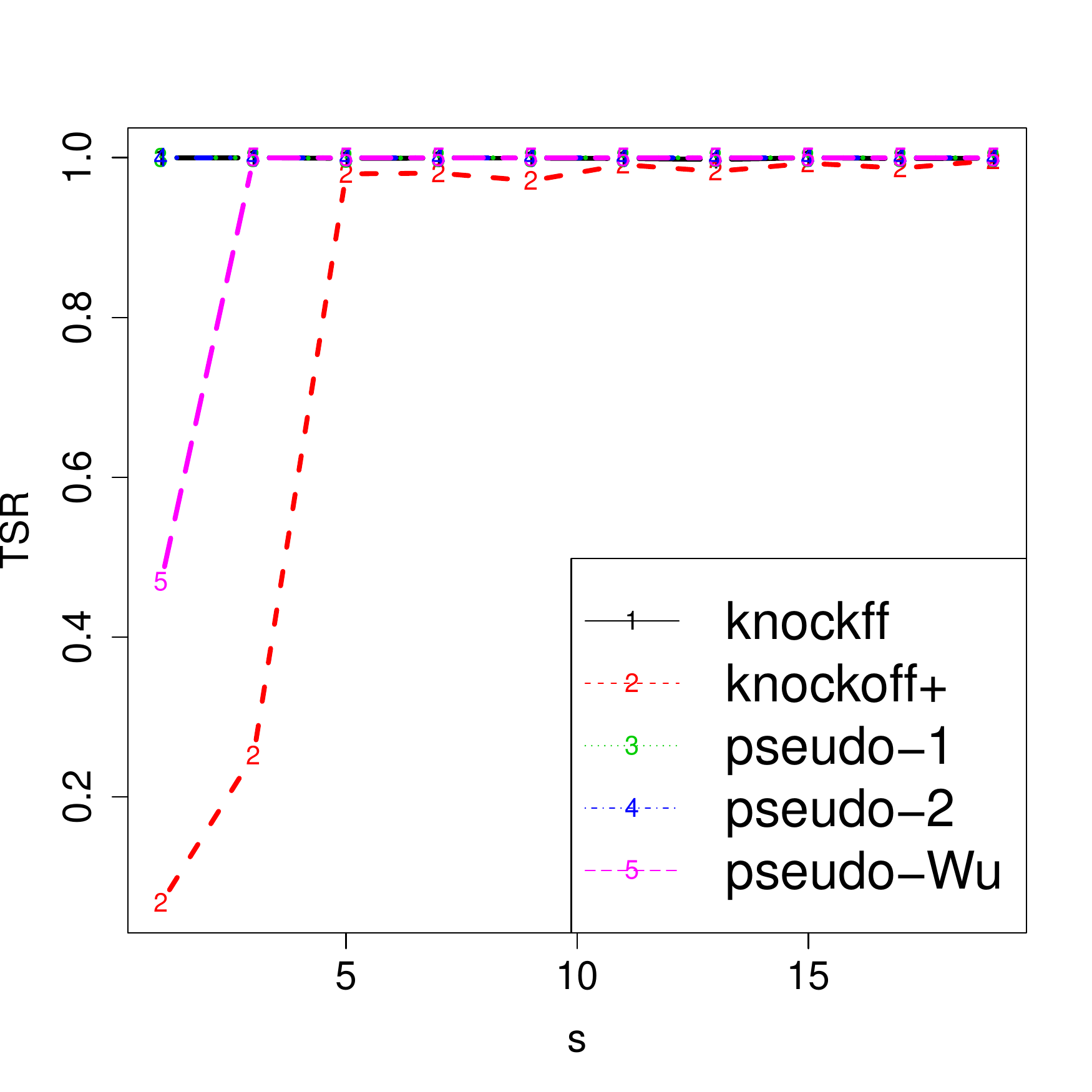}
        \caption{True selection rate vs $s$}
    \end{subfigure}
    \caption{Results without permutation added; Performances under different number of nonzero coefficients at $\alpha = 0.2$}
\end{figure}

\newpage
\section{Simulation results for logistic model}
\label{sec:logistic}
In this section, we apply the pseudo-variable method to the penalized logistic model. The Lasso estimator for logistic model at $\lambda$ is defined as
\begin{equation*}
\widehat{\boldsymbol{\beta}}(\lambda; \mathbb{Y}, \mathbb{X}) =
\arg\min_{\boldsymbol{\beta}\in\mathbb{R}^p}
\left \{ -\frac{1}{n}\sum_{i=1}^n \left \{Y_iX_i^\T\boldsymbol{\beta} - \log(1 + e^{X_i^T\boldsymbol{\beta}}) \right \}
+\lambda \sum_{j = 1}^p  |\beta_j| \right \}.
\end{equation*}

In the simulation, $Y_i, i = 1, \ldots, n$, are generated from Bernoulli distribution with parameter
$1 / \left\{1+ \exp(-X_i^T\boldsymbol{\beta}_0)\right\}$,
where $X_i \sim  N(0_{p \times 1}, C)$ with $C_{ij} = \rho^{|i-j|}$. Similar to simulation study of penalized regression, we consider four settings: different predictor dimensions, correlation magnitudes, signal amplitudes, and number of nonzero coefficients.

Simulation results are presented in Figure  \ref{fig:dim_logistic}, \ref{fig:cor_logistic}, \ref{fig:amp_logistic}, and \ref{fig:spa_logistic}. A similar pattern as seen for penalized regression can be observed from those figures. All methods except knockoff have a good control of FSR. For TSR, the pseudo-variable methods have better performance than the knockoff and knockoff+ method in most cases.

For all settings in above simulation study, $E(Y)$ is about $0.5$. This is considered as a easier problem comparing with $E(Y)$ is close to $0$ or $1$. To cover different cases for $E(Y)$, we introduce an intercept $c$ to the probability $1 / \{1 + \exp(c -X_i^\T\beta_0)\}$. By varying different $c$, datasets with different $E(Y)$ will be generated. We tried $c = 2.5$ and $5$ in our simulation studies. The corresponding $E(Y)$ is about $0.2$ and $0.05$ respectively. Similar patterns are observed as $c = 0$.

For the \emph{knockoff} package, only penalized regression is supported. To implement knockoff and knockoff+ method for penalized logistic model, we first call $create\_equicorrelated$ function in \emph{knockoff} package to construct knockoff variables, and then call \emph{glmnet} function with \emph{family = binomial} to get when the original variables and  knockoffs enter the solution path. Then we use the
default signed maximum statistics, i.e., $W_i = \max(Z_i, \tilde{Z}_i) \times \mathrm{sgn}(Z_i- \tilde{Z}_i)$, where $Z_i$ and $\tilde{Z}_i$ are the $\lambda$ when the $i$-th variable and corresponding knockoff enter the solution path.

\begin{figure}[!h]
    \begin{subfigure}{.5\textwidth}
        \centering
        \includegraphics[scale=0.32]{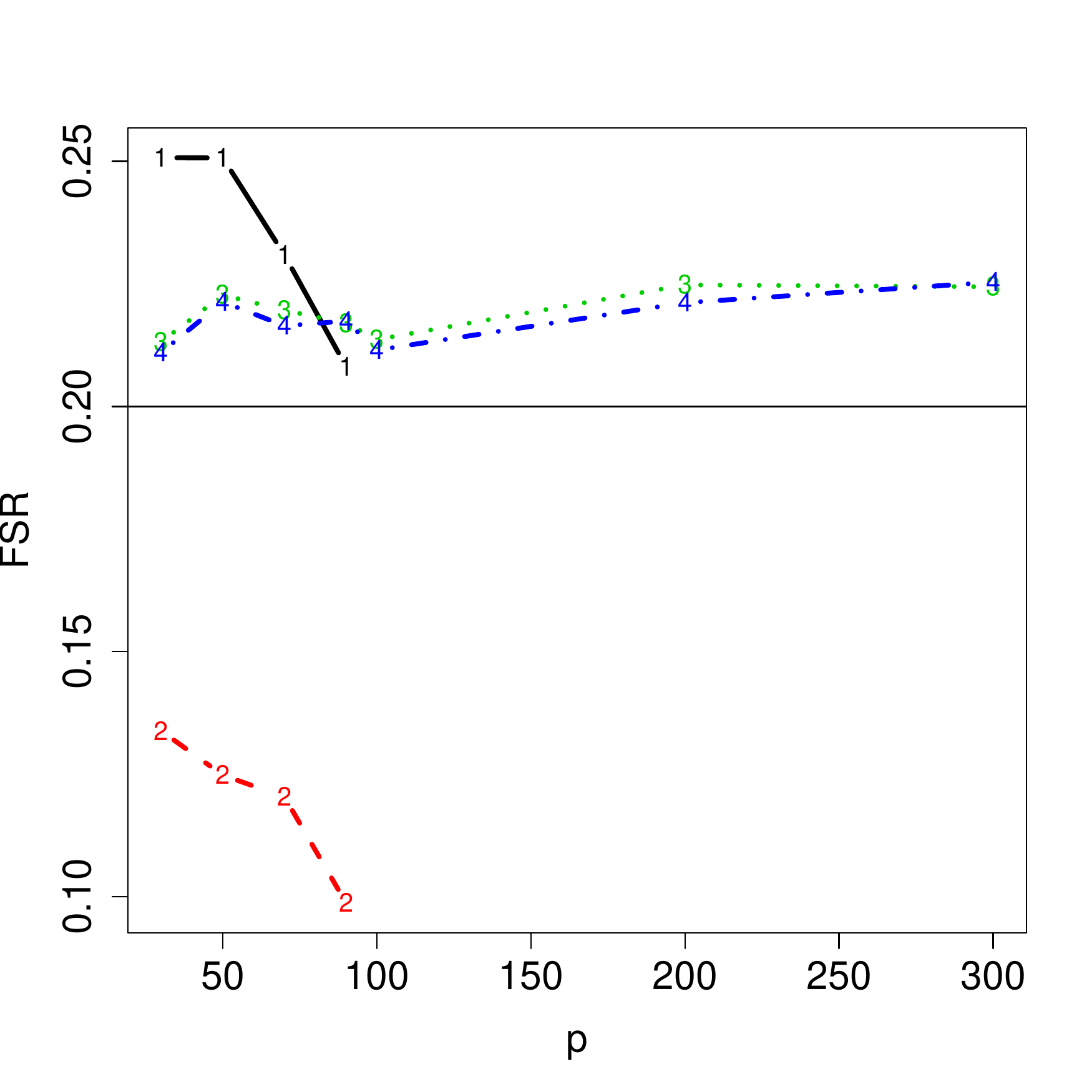}
        \caption{False selection rate vs $p$}
    \end{subfigure}%
    \begin{subfigure}{.5\textwidth}
        \centering
        \includegraphics[scale=0.32]{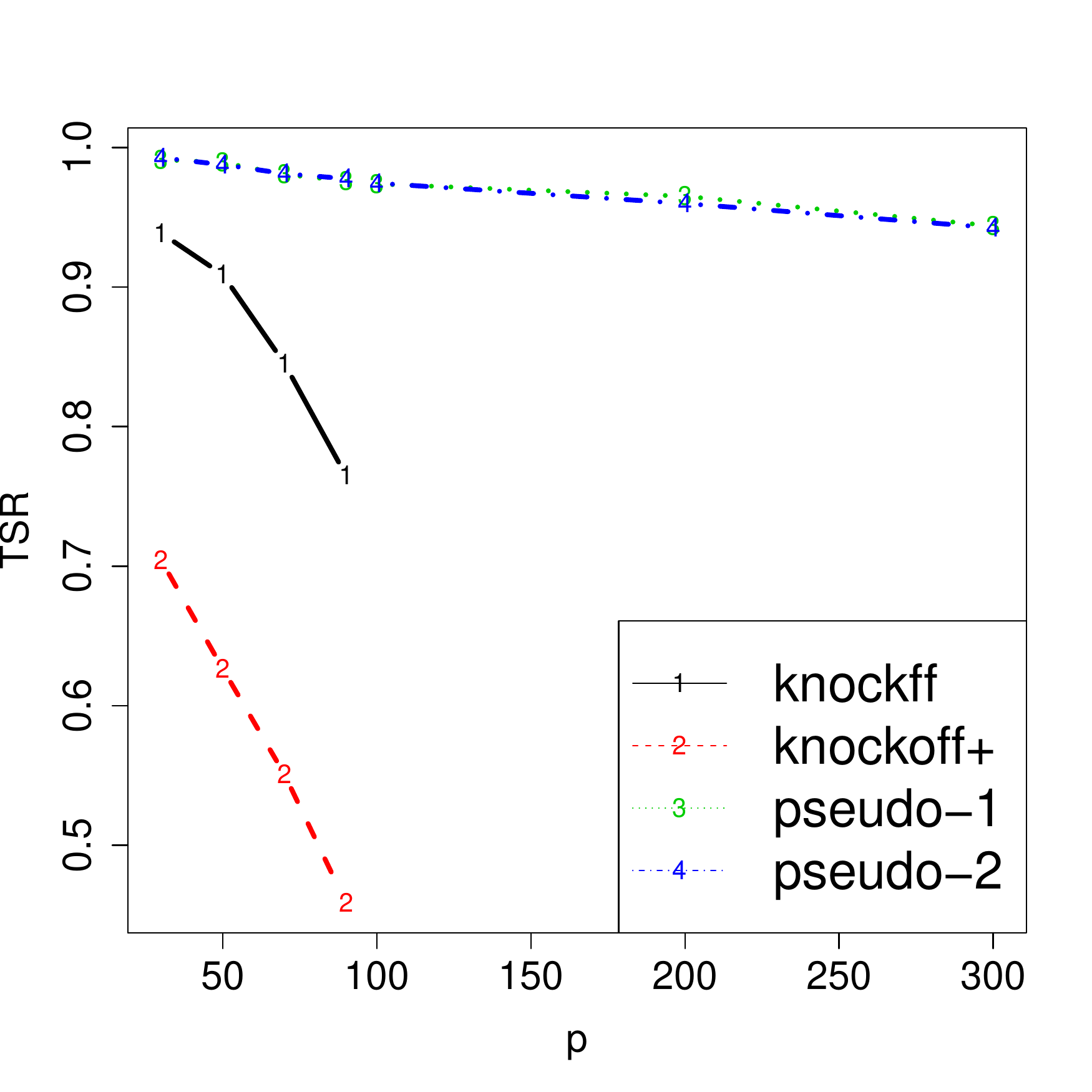}
        \caption{True selection rate vs $p$}
    \end{subfigure}
    \caption{Logistic model; Performances under different dimensions at $\alpha = 0.2$. Knockff and Knockoff+ method only work if $n > 2p$.}
    \label{fig:dim_logistic}
\end{figure}

\begin{figure}[!h]
    \begin{subfigure}{.5\textwidth}
        \centering
        \includegraphics[scale=0.3]{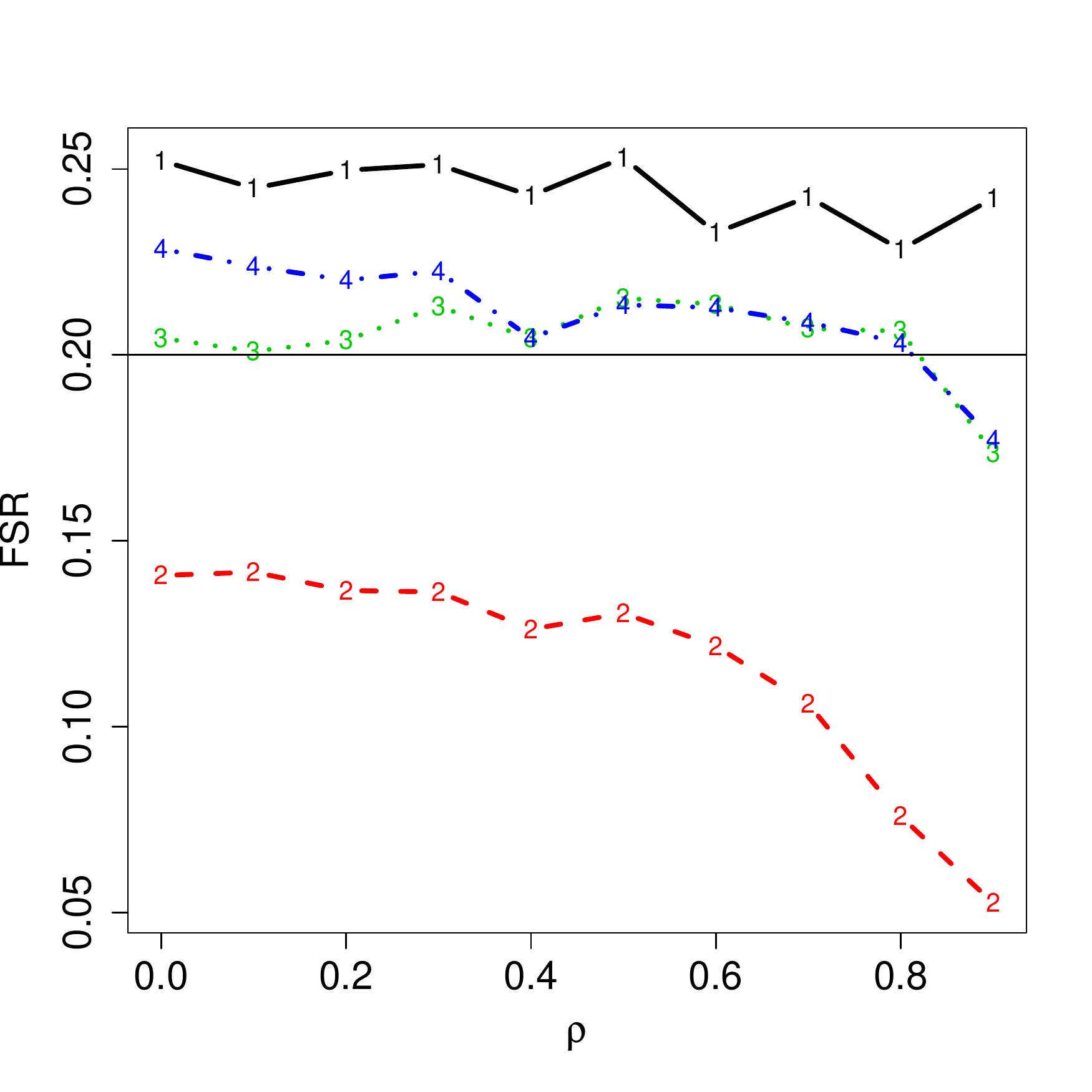}
        \caption{False selection rate vs $\rho$}
    \end{subfigure}%
    \begin{subfigure}{.5\textwidth}
        \centering
        \includegraphics[scale=0.3]{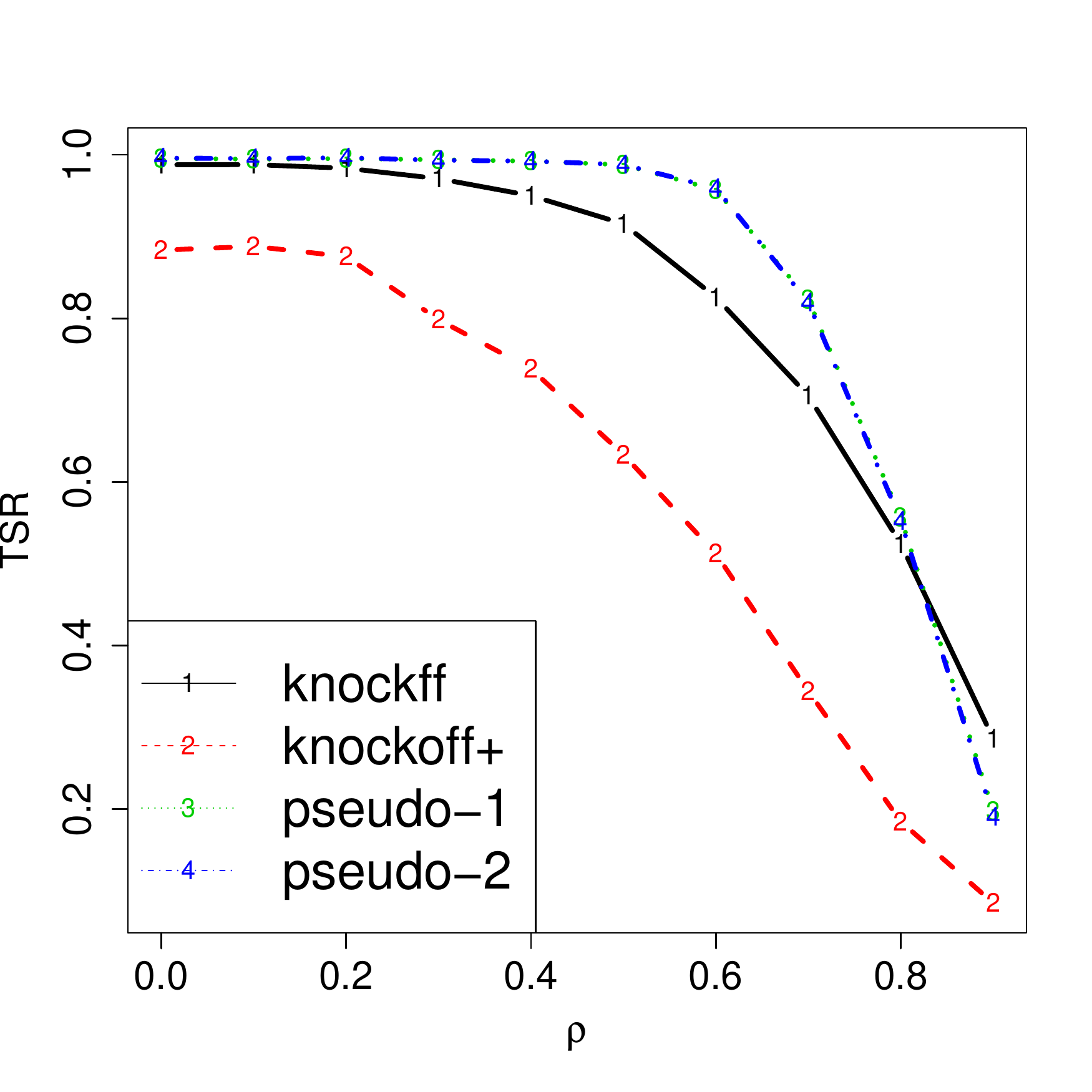}
        \caption{True selection rate vs $\rho$}
    \end{subfigure}
    \caption{Logistic model; Performances under different correlations at $\alpha = 0.2$.}
    \label{fig:cor_logistic}
\end{figure}

\begin{figure}[!h]
    \begin{subfigure}{.5\textwidth}
        \centering
        \includegraphics[scale=0.35]{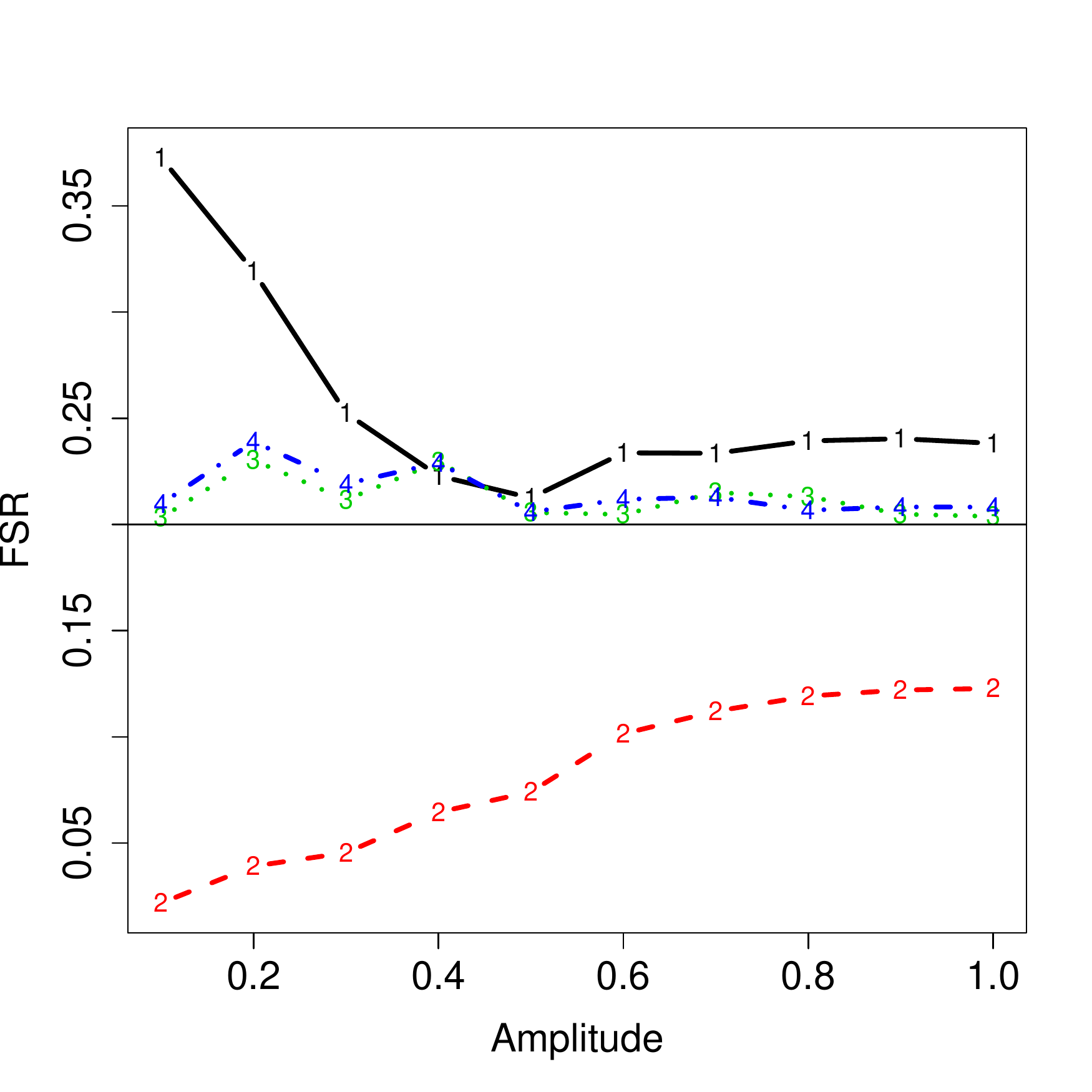}
        \caption{False selection rate vs $A$}
    \end{subfigure}%
    \begin{subfigure}{.5\textwidth}
        \centering
        \includegraphics[scale=0.35]{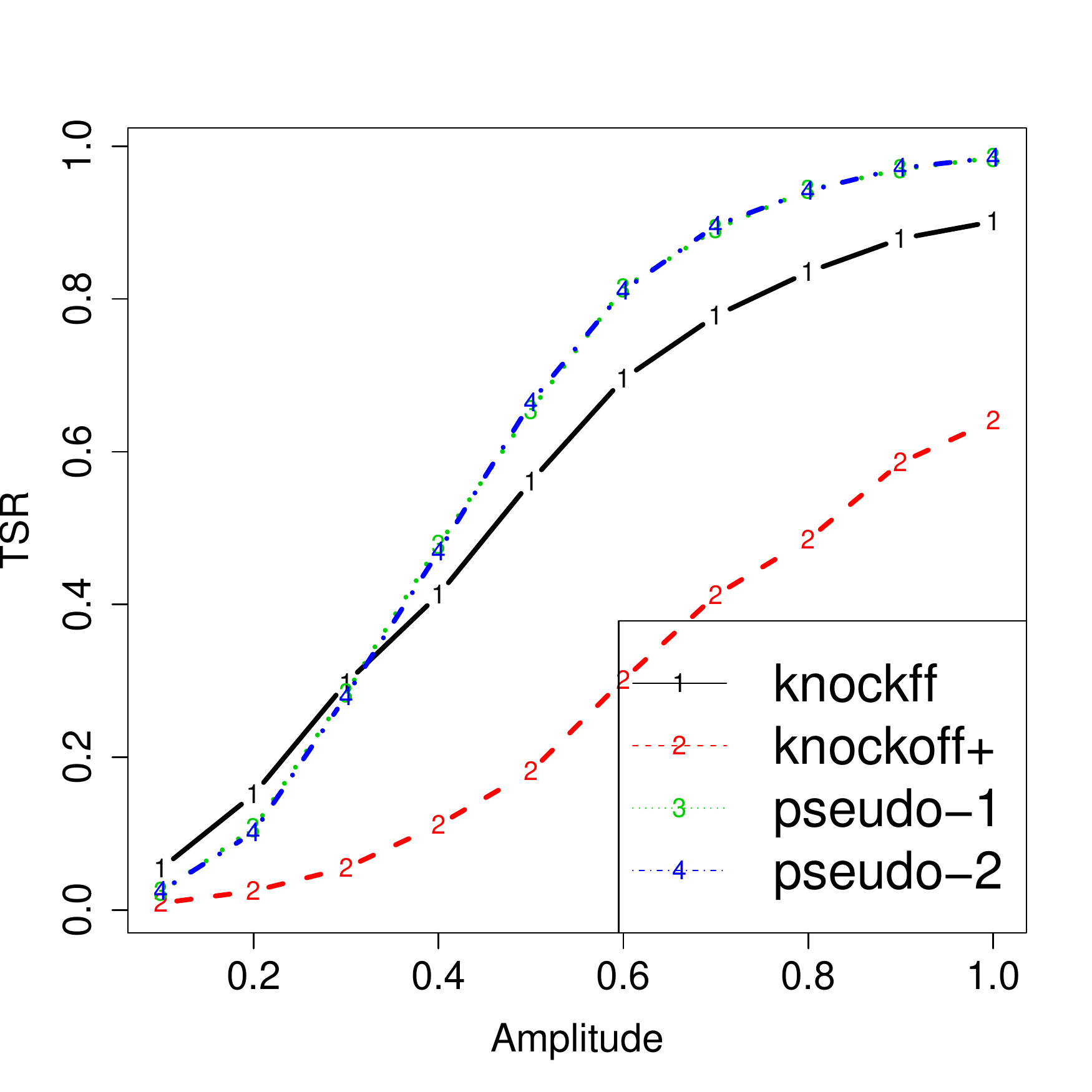}
        \caption{True selection rate vs $A$}
    \end{subfigure}
    \caption{Logistic model; Performances under different coefficient amplitude at $\alpha = 0.2$.}
    \label{fig:amp_logistic}
\end{figure}

\begin{figure}[!h]
    \begin{subfigure}{.5\textwidth}
        \centering
        \includegraphics[scale=0.35]{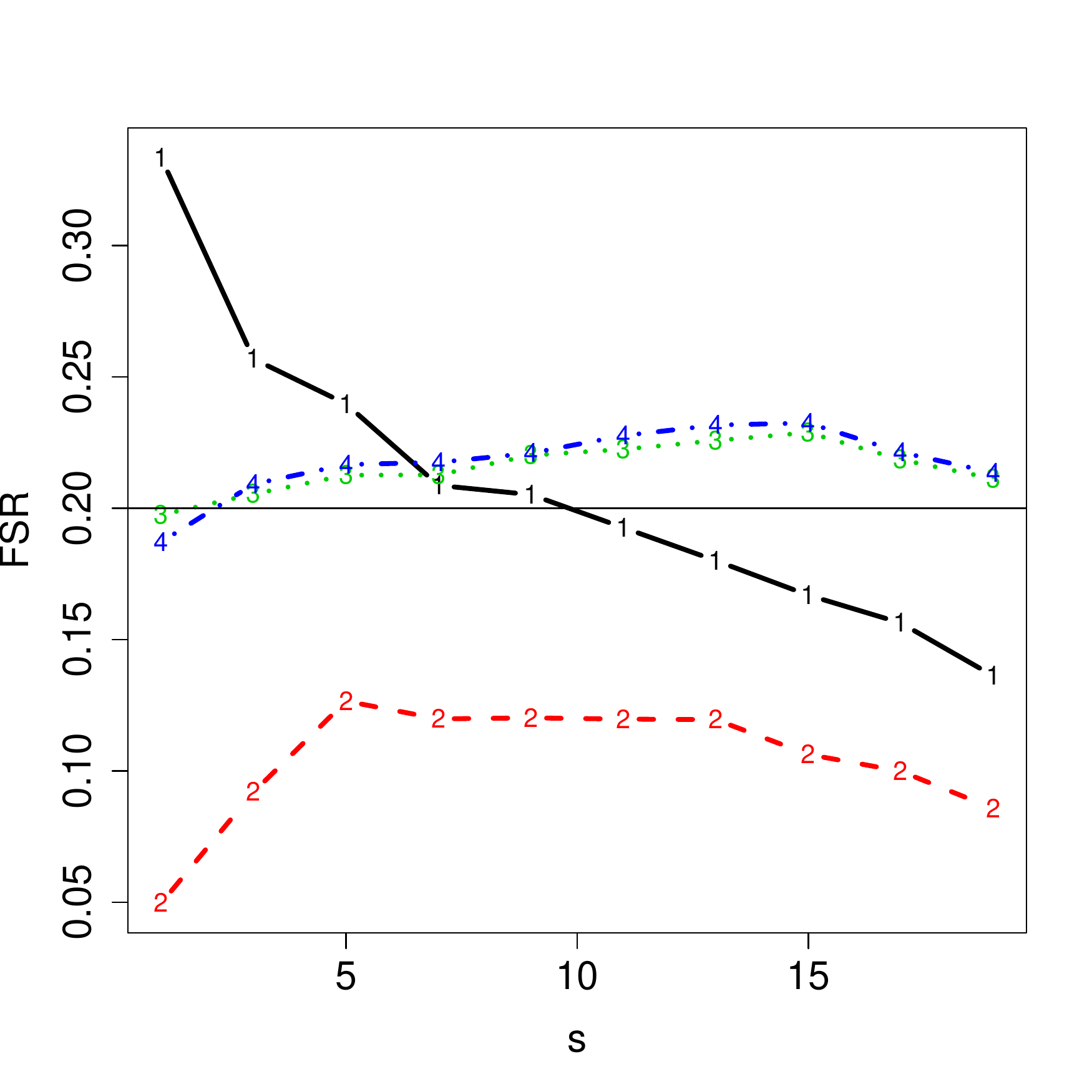}
        \caption{False selection rate vs $s$}
    \end{subfigure}%
    \begin{subfigure}{.5\textwidth}
        \centering
        \includegraphics[scale=0.35]{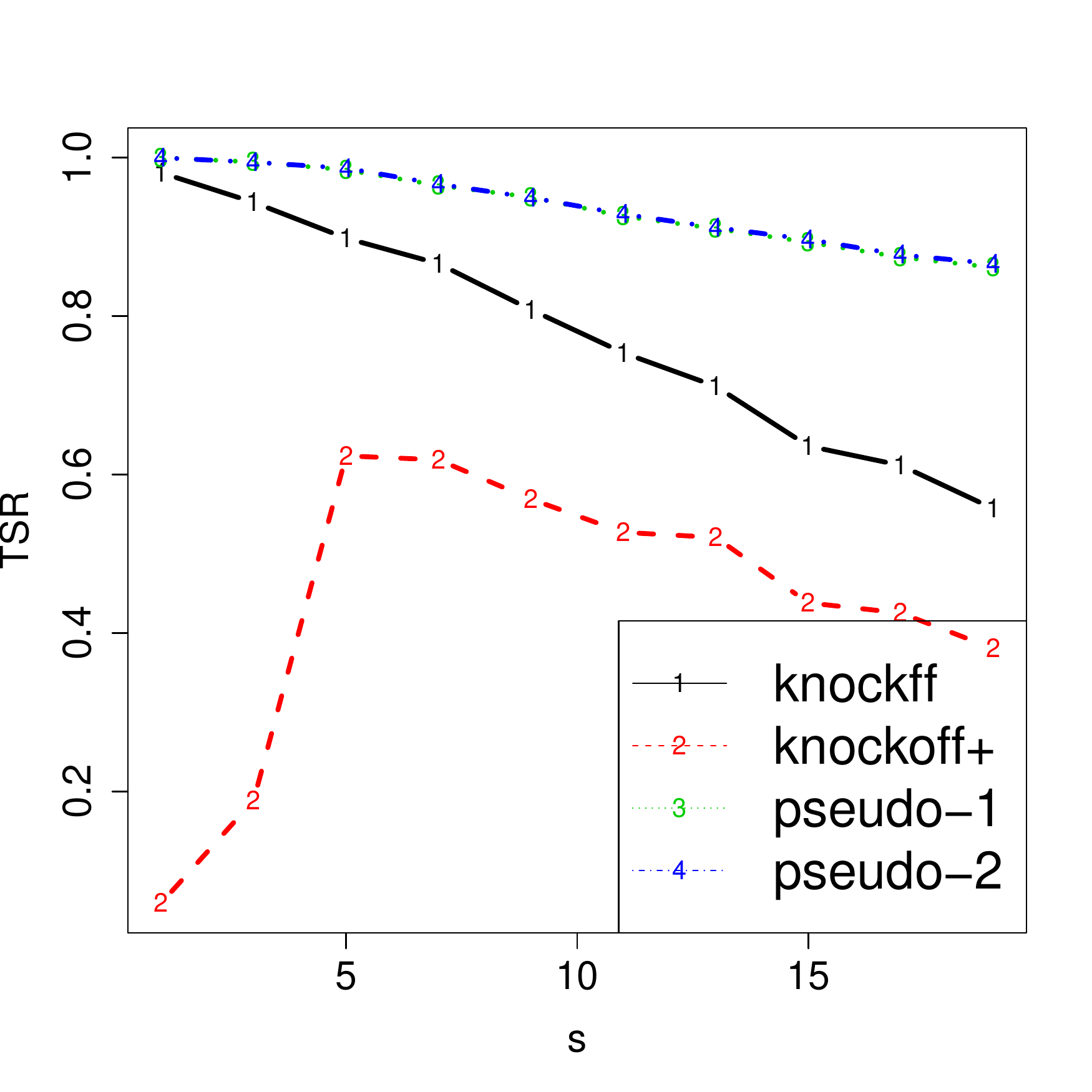}
        \caption{True selection rate vs $s$}
    \end{subfigure}
    \caption{Logistic model; Performances under different number of nonzero coefficients at $\alpha = 0.2$.}
    \label{fig:spa_logistic}
\end{figure}

\newpage
\section{Simulation results for Cox model}
\label{sec:Cox}
In this section, the penalized Cox model is considered. Suppose the data are of the form $(Y_i, X_i, \delta_i), i =1, \ldots, n$, where $Y_i$ is the observed time, $X_i$ is covariate and $\delta_i$ is censoring indicator with $0$ means right-censoring and $1$ indicates failure. The Lasso estimator for Cox model at $\lambda$ is defined as
\begin{equation*}
\widehat{\boldsymbol{\beta}}(\lambda; \mathbb{Y}, \mathbb{X}) =
\arg\min_{\boldsymbol{\beta}\in\mathbb{R}^p}
\left \{ -\frac{1}{n} \sum_{i: \delta_i=1}^n \left [  X_i^\T \boldsymbol{\beta} - \log \sum_{j:Y_j\geq Y_i} \exp(X_j^\T \boldsymbol{\beta}) \right]
+\lambda \sum_{j = 1}^p  |\beta_j| \right \}.
\end{equation*}

In the simulation, survival time $T_i$ follows a Weibull distribution with shape parameter 1 and scale $(0.01)\exp(X_i^T\boldsymbol{\beta}_0)$. Censoring time $C_i$ is exponential distributed with mean $1000$.
Similar to simulation study of penalized regression, we consider four settings: different predictor dimensions, correlation magnitudes, signal amplitudes, and number of nonzero coefficients. The censoring percentage are around $10\%$ to $35\%$ across different settings.

Simulation results are presented in Figure  \ref{fig:dim_cox}, \ref{fig:cor_cox}, \ref{fig:amp_cox}, and \ref{fig:spa_cox}. A similar pattern as seen for penalized regression can be observed from those figures. In terms of TSR, the pseudo-variable methods and the knockoff method have better performances than the knockoff+, which is too conservative in some cases. Then for FSR, only the knockoff method exceeds the target error rate significantly.

Similar to implementation of knockoff and knockoff+ method for penalized logistic model,  $create\_equicorrelated$ function is called first to construct knockoff variables, and \emph{glmnet} function is called with \emph{family = cox} to get when the original variables and knockoff enter the solution path. Signed maximum statistics are used for knockoff and knockoff+ method.

\begin{figure}[!h]
    \begin{subfigure}{.5\textwidth}
        \centering
        \includegraphics[scale=0.32]{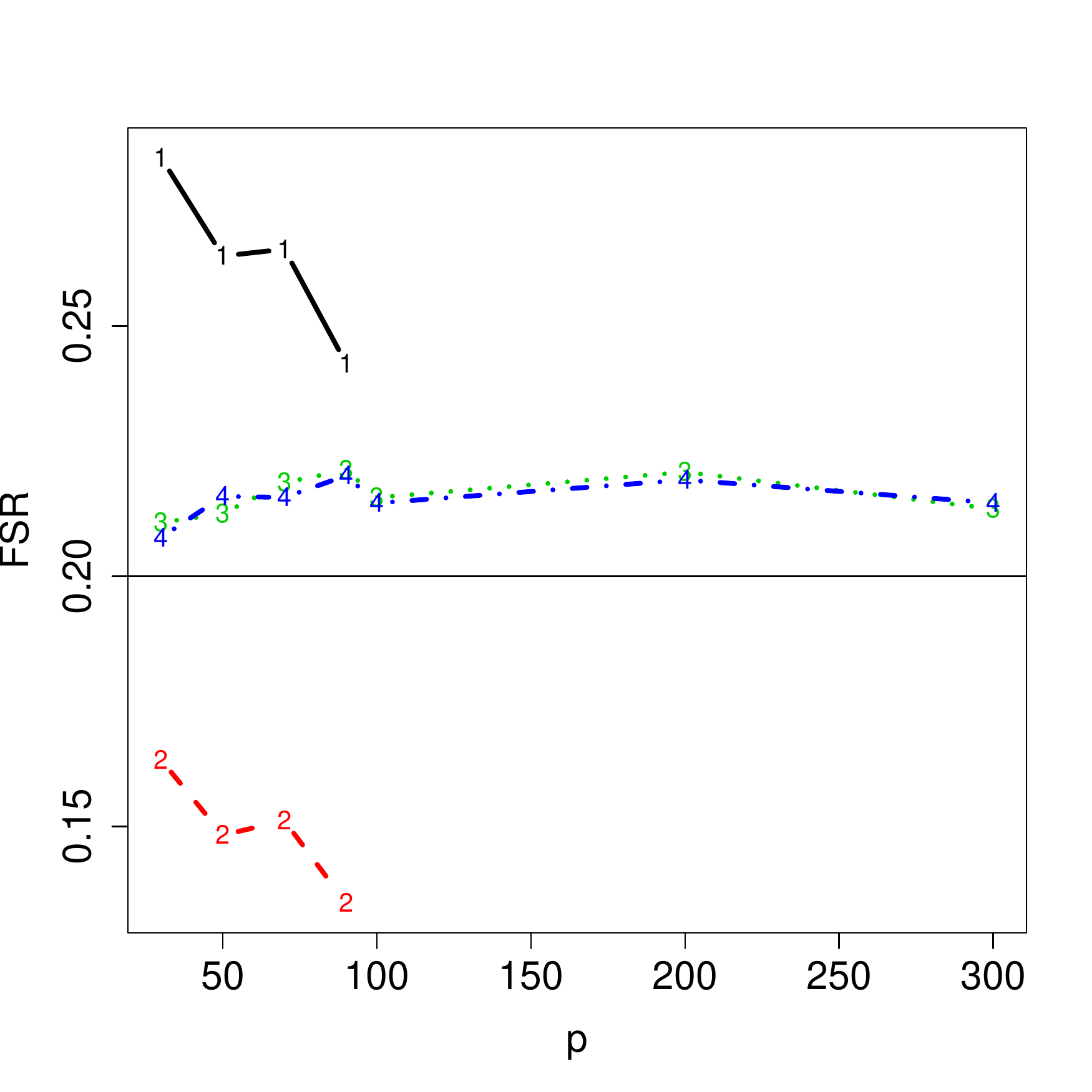}
        \caption{False selection rate vs $p$}
    \end{subfigure}%
    \begin{subfigure}{.5\textwidth}
        \centering
        \includegraphics[scale=0.32]{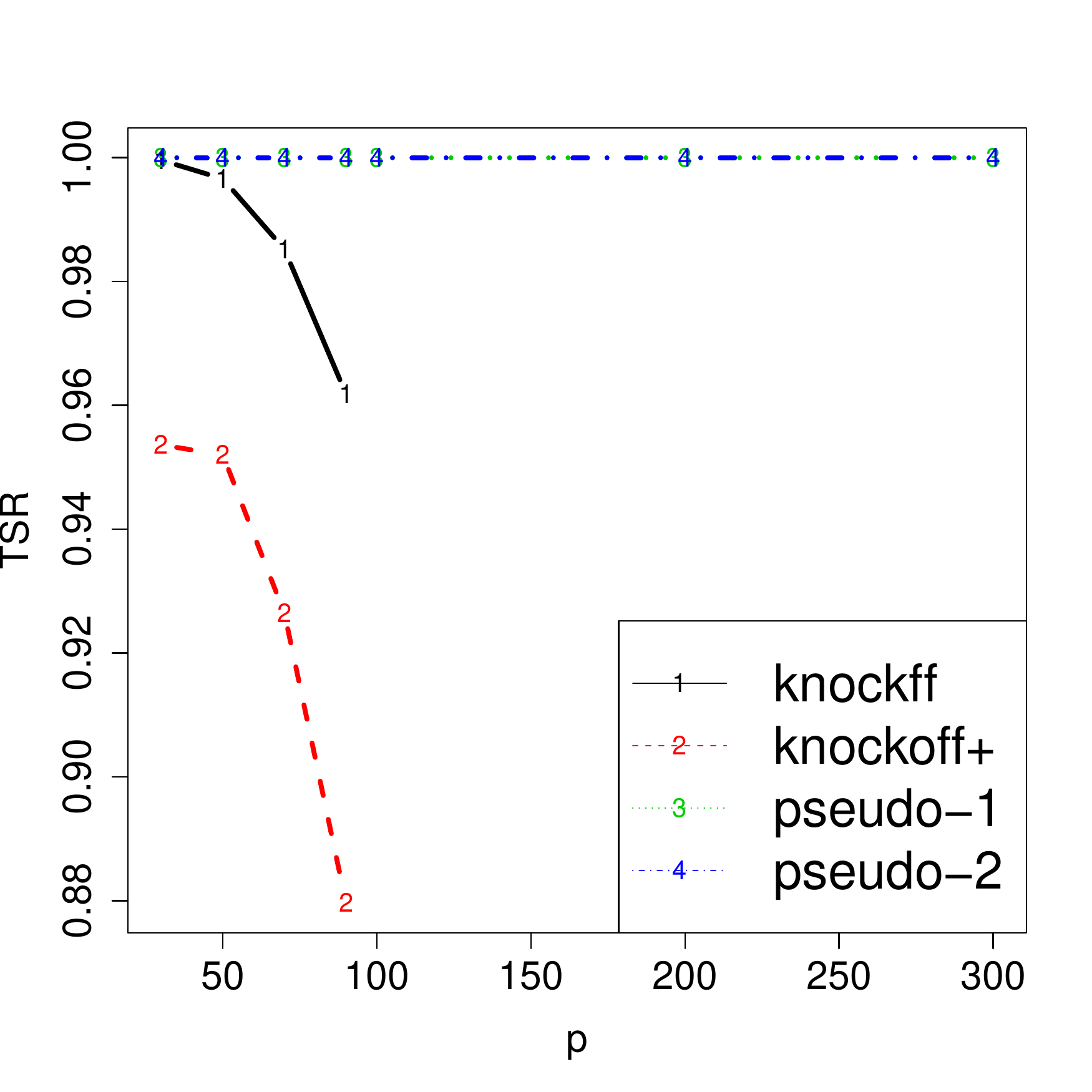}
        \caption{True selection rate vs $p$}
    \end{subfigure}
    \caption{Cox model; Performances under different dimensions at $\alpha = 0.2$. Knockff and Knockoff+ method only work if $n > 2p$.}
    \label{fig:dim_cox}
\end{figure}

\begin{figure}[!h]
    \begin{subfigure}{.5\textwidth}
        \centering
        \includegraphics[scale=0.32]{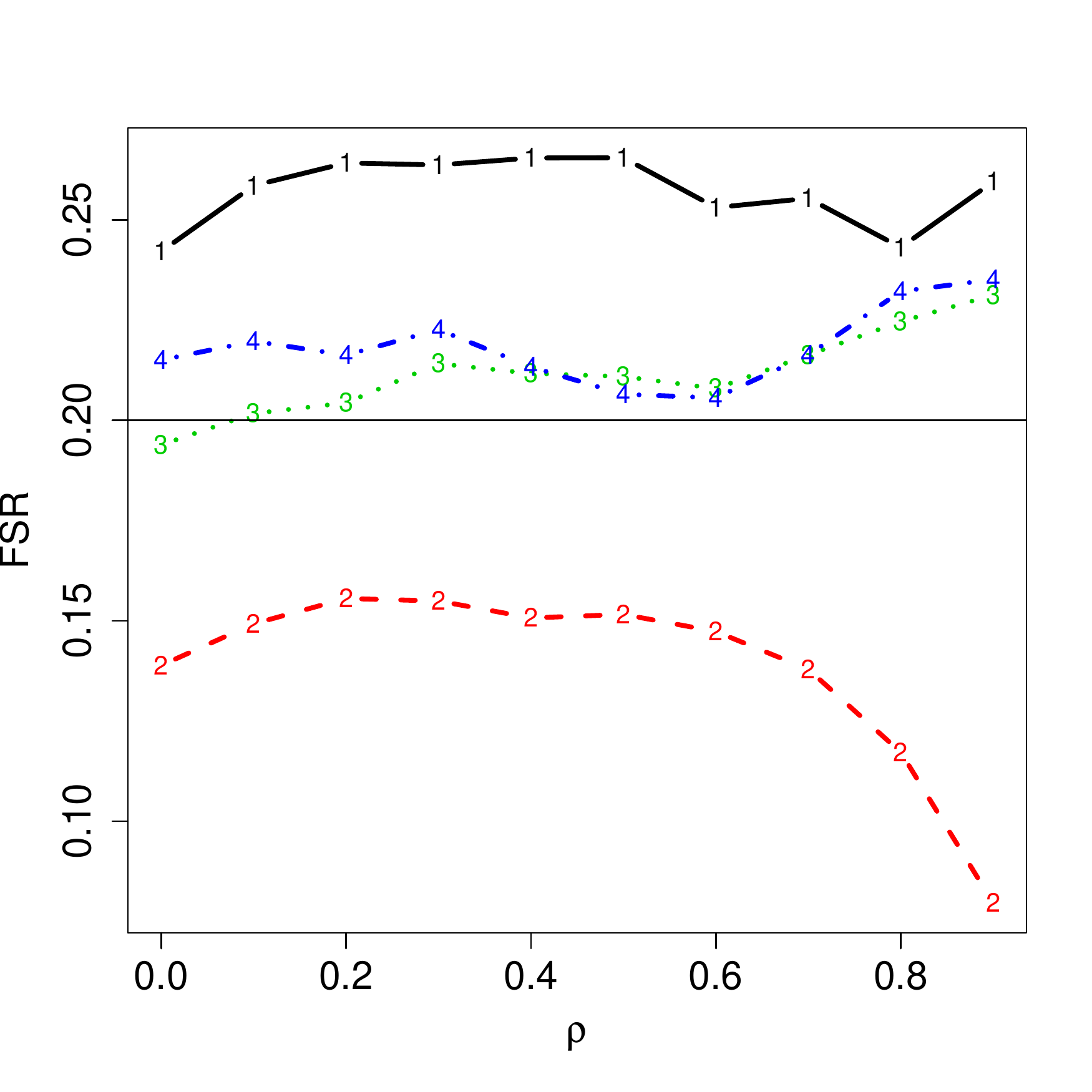}
        \caption{False selection rate vs $\rho$}
    \end{subfigure}%
    \begin{subfigure}{.5\textwidth}
        \centering
        \includegraphics[scale=0.32]{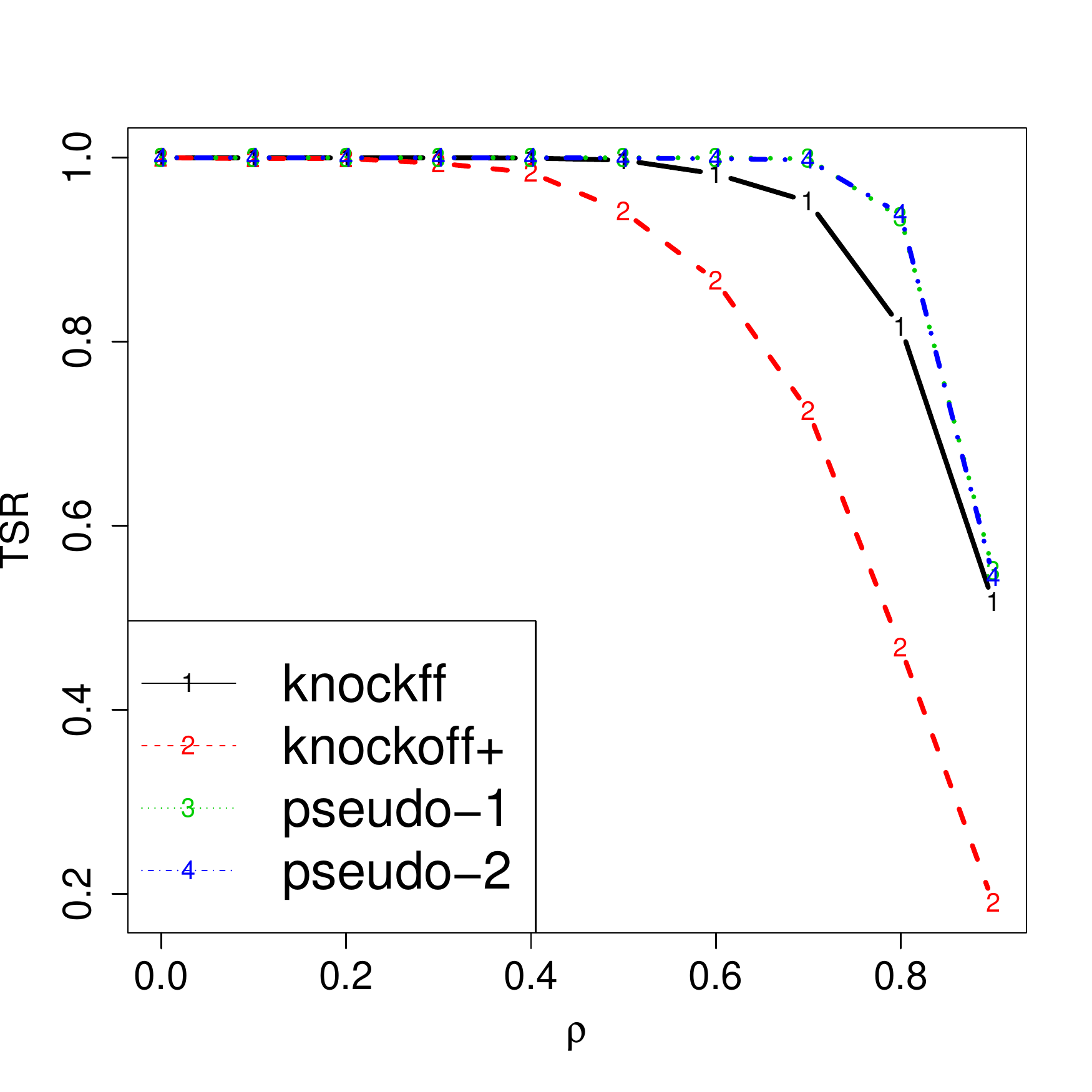}
        \caption{True selection rate vs $\rho$}
    \end{subfigure}
    \caption{Cox model; Performances under different correlations at $\alpha = 0.2$.}
    \label{fig:cor_cox}
\end{figure}

\begin{figure}[!h]
    \begin{subfigure}{.5\textwidth}
        \centering
        \includegraphics[scale=0.35]{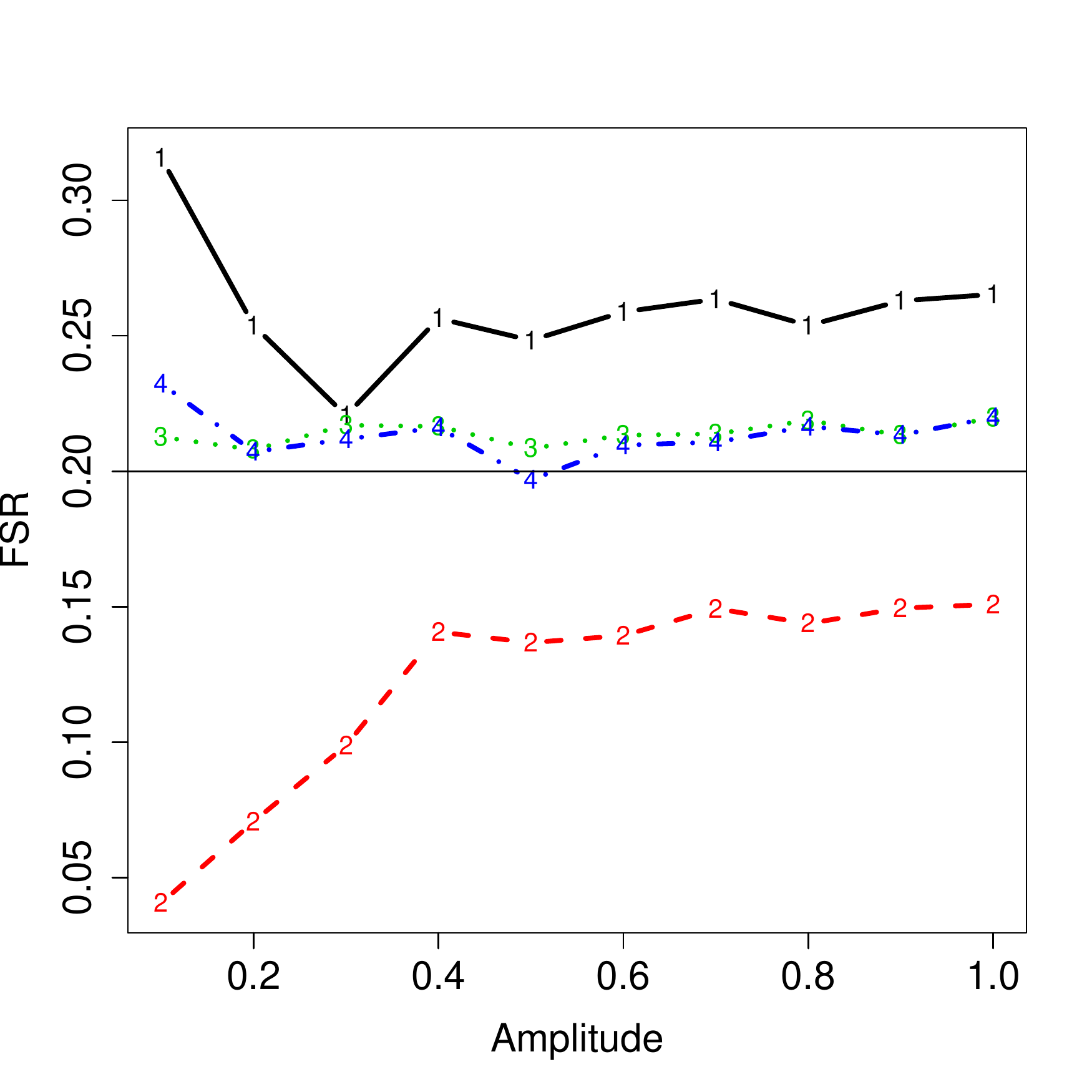}
        \caption{False selection rate vs $A$}
    \end{subfigure}%
    \begin{subfigure}{.5\textwidth}
        \centering
        \includegraphics[scale=0.35]{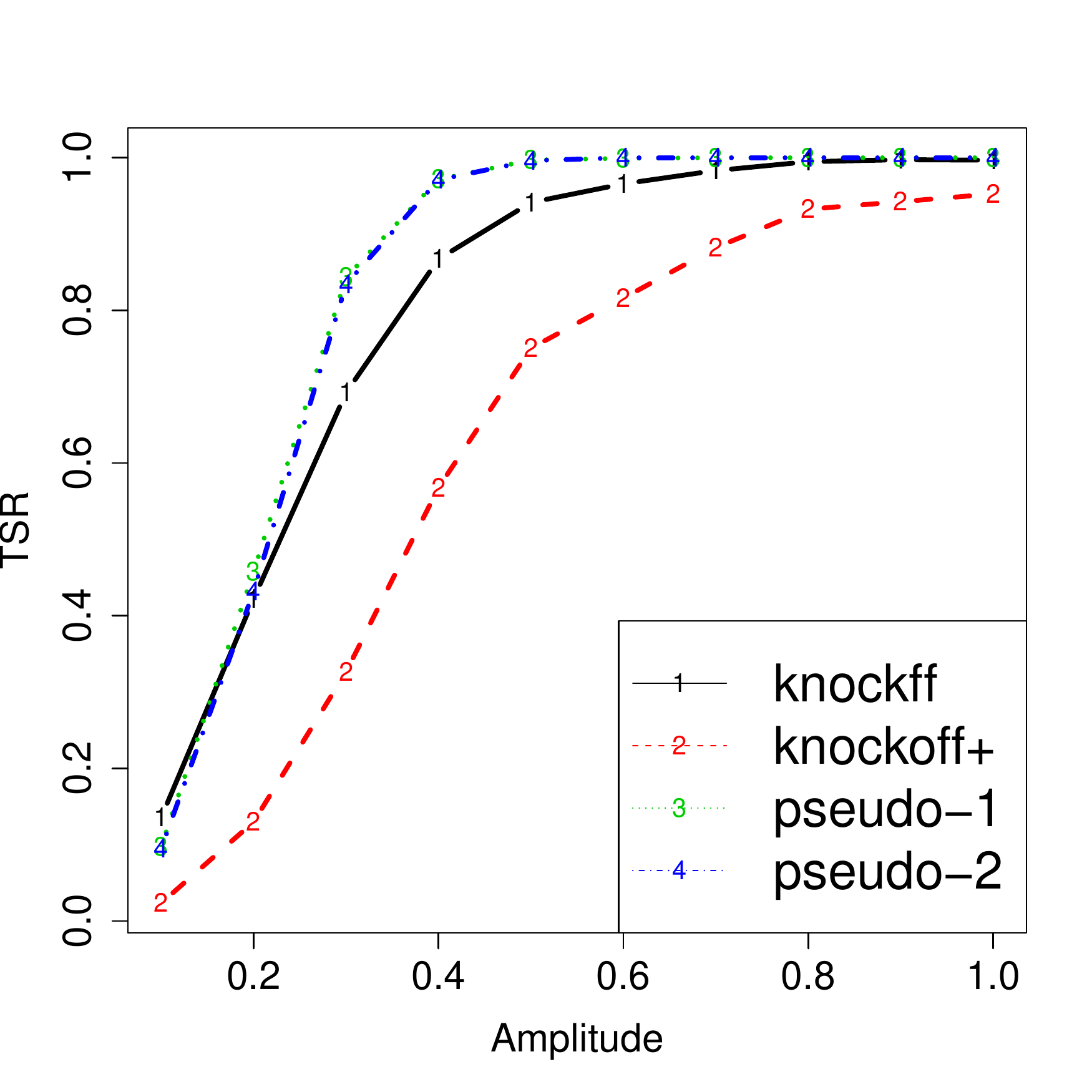}
        \caption{True selection rate vs $A$}
    \end{subfigure}
    \caption{Cox model; Performances under different coefficient amplitude at $\alpha = 0.2$.}
    \label{fig:amp_cox}
\end{figure}

\begin{figure}[!h]
    \begin{subfigure}{.5\textwidth}
        \centering
        \includegraphics[scale=0.35]{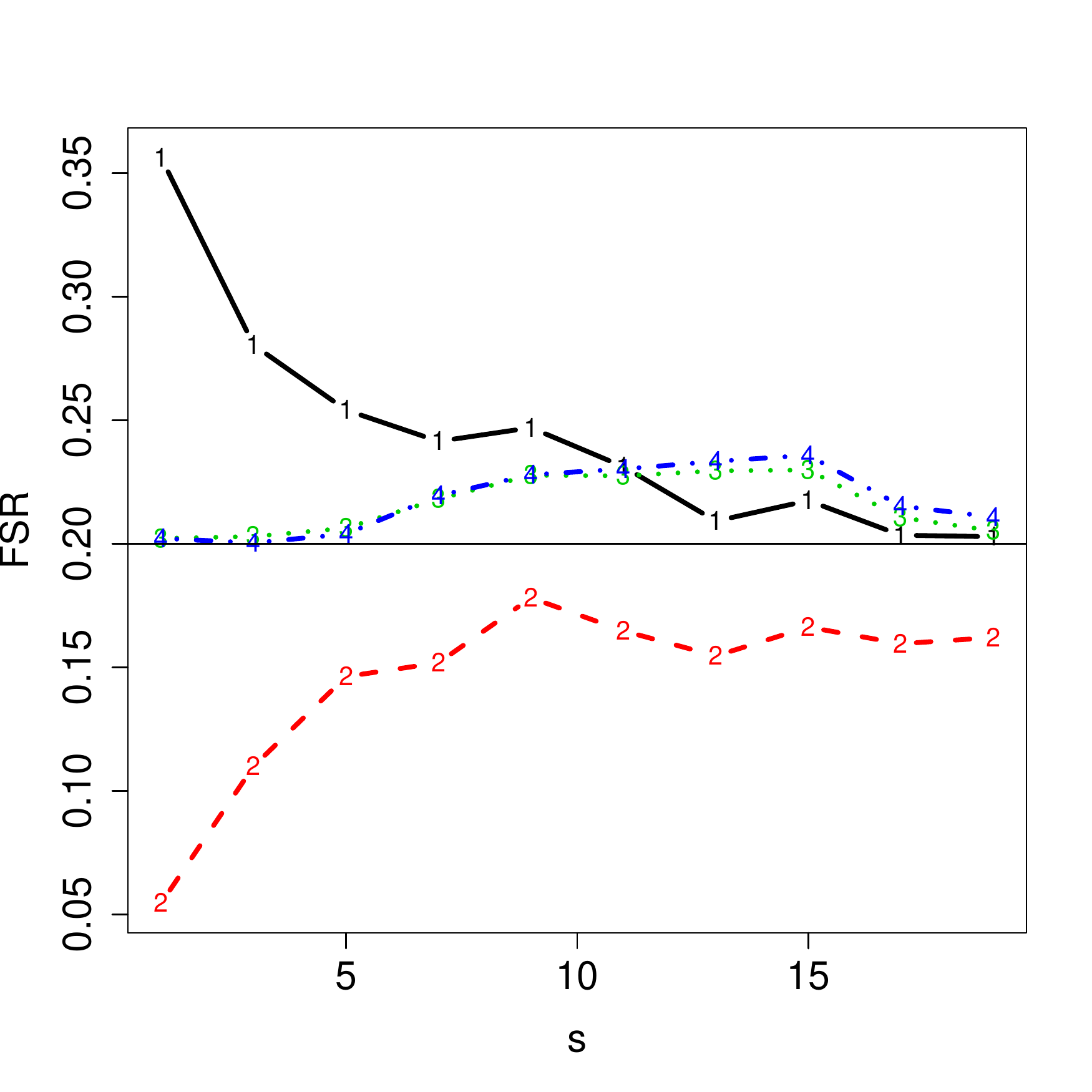}
        \caption{False selection rate vs $s$}
    \end{subfigure}%
    \begin{subfigure}{.5\textwidth}
        \centering
        \includegraphics[scale=0.35]{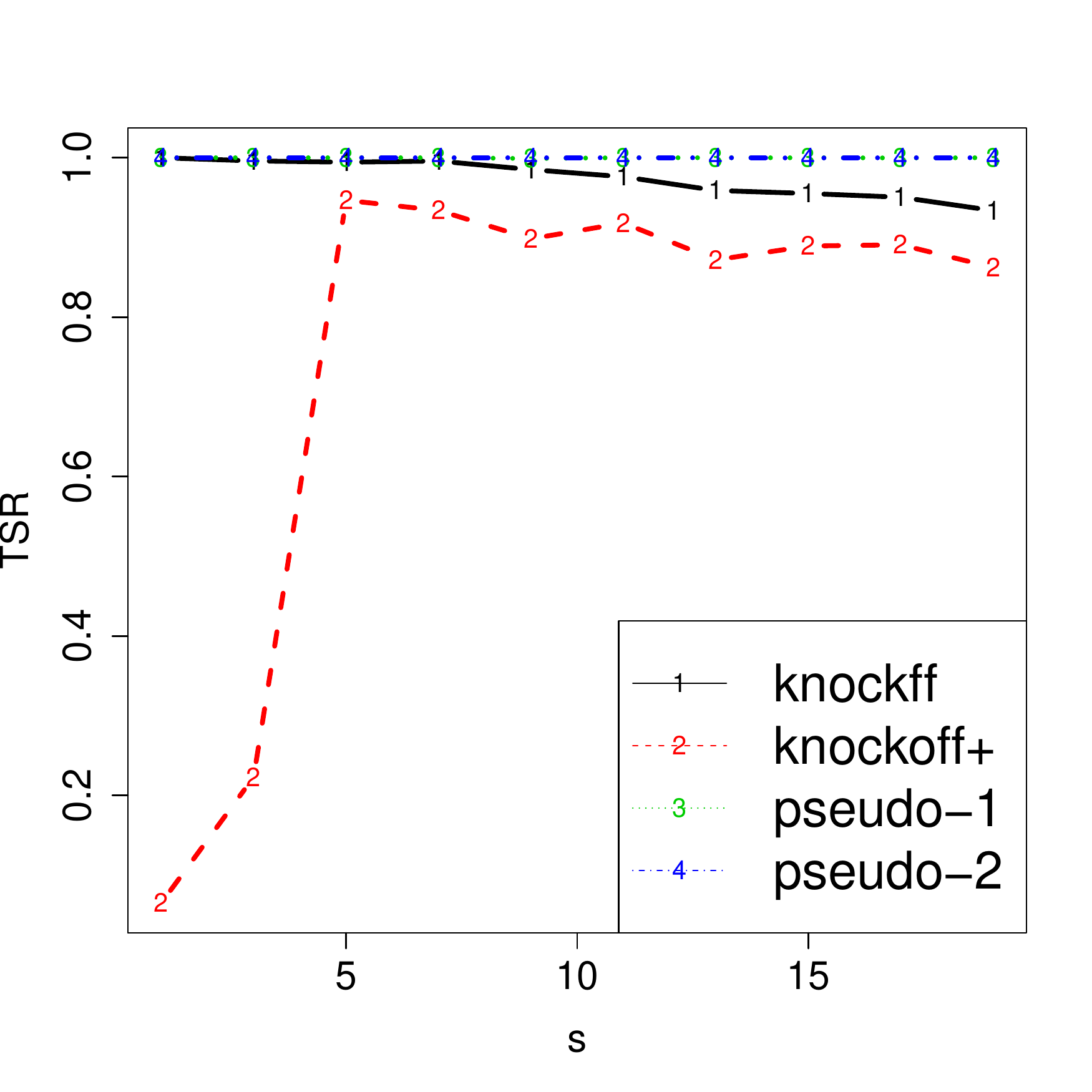}
        \caption{True selection rate vs $s$}
    \end{subfigure}
    \caption{Cox model; Performances under different number of nonzero coefficients at $\alpha = 0.2$.}
    \label{fig:spa_cox}
\end{figure}
\newpage

\section{Pseudo-variables algorithm for screening}
In this section, we introduce a general algorithm for screening based on pseudo-variables. The intuition behind this algorithm is that the pseudo-variables can be used to assess the tuning parameter selected. Therefore, one can choose a tuning parameter that controls the percentage of pseudo-variables in the selected model.  The Lasso estimator is defined as follows

\begin{equation}
\widehat{\boldsymbol{\beta}}(\lambda_n; \mathbb{Y}, \mathbb{X}) =
\arg\min_{\boldsymbol{\beta}\in\mathbb{R}^p}
\left \{\frac{1}{2} ||\mathbb{Y} - \mathbb{X}\boldsymbol{\beta}||^2
+\lambda_n \sum_{j = 1}^p  |\beta_j| \right \},
\end{equation}
where $\lambda_n$ is the tuning parameter. The detailed procedure for finding the tuning parameter is summarized as follows

\begin{itemize}
    \item Step 0: Fix a sequence of lambda $\lambda_{(1)} < \lambda_{(2)} < \cdots < \lambda_{(m)}  $
    \item Step 1:  Generate $\mathbb{X}_{\mathrm{pseudo}}$ by direct permuting the rows of $\mathbb{X}$
    \item Step 2: Fit Lasso using $\mathbb{Y}$ and $\mathbb{X}_{\mathrm{all}} = (\mathbb{X}, \mathbb{X}_{\mathrm{pseudo}})$,  calculate $\widehat{A}_{\lambda_{(i)}}^{all} = \{j: \widehat{\boldsymbol{\beta}}_j(\lambda_{(i)}; \mathbb{Y}, \mathbb{X}_{\mathrm{all}}) \neq 0 \}$, then
    \begin{align*}
    \widehat{p}_{\lambda_{(i)}} = \frac{ \# \{j \in \widehat{A}^{all}_{\lambda_{(i)}} ~\text{and $j$-th variable is pseudo}\} } { \max\left [\# \{j \in \widehat{A}^{all}_{\lambda_{(i)}} ~\text{and $j$-th variable is not pseudo}\}, 1  \right ]}
    \end{align*}
    \item Step 3: Repeat Steps 1 and 2 $B$ times, and calculate $\widehat{p}_{\lambda_{(i)}} = 1/B \sum_{b=1}^B \widehat{p}^b_{\lambda_{(i)}} $
    \item Step 4: Select tuning parameter
    $\widehat{\lambda}_n = \min \{\lambda_{(i)}:  \widehat{p}_{\lambda_{(i)}} \leq
    \alpha_n \} $, where $\alpha_n$ is a constant.
\end{itemize}
After obtaining $\widehat{\lambda}_n$, one can then fit Lasso at $\widehat{\lambda}_n$  with $\mathbb{Y}$ and $X_{\mathrm{all}} = (\mathbb{X}, \mathbb{X}_{pseudo})$. And then select those variables in the active set by excluding pseudo-variables. In the simulation studies, we fix $\alpha_n = 0.2$ and $B = 20$.

\subsection{Theoretical properties}
In this subsection, we prove that the algorithm above leads to consistent variable selection under certain conditions. Denote $\boldsymbol{\epsilon} = (\epsilon_1, \ldots, \epsilon_n)^\T.$ Denote $r_n \prec s_n$ if  $r_n = o(s_n)$,  and $r_n \succ s_n$ if $s_n = o(r_n)$.
To prove the asymptotic consistency,  we assume the following three conditions:
\begin{itemize}
    \item[(A1)] There exists positive sequences $\{r_n\}$ and $\{s_n\}$ such that Lasso method with tuning parameter $\lambda_n$ is selection consistent whenever $r_n \prec \lambda_n \prec s_n$.
\end{itemize}

Use $\lambda_n^*$ to denote such a tuning parameter with  $r_n \prec \lambda_n^* \prec s_n$. For such $\lambda_n^*$, then
\begin{align}
\label{eq1}
P(\widehat{A}_n(\lambda_n^*) = A_0) \geq 1 - k_n,
\end{align} for some $k_n \rightarrow 0$.

\begin{itemize}

    \item  [(A2)]
    For any $\tau >0$, there exists positive constant $c_0(\tau)$ and $n_0(\tau)$ such that, for $n> n_0(\tau)$,
    $
    P\left ( \left \{\inf_{\tau r_n \leq \lambda \leq \lambda_n^*} \mathds{1} \left( \widehat{A}_n(\lambda) = A_0 \right ) \right \} = 1 \right) \geq 1 - c_0(\tau),
    $
    where $c_0(\tau)$ converges to zero as $\tau \rightarrow \infty $.
\end{itemize}

\begin{itemize}

    \item [(A3)] For any pseudo variable $j$, and any $\tau >0$, there exists positive constant $c_1(\tau)$ and $n_1(\tau)$ such that, for $n > n_1(\tau)$,
    $
    P\left(\left\{\inf_{\lambda < \tau r_n} \mathds{1} \left (j \in \widehat{A}_n(\lambda_n) \right ) \right\} = 1  | \mathbb{X}, \boldsymbol{\epsilon} \right) \geq c_1(\tau) ,
    $
    almost surely.
\end{itemize}

 For assumption (A3), when conditioning on $\mathbb{X}$ and $\boldsymbol{\epsilon}$, the randomness come from $\mathbb{X}_{\mathrm{pseudo}}$. Assumptions (A1) and (A2) have been verified in \cite{sun2013consistent}. By similar argument as \cite{sun2013consistent}, it is easy to verify assumption (A3) if assuming a strong condition that $\mathbb{X}_{\mathrm{pseudo}}$ are orthogonal to $\mathbb{X}$. Under above assumptions, we have the following theorem

\begin{theorem}
    \label{thm:asycons}
    Under Assumptions 1, 2 and 3, the tuning parameter $\widehat{\lambda}_n$ selected in the above algorithm leads to consistent variable selection, i.e.,
    $\lim_{n\rightarrow \infty} \lim_{B \rightarrow \infty} P\left ( \widehat{A}_n(\widehat{\lambda}_n) = A_0 \right ) = 1$ provided $1 \succ \alpha_n \succ k_n$, where $k_n$ is defined in Eq. \ref{eq1}.
\end{theorem}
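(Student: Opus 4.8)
The plan is to show that the data-driven parameter $\widehat{\lambda}_n$ falls, with probability tending to one, inside the consistency window $[\tau r_n,\lambda_n^*]$ on which the Lasso fit to the augmented design $\mathbb{X}_{\mathrm{all}} = (\mathbb{X}, \mathbb{X}_{\mathrm{pseudo}})$ recovers exactly $A_0$; once $\widehat{\lambda}_n$ is trapped there, assumption (A2) gives the conclusion immediately. First I would dispose of the inner limit. For fixed $n$ the replicate ratios $\widehat{p}^{1}_{\lambda_{(i)}},\ldots,\widehat{p}^{B}_{\lambda_{(i)}}$ are i.i.d.\ functions of the independently redrawn pseudo-variables, so by the strong law $\widehat{p}_{\lambda_{(i)}}\to\mu_{\lambda_{(i)}}:=E[\widehat{p}^{1}_{\lambda_{(i)}}\mid\mathbb{X},\mathbb{Y}]$ almost surely, and hence $\widehat{\lambda}_n\to\min\{\lambda_{(i)}:\mu_{\lambda_{(i)}}\le\alpha_n\}$. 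It thus suffices to control the profile $\lambda\mapsto\mu_\lambda$ and show its threshold lies in the window with probability $\to 1$ as $n\to\infty$.

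For the lower endpoint, fix $\tau>0$ and take any $\lambda<\tau r_n$. Writing $V_\lambda$ and $R_\lambda$ for the numbers of pseudo and real variables in the augmented active set, (A3) gives $P(j\in\widehat{A}_n(\lambda)\mid\mathbb{X},\boldsymbol{\epsilon})\ge c_1(\tau)$ for each of the $p$ pseudo indices, so $E[V_\lambda\mid\mathbb{X},\boldsymbol{\epsilon}]\ge p\,c_1(\tau)$; since $\max(R_\lambda,1)\le p$ one has $V_\lambda/\max(R_\lambda,1)\ge V_\lambda/p$, whence $\mu_\lambda\ge c_1(\tau)$ almost surely. Because $\alpha_n\prec 1$ forces $\alpha_n\to 0$, for all large $n$ we have $\alpha_n<c_1(\tau)$ and therefore $\mu_\lambda>\alpha_n$ for every $\lambda<\tau r_n$; no such grid point satisfies the selection rule, so $\widehat{\lambda}_n\ge\tau r_n$.

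For the upper endpoint I would evaluate the profile at $\lambda_n^*$. On the event that the augmented fit selects exactly $A_0$ there, which by \eqref{eq1} has probability at least $1-k_n$, one has $V_{\lambda_n^*}=0$ and the ratio vanishes; on the complement the ratio is bounded by a constant. Taking expectations yields $E[\widehat{p}^{1}_{\lambda_n^*}]=O(k_n)$, and a Markov bound gives $P(\mu_{\lambda_n^*}>\alpha_n)\le E[\widehat{p}^{1}_{\lambda_n^*}]/\alpha_n=O(k_n/\alpha_n)=o(1)$, precisely because $\alpha_n\succ k_n$. Hence with probability $1-o(1)$ the grid point $\lambda_n^*$ passes the selection rule, forcing $\widehat{\lambda}_n\le\lambda_n^*$. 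Combining the two endpoints, with probability $1-o(1)$ we have $\widehat{\lambda}_n\in[\tau r_n,\lambda_n^*]$; on the (A2) event $E_n(\tau)=\{\widehat{A}_n(\lambda)=A_0\ \forall\,\lambda\in[\tau r_n,\lambda_n^*]\}$, which has probability at least $1-c_0(\tau)$, this gives $\widehat{A}_n(\widehat{\lambda}_n)=A_0$. Thus $\limsup_n\big[1-\lim_B P(\widehat{A}_n(\widehat{\lambda}_n)=A_0)\big]\le c_0(\tau)$ for every $\tau$, and letting $\tau\to\infty$ so that $c_0(\tau)\to 0$ yields the claim.

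I expect the crux to be the upper-bound step, specifically bounding the magnitude of the pseudo-ratio on the rare event that consistency fails at $\lambda_n^*$ by a \emph{constant} rather than by a quantity growing with $p$; this is exactly what makes the mild rate $\alpha_n\succ k_n$ (rather than something stronger) sufficient. A secondary subtlety is the bookkeeping between the original-design statements (A1)--(A2) and the augmented-design Lasso actually run in the algorithm: one must argue that adjoining the genuinely null pseudo-variables preserves the selection-consistency window, which is where the near-orthogonality of $\mathbb{X}_{\mathrm{pseudo}}$ to $\mathbb{X}$ noted after (A3) enters.
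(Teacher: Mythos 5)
Your proof follows essentially the same route as the paper's: trap $\widehat{\lambda}_n$ in $[\tau r_n,\lambda_n^*]$ with probability tending to one by a Markov bound at $\lambda_n^*$ (using (A1) and $\alpha_n\succ k_n$) and a lower bound on the pseudo-selection ratio below $\tau r_n$ (using (A3), the strong law in $B$, and $\alpha_n\prec 1$), then invoke (A2) and let $\tau\to\infty$. The only substantive difference is that by summing (A3) over all $p$ pseudo-variables you get the dimension-free bound $\mu_\lambda\ge c_1(\tau)$ where the paper, tracking a single pseudo-variable, only obtains $c_1(\tau)/p$ --- a slight sharpening --- and the two loose ends you flag (bounding the ratio on the bad event at $\lambda_n^*$, and the original-versus-augmented design bookkeeping) are passed over without comment in the paper's proof as well.
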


\subsection{Proof of Theorem \ref{thm:asycons}}
\begin{proof}
    Define $S_1 = \{\lambda: \lambda > \lambda_n^*\}$ and, $S_2 = \{\lambda: \tau r_n > \lambda \}$. First, we show that for $\lim_{n \rightarrow \infty}P(\widehat{\lambda}_n \in S_1 \cup S_2) \rightarrow 0$.
    For $S_1$, by definition of $\widehat{\lambda}_n$ and $\lambda_n^*$, we have
    \begin{align*}
    P(\widehat{\lambda}_n \leq \lambda_n^*) & \geq P(\widehat{p}_{\lambda^*_n} \leq \alpha_n) \\
    &= 1 - P(\widehat{p}_{\lambda^*_n} > \alpha_n) \\
    & \geq  1 - \frac{E(\widehat{p}_{\lambda^*_n})}{\alpha_n} \\
    & = 1 - \frac{E(\widehat{p}^b_{\lambda^*_n})}{\alpha_n} \\
    & \geq 1 - \frac{k_n}{\alpha_n} \rightarrow 1,
    \end{align*}
    where the last inequality holds because of $E(\widehat{p}_{\lambda^*_n}) \leq k_n$, which is implied by assumption A1.
    Therefore $\lim_{n \rightarrow \infty}P(\widehat{\lambda}_n \in S_1) = 0$.

    Then for $S_2$, by the strong law of large numbers
    $$\lim_{B \rightarrow \infty} \frac{1}{B}\sum_{b=1}^{B} \inf_{\lambda \in S_2}\widehat{p}_{\lambda}^b = E(\inf_{\lambda \in S_2}\widehat{p}_{\lambda}^b|\mathbb{X}, \boldsymbol{\epsilon}).$$ And by Assumption A3, we know that for any pseudo variable $j$, \begin{align*}
    P\left(\left\{\inf_{\lambda < \tau r_n} \mathds{1}\left (j \in \widehat{A}_n(\lambda) \right ) \right\} = 1  | \mathbb{X}, \boldsymbol{\epsilon} \right) \geq c_1(\tau).
    \end{align*}
    This implies
    $$P\left (E(\inf_{\lambda \in S_2}\widehat{p}_{\lambda}^b|\mathbb{X}, \boldsymbol{\epsilon}) \geq c_1(\tau) / p\right ) = 1.$$
    And since $\inf_{\lambda \in S_2}\widehat{p}_{\lambda} \geq B^{-1}\sum_{b=1}^{B} \inf_{\lambda \in S_2}\widehat{p}_{\lambda}^b$, $$P\left(\inf_{\lambda \in S_2} \widehat{p}_{\lambda} \geq   c_1(\tau) /p \right)
    = 1 - o(1).
    $$
    This implies $\lim_{n \rightarrow \infty}P\left(\inf_{\lambda \in S_2} \widehat{p}_{\lambda} < \alpha_n \right) = 0$. Therefore, $\lim_{n \rightarrow \infty} P(\widehat{\lambda}_n \in S_2)  = 0$.  So $\lim_{n \rightarrow \infty}P(\widehat{\lambda}_n \in S_1 \cup S_2) = 0$, i.e., $\lim_{n \rightarrow \infty} P(\tau r_n \leq \widehat{\lambda}_n \leq \lambda_n^*) = 1$.
    Then,
    \begin{align*}
    P\left (\widehat{A}_n(\widehat{\lambda}_n) = A_0 \right ) & \geq P\left (\widehat{A}_n(\widehat{\lambda}_n) = A_0, \tau r_n \leq \widehat{\lambda}_n \leq \lambda_n^* \right ) \\
    & \geq  P\left ( \left \{\inf_{\tau r_n \leq \lambda \leq \lambda_n^*} \mathds{1}(\widehat{A}_n(\lambda) = A_0) \right \} = 1 \right) + P(\tau r_n \leq \widehat{\lambda}_n \leq \lambda_n^*) - 1.
    \end{align*}
Therefore by assumption A2, $\lim_{n\rightarrow \infty} \lim_{B\rightarrow \infty}P\left (\widehat{A}_n(\widehat{\lambda}_n) = A_0 \right ) \geq 1 - c_0(\tau)$. It holds for any $\tau$. Then let $\tau \rightarrow \infty$, we have $\lim_{n\rightarrow \infty} \lim_{B\rightarrow \infty}P\left (\widehat{A}_n(\widehat{\lambda}_n) = A_0 \right ) = 1$.
\end{proof}

\end{document}